\let\bbordermatrix\bordermatrix
\patchcmd{\bbordermatrix}{8.75}{4.75}{}{}
\patchcmd{\bbordermatrix}{\left(}{\left[}{}{}
\patchcmd{\bbordermatrix}{\right)}{\right]}{}{}
\numberwithin{equation}{section}
\newtheorem{theorem}{Theorem}[section]
\newtheorem{lemma}[theorem]{Lemma}
\newtheorem{definition}[theorem]{Definition}
\newtheorem{remark}[theorem]{Remark}
\newtheorem{assumption}[theorem]{Assumption}
\newcommand{\sr}{\stackrel}
\newcommand{\tri}{\sr{\triangle}{=}}
\def\rightharpoonupfill@{\arrowfill@\relbar\relbar\rightharpoonup}
\newcommand{\overrightharpoonup}{%
   \mathpalette{\overarrow@\rightharpoonupfill@}}
\newcommand{\be}{\begin{equation}}
\newcommand{\ee}{\end{equation}}
\newcommand{\bea}{\begin{eqnarray}}
\newcommand{\eea}{\end{eqnarray}}
\newcommand{\bes}{\begin{eqnarray*}}
\newcommand{\ees}{\end{eqnarray*}}
\newcommand{\bi}{\begin{itemize}}
\newcommand{\ei}{\end{itemize}}
\newcommand{\ben}{\begin{enumerate}}
\newcommand{\een}{\end{enumerate}}
\newcommand{\bp}{\begin{problem}}
\newcommand{\ep}{\end{problem}}
\newcommand{\hso}{\hspace{.1in}}
\newcommand{\hst}{\hspace{.2in}}
\newcommand{\noi}{\noindent}
\begin{document}


\title{Information Nonanticipative Rate Distortion 
\\Function and its Applications}
%
%
\author{\IEEEauthorblockN{ Photios~A.~Stavrou\IEEEauthorrefmark{1},  Christos K. Kourtellaris\IEEEauthorrefmark{2}, and Charalambos~D.~Charalambous\IEEEauthorrefmark{1}}
\thanks{This work was financially supported by a medium size University of Cyprus grant entitled ``DIMITRIS"  and by QNRF, a member of Qatar Foundation, under the project NPRP 6-784-2-329. Part of the material in this paper was presented in IEEE International Symposium on Information Theory, 2014 \cite{stavrou-kourtellaris-charalambous2014isit} and in book series ``Lecture Notes in Control and Information Sciences" \cite{stavrou-kourtellaris-charalambous2015}.}
\thanks{\IEEEauthorblockA{\IEEEauthorrefmark{1} Department of Electrical and Computer Engineering, University of Cyprus, Nicosia, Cyprus}
\IEEEauthorblockA{\IEEEauthorrefmark{2} Department of Electrical
and Computer Engineering, Texas A\&{M} University at Qatar} 
{\it Email:\{stavrou.fotios,chadcha\}@ucy.ac.cy, c.kourtellaris@qatar.tamu.edu}}}
%


\maketitle

\begin{abstract}
This paper investigates applications of nonanticipative Rate Distortion Function (RDF) in a) zero-delay Joint Source-Channel Coding (JSCC)  design based on average and excess distortion probability, b) in bounding the Optimal Performance Theoretically Attainable (OPTA) by noncausal and causal codes, and computing the Rate Loss (RL) of zero-delay and causal codes with respect to noncausal codes. These applications are described using two running examples, the Binary Symmetric Markov Source with parameter $p$, (BSMS($p$)) and the multidimensional partially observed Gaussian-Markov source.\\
For the multidimensional Gaussian-Markov source with square error distortion, the solution of the nonanticipative RDF is derived, its operational meaning using JSCC design via a noisy coding theorem  is shown by providing the optimal encoding-decoding scheme over a vector Gaussian channel, and the RL of causal and zero-delay codes with respect to noncausal codes is computed.\\
\noi For the BSMS($p$) with Hamming distortion, the solution of the nonanticipative RDF is derived, the RL of causal codes with respect to noncausal codes is computed, and an uncoded noisy coding theorem based on excess distortion probability is shown. 
\noi The information nonanticipative RDF is shown to be equivalent to the nonanticipatory $\epsilon$-entropy, which corresponds to the classical RDF with an additional causality or nonanticipative condition imposed on the optimal reproduction conditional distribution. 
\end{abstract}

\begin{IEEEkeywords}
Nonanticipative RDF, sources with memory, Joint Source-Channel Coding (JSCC), Binary Symmetric Markov Source (BSMS), multidimensional stationary Gaussian-Markov source, excess distortion probability, bounds.
\end{IEEEkeywords}

\section{Introduction}\label{introduction}
\par In lossy compression source coding with fidelity \cite{berger,han2003}, the sequence of real-valued symbols $x^\infty\tri\{x_0,x_1,\ldots\}$, ${x}_j\in{\cal X}$, $\forall{j}\geq{0}$, generated by a source with distribution $P_{X^\infty}$, is transformed by the encoder into a sequence of symbols, the compressed representation $z^\infty\tri\{z_0,z_1,\ldots\}$ (taking values in a finite alphabet set), which is then transmitted over a noiseless channel. The decoder at the channel output upon observing the compressed representation symbols produces the reproduction sequence $y^\infty\tri\{y_0,y_1,\ldots\}$, ${y}_j\in{\cal Y}$, $\forall{j}\geq{0}$. Such a compression system is called causal \cite{neuhoff1982} if the reproduction symbol $y_n$ of the source symbol $x_n$, depends on the present and past source symbols $\{x_0,\ldots,x_n\}$ but not on the future source symbols $\{x_{n+1},x_{n+2},\ldots\}$, i.e., $y_n\tri{f}_n(x_0,\ldots,x_n)$, for some function $f_n(\cdot),~n\geq{0}$. The Optimal Performance Theoretically Attainable (OPTA) by noncausal codes is given by the Rate Distortion Function (RDF) \cite{berger,han2003}, while that of causal codes is given by the entropy rate of the reproduction sequence  \cite{neuhoff1982}. Zero-delay source codes are a sub-class of causal codes, with the additional constraint that the compressed representation symbol $z_n$, depends on the past and present source symbols $x^n\tri\{x_0,x_1,\ldots,x_n\}$, while the reproduction at the decoder $y_n$ of the present source symbol $x_n$, depends only on the compressed representation $z^n\tri\{z_0,z_1,\ldots,z_n\}$, i.e., $z_n=h_n(\{x_j:~j=0,1,\ldots,n\})$ and $y_n=f_n(\{z_j:~j=0,1,\ldots,n\})$, $\forall{n}\geq{0}$ \cite{ericson1979,gaarder-slepian1982,tatikonda2000,linder-lagosi2001,akyol-viswanatha-rose-ramstad2014}.

\noi Joint Source-Channel Coding (JSCC) design based on nonanticipative processing (i.e., the encoder-channel-decoder mappings at each time instant do not depend on future symbols),   is perhaps, the most efficient zero-delay coding system,  in the sense of optimal performance of matching the  source characteristics to the channel  characteristics, coded or uncoded   \cite{gastpar2003,berger2003,goblick1965}. Two such fascinating examples of matching the RDF of an IID source to the capacity of a memoryless channel are the following.\\
\noi {\it Example-IID-BSS:} The Independent Identically Distributed (IID) Binary Source (BS) with Hamming distortion transmitted uncoded over a symmetric memoryless channel (the distortion and channel parameter are made equal);\\
{\it Example-IID-GS:} The IID Gaussian Source (GS) with average squared-error distortion, transmitted over a memoryless Additive White Gaussian Noise (AWGN) channel without feedback, with the encoder and decoder scaling their inputs \cite{goblick1965}.\\ 
\noi In both examples, symbol-by-symbol transmission is optimal, that is, the optimal code rate is ``1" in terms of achieving average end-to-end distortion. These examples demonstrate the simplicity of the JSCC design often called probabilistic matching, in operating optimally with zero-delay, in complexity, when this is compared to the asymptotic performance of optimally separating the encoder/decoder to the source and channel encoders/decoders which may cause long processing delays. Recently,  {\it Example-IID-BSS} and {\it Example-IID-GS} are revisited in \cite{kostina-verdu2012itw}, using symbol-by-symbol (zero-delay) codes.

\noi In general, very little is known about JSCC design or probabilistic matching based on nonanticipative processing, that of matching the source characteristics to the channel characteristics, as in {\it Example-IID-BSS} and in {\it Example-IID-GS}, because the RDF of sources with memory is generally not known. Moreover, only bounds are known on the OPTA by causal and zero-delay codes \cite{berger2003}. For the latter, these bounds are often introduced to quantify the Rate Loss (RL) due to causality and zero-delay of the coding systems compared to that of the noncausal coding systems. Such bounds are elaborated recently in \cite{derpich-ostergaard2012}, for Gaussian sources with square-error distortion function. 

\noi In this paper, we consider the information nonanticipative RDF \cite{charalambous-stavrou-ahmed2014a,charalambous-stavrou2013bb}, a variant of the classical information RDF, and we describe its applications in 
\begin{description}
\item[\bf{(a)}] zero-delay JSCC design or probabilistic matching of the source to the channel, with respect to nonanticipative processing (zero delay or symbol-by-symbol codes), based on average and excess distortion probability;
\item[\bf{(b)}] bounding the OPTA by noncausal and causal codes, and computing the RL of zero-delay and causal codes with respect to noncausal codes.
\end{description}
\noi For {\bf(a)}, the design of encoder and decoder for a given $\{$source, channel$\}$ pair with respect to a given $\{$distortion function, transmission cost function$\}$ pair is constructive. Given a source, first we compute the nonanticipative RDF. Second, we realize the nonanticipative RDF (see Fig.~\ref{realization_nonanticipative_RDF}) with respect to the channel by providing an $\{$encoder, decoder$\}$ pair, so that the end-to-end average distortion is achieved, the channel operates at its capacity, and the nonanticipative RDF is equal to the channel capacity. For the multidimensional (vector) stationary Gaussian source with memory and square-error distortion, we show that the solution of the nonanticipative RDF is equivalent to JSCC design of a vector AWGN channel with and without feedback encoding in which the encoder and the decoder are linear. For the special case of IID Gaussian sequence, our JSCC design reduces to the JSCC design of {\it Example IID-GS} \cite{goblick1965,gastpar2003}. Moreover, from the solution of the nonanticipative RDF we recover the Schalkwijk-Kailath's coding scheme \cite{schalkwijk-kailath1966}.\\
\noi For {\bf(b)}, we use the nonanticipative RDF to compute the RL or gap between the OPTA by causal \cite{neuhoff1982} and zero-delay codes with respect to the OPTA by noncausal codes, for both finite alphabet and continuous alphabet valued sources.\\
\noi Moreover, we show equivalence of the information nonanticipative RDF and its rate  to Gorbunov and Pinsker  nonanticipatory $\epsilon-$entropy and message generation rates \cite{gorbunov-pinsker,pinsker-gorbunov1987,gorbunov-pinsker1991}, which corresponds to Shannon information RDF with an additional causality constraint imposed on the optimal reproduction distribution. Gorbunov and Pinsker \cite{gorbunov-pinsker} appear to be the first who recognized the importance of nonanticipative RDF in real-time applications \cite[p. 1, lines 4-5]{gorbunov-pinsker}. 
\vspace*{0.2cm}\\
Next, we introduce the nonanticipative RDF, classical RDF and Gorbunov and Pinsker's nonanticipatory $\epsilon$-entropy to clarify certain relations between them.\\
\noi{\bf Nonanticipative RDF.} The information nonanticipative RDF is defined as follows. Consider a source distribution $P_{X^n}(dx^n)$, a causal sequence of reproduction distributions $\{P_{Y_i|Y^{i-1},X^i}(dy_i|y^{i-1},x^i):~i=0,1,\ldots,n\}$, a measurable distortion function $d_{0,n}(x^n,y^n):{\cal X}_{0,n}\times{\cal Y}_{0,n}\longmapsto[0,\infty]$ and an average fidelity set\footnote{$\otimes$ denotes convolution of distributions.}
\begin{align}
\overrightarrow{\cal Q}_{0,n}(D)&\tri\bigg\{\overrightarrow{P}_{Y^n|X^n}(dy^n|x^n)\tri\otimes_{i=0}^n{P}_{Y_i|Y^{i-1},X^i}(dy_i|y^{i-1},x^i):\nonumber\\
&\quad~\frac{1}{n+1}\int_{{\cal X}_{0,n}\times{\cal Y}_{0,n}}d_{0,n}(x^n,y^n)(\overrightarrow{P}_{Y^n|X^n}\otimes{P}_{X^n})(dx^n,dy^n)\leq{D}\bigg\}.\label{eq.12}
\end{align}
The information nonanticipative RDF is defined by
\begin{align}
R^{na}_{0,n}(D)&\tri\inf_{\overrightarrow{P}_{Y^n|X^n}(\cdot|x^n)\in\overrightarrow{\cal Q}_{0,n}(D)}\int_{{\cal X}_{0,n}\times{\cal Y}_{0,n}}\log\Big(\frac{\overrightarrow{P}_{Y^n|X^n}(dy^n|x^n)}{P_{Y^n}(dy^n)}\Big)(\overrightarrow{P}_{Y^n|X^n}\otimes{P}_{X^n})(dx^n,dy^n)\nonumber\\
&=\inf_{\overrightarrow{P}_{Y^n|X^n}(\cdot|x^n)\in\overrightarrow{\cal Q}_{0,n}(D)}\mathbb{I}_{X^n\rightarrow{Y^n}}(P_{X^n},\overrightarrow{P}_{Y^n|X^n}).\label{equation1h}
\end{align}
\noi Here, $\mathbb{I}_{X^n\rightarrow{Y^n}}(\cdot,\cdot)$ is used to denote the functional dependence of $R_{0,n}^{na}(D)$ on the two distributions $\{P_{X^n},\overrightarrow{P}_{Y^n|X^n}\}$. The information nonanticipative RDF rate is defined by
\begin{align}
R^{na}(D)\tri\lim_{n\longrightarrow\infty}\frac{1}{n+1}R_{0,n}^{na}(D).\label{equation1q}
\end{align}

\noi{\bf Classical RDF.} The classical information RDF (often called OPTA by noncausal codes)\cite{berger,csiszar74,han2003} is defined by
\begin{align}
R(D)\tri\lim_{n\longrightarrow\infty}\frac{1}{n+1}R_{0,n}(D),~~R_{0,n}(D)\tri\inf_{P_{Y^n|X^n}(\cdot|x^n)\in{\cal Q}_{0,n}(D)}I(X^n;Y^n)\label{equation1b}
\end{align}
where the fidelity set of reproduction conditional distributions is defined by
\begin{align}
{\cal Q}_{0,n}(D)\tri\bigg\{P_{Y^n|X^n}(dy^n|x^n):\frac{1}{n+1}\int_{{\cal X}_{0,n}\times{\cal Y}_{0,n}}d_{0,n}(x^n,y^n)(P_{Y^n|X^n}\otimes{P}_{X^n})(dx^n,dy^n)\leq{D}\bigg\}.\label{equation1a}
\end{align}
Goblick in \cite{goblick1965} applied the $R(D)$ of an IID Gaussian distributed  ${N}(0;\sigma^2_X)$ sequence  with $R(D)=\frac{1}{2}\log(\frac{\sigma^2_X}{D}),$~$0\leq{D}\leq\sigma^2_X$ to provide the JSCC design over a memoryless AWGN channel without feedback. The extension of Goblick's JSCC design to a memoryless AWGN channel with feedback is treated in \cite{ihara1993}, and makes use of the Schalkwijk-Kailath's coding scheme.\\
\noi{\bf Nonanticipatory $\epsilon$-Entropy.} Gorbunov and Pinsker in \cite{gorbunov-pinsker} introduced the so-called nonanticipatory $\epsilon$-entropy, $R^{\varepsilon}_{0,n}(D)$, and message generation rate $R^{\varepsilon}(D)$ as follows
\begin{align}
R^{\varepsilon}(D)&\tri\lim_{n\longrightarrow\infty}\frac{1}{n+1}R^{\varepsilon}_{0,n}(D),\label{equation1kk}\\ R^{\varepsilon}_{0,n}(D)&\tri\inf_{\substack{P_{Y^n|X^n}(\cdot|x^n)\in{\cal Q}_{0,n}(D)\\ X_{i+1}^n\leftrightarrow{X}^i\leftrightarrow{Y^i},~i=0,1,\ldots,n-1}}I(X^n;Y^n),~\forall{n}\geq{0}.\label{equation1k}
\end{align}
The same authors in \cite{pinsker-gorbunov1987,gorbunov-pinsker1991}, showed that for scalar stationary Gaussian sources (using power spectral densities) that $R^{\varepsilon}(D)$ tends to Shannon's RDF $R(D)$, as $D\longrightarrow{0}$.

\noi We show in Theorem~\ref{equivalent_rdf} that $R^{na}_{0,n}(D)=R^\varepsilon_{0,n}(D)$. The fundamental difference between $R_{0,n}^{na}(D)$ or $R^\varepsilon_{0,n}(D)$ and $R_{0,n}(D)$ is the following. For sources with memory, the optimal reproduction distribution of the information nonanticipative RDF, $R^{na}_{0,n}(D)$ at each time $i=0,\ldots,n$, is a causal (or nonanticipative) sequence of conditional distributions $P^*_{Y_i|Y^{i-1},X^i}(dy_i|y^{i-1},x^i)$, that is, it depend only on the past and present source symbols and past reproduction symbols $\{x^i,y^{i-1}\}$. On the other hand, the optimal reproduction distribution of the classical RDF, $R_{0,n}(D)$ at each time $i=0,\ldots,n$, is a sequence of noncausal conditional distributions $P^*_{Y_i|Y^{i-1},X^n}(dy_i|y^{i-1},x^n)$, which depends not only on $\{x^i,y^{i-1}\}$ but also on future source symbols $\{x_{i+1},\ldots,x_n\}$. For independent sources, $R_{0,n}^{na}(D)$ and $R_{0,n}(D)$ coincide.
\par Next, we discuss certain limitations of the classical information RDF $R(D)$ with respect to applications, which motivated our interest to  investigate the information nonanticipative RDF (\ref{equation1h}).

\subsection{Motivation for Nonanticipative RDF}\label{introduction:motivation}

\noi The first limitation of the classical information RDF is the lack of examples of a source with memory for which the exact expression of $R_{0,n}(D)$ for finite $n$, and that of $R(D)\tri\lim_{n\longrightarrow\infty}\frac{1}{n+1}R_{0,n}(D)$, are computed, aside from memoryless or Gaussian sources. For example, the RDF of the BSMS($p$) with single letter Hamming distortion function is currently unknown (see Section~\ref{example:bsms} for more discussion).\\

\noi The second limitation of the classical information RDF $R_{0,n}(D)$ is the anticipative (noncausal) nature of the optimal reproduction distributions $\{P^*_{Y_i|Y^{i-1},X^n}(dy_i|y^{i-1},x^n):~i=0,1,\ldots,n\}$, because they depend on future source symbols even for single letter distortion functions $d_{0,n}(x^n,y^n) \tri \sum_{i=0}^n\rho(x_i,y_i)$. This anticipative structure of the optimal reproduction distribution corresponding to $R_{0,n}(D)$ implies that, in general, the classical information RDF cannot be used in JSCC design using nonanticipative processing or uncoded transmission (i.e., the $\{$encoder, decoder$\}$ pair processes at each time instant source symbol without dependence on future source symbols), also called probabilistic matching of the source to the channel \cite{berger2003}, unless the source is memoryless, such as, the {\it Example-IID-BS} and the {\it Example-IID-GS} discussed earlier \cite{gastpar2003,kostina-verdu2012itw,kostina-verdu2012,kostina-verdu2013}. Indeed, to perform the JSCC design using nonanticipative processing it is necessary that {\it 1)} the exact expression of the RDF of the source is known or computed, and {\it 2)} its corresponding optimal reproduction distribution is realizable using nonanticipative processing via $\{$encoder, channel, decoder$\}$ mappings processing symbols without dependence on future symbols, as shown in Fig.~\ref{realization_nonanticipative_RDF}. This means that for general sources with memory, the optimal conditional reproduction distribution of $R_{0,n}(D)$ should satisfy the Markov chain (MC) (i.e., conditional independence)
\begin{figure}[ht]
\centering
\includegraphics[scale=0.70]{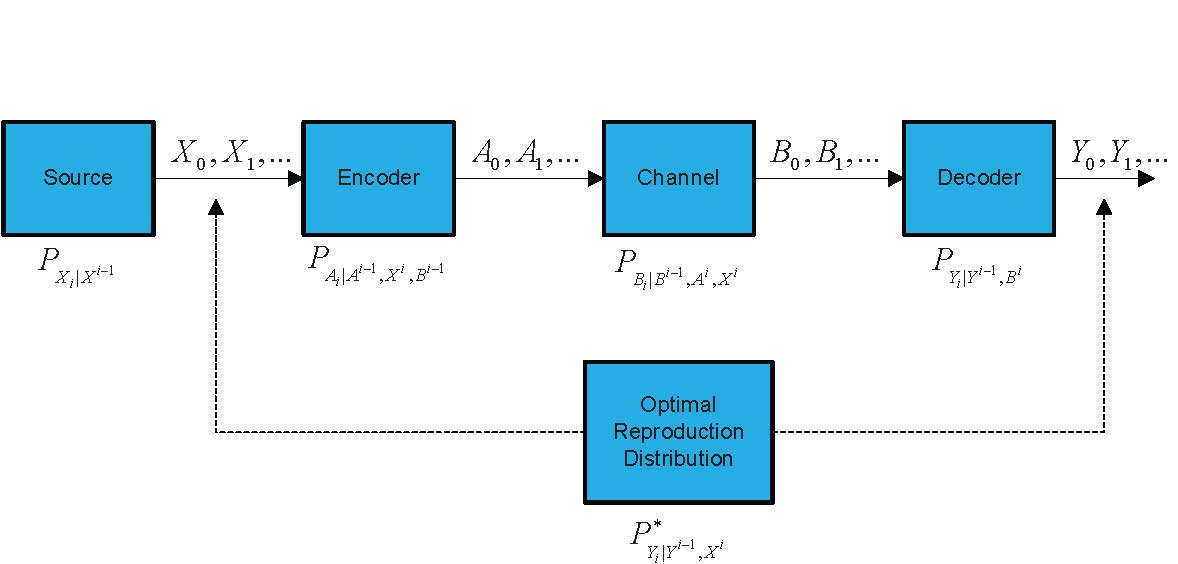}
\caption{ Probabilistic matching based on nonanticipative transmission.}
\label{realization_nonanticipative_RDF}
\end{figure}
\begin{align}
{P}^*_{Y_i|Y^{i-1},X^n}(dy_i|y^{i-1},x^\infty)={P}^*_{Y_i|Y^{i-1},X^i}(dy_i|y^{i-1},x^i)\Longleftrightarrow{X}_{i+1}^\infty\leftrightarrow(X^i,Y^{i-1})\leftrightarrow{Y_i},~i=0,1,\ldots.\label{equation1f}
\end{align}
\noi Alternatively stated, if (\ref{equation1f}) does not hold, then nonanticipative processing of information in the sense of Fig.~\ref{realization_nonanticipative_RDF} is violated.

\subsection{Summary of Main Results and Related Literature}

\par Next, we present a summary of the contributions in this paper and we discuss relations to existing literature.

\noi{\bf Relations to Nonanticipatory $\epsilon$-Entropy.} In subsection~\ref{equivalence_nrdf_epsilon_entropy} we show that the information nonanticipative RDF, $R^{na}_{0,n}(D)$ and its rate $R^{na}(D)$ (see (\ref{equation1h}), (\ref{equation1q})), are equivalent to Gorbunov and Pinsker's nonanticipatory $\epsilon$-entropy and message generation rate \cite{gorbunov-pinsker}, $R^{\varepsilon}_{0,n}(D)$ and $R^{\varepsilon}(D)$, respectively. In subsection~\ref{infinite_horizon}, we combine the existence of solution to the nonanticipative RDF shown in \cite[Theorem III.4]{charalambous-stavrou-ahmed2014a}, and the main theorems in \cite{gorbunov-pinsker} to establish that for general stationary sources, the limit exists, and the limit and infimum operations can be interchanged, that is,
\begin{align}
R^{na}(D)
&=\lim_{n\longrightarrow\infty}\inf_{\overrightarrow{P}_{Y^n|X^n}\in\overrightarrow{\cal Q}_{0,n}(D)}\frac{1}{n+1}\mathbb{I}_{X^n\rightarrow{Y^n}}(P_{X^n},\overrightarrow{P}_{Y^n|X^n})\nonumber\\
&=\inf_{\overrightarrow{P}_{Y^\infty|X^\infty}\in\overrightarrow{\cal Q}_{0,\infty}(D)}\lim_{n\longrightarrow\infty}\frac{1}{n+1}\mathbb{I}_{X^n\rightarrow{Y^n}}(P_{X^n},\overrightarrow{P}_{Y^n|X^n})\equiv\overrightharpoonup{R}^{na}(D)<\infty.\label{equation705}
\end{align}
\noi and that the optimal reproduction distribution of $R^{na}(D)$  corresponds to jointly stationary source-reproduction pair $\{(X_i,Y_i):~i=0,1,\ldots\}$. 
\vspace*{0.2cm}\\
\noi{\bf Closed Form Expression of Information Nonanticipative RDF (Stationary Solution), Properties, and Examples.} In Section~\ref{optimal_reproduction_distribution_section}, we characterize the stationary solution of the nonanticipative RDF and some of its properties. One of the properties is analogous to the Shannon lower bound of the classical RDF \cite{berger}. In subsections~\ref{example:bsms} and \ref{example:gaussian}, we use these to compute the information theoretic nonanticipative RDF in closed form for the following two running examples.\\
{\it Example 1:~BSMS($p$) with single letter Hamming distortion.} For the Binary Symmetric Markov Source with parameter $p$ (BSMS($p$)) and   single letter Hamming distortion, we show that
\begin{align}
{R}^{na}(D) = \left\{ \begin{array}{ll} H(m)-H(D) & \mbox{if $D \leq \frac{1}{2}$}\\
        0 &  \mbox{otherwise} \end{array} \right., m=1-p-D+2pD.\label{BSMS_I1}
\end{align}
\noi Note that, for $p=\frac{1}{2}$, then $R^{na}(D)$ reduces to the classical RDF, $R(D)$, of IID memoryless source, as expected. Recently,  $R_{0,n}^{na}(D)$ of BSMS($p$) is utilized in \cite{johnston-modiano-polyanskiy2014isit} to demonstrate an example of an opportunistic scheduling system. In section~\ref{bounds_opta}  we use (\ref{BSMS_I1}) to obtain tighter lower bounds on the OPTA by causal and noncausal codes, compared to those given in \cite{berger1977}.
\vspace*{0.2cm}\\
\noi {\it Example 2:~Multidimensional Gaussian-Markov source with square-error  distortion.} For the multidimensional partially observed stationary Gauss-Markov source, described in state space form (this includes autoregressive models) by 
\begin{align}
\left\{ \begin{array}{ll} Z_{t+1}=AZ_t+BW_t,~Z_0\sim{N}(0,\bar{\Sigma}_0),~t=0,1,\ldots\\
X_t=CZ_t+NV_t,~t=0,1,\ldots\end{array} \right.\label{equation1o}
\end{align}
\begin{figure}[ht]
\centering
\includegraphics[scale=0.70]{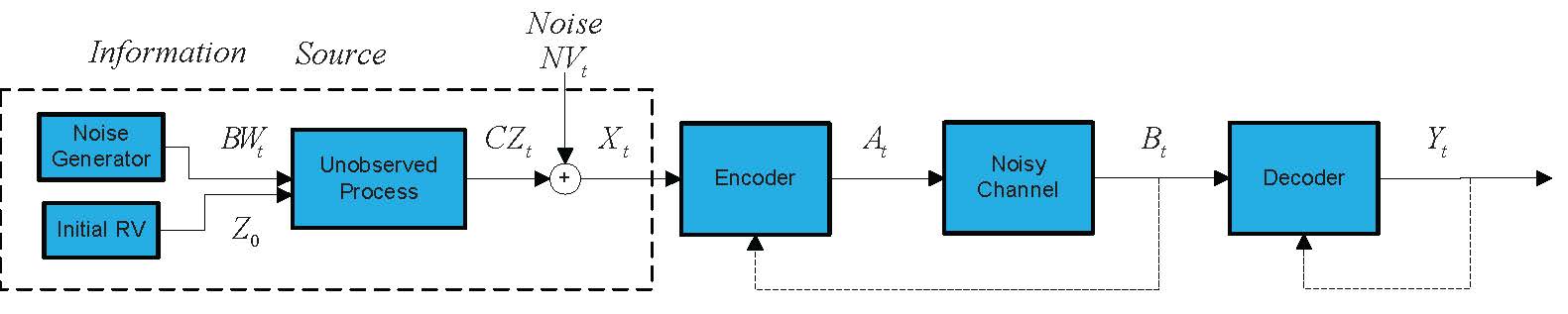}
\caption{JSCC design system of partially observed multidimensional Gaussian-Markov source.}
\label{communication_system1}
\end{figure}
\noi where $Z_t\in\mathbb{R}^m$ is the state (unobserved) process driven by a multidimensional Gaussian noise process $\{W_t:~t=0,1,\ldots\}$, and $X_t\in\mathbb{R}^p$ is the observed source process corrupted by a multidimensional additive Gaussian noise process $\{V_t:~t=0,1,\ldots\}$ (see\footnote{In this application, the objective is to reconstruct $\{X_t:~t=0,1,\ldots\}$ by $\{Y_t:~t=0,1\ldots\}$ with respect to a certain fidelity of reproduction. Under certain assumptions \cite{caines1988} $\{X_t:~t=0,1,\ldots,n\}$ is ergodic (although the $A$ matrix of $\{Z_t:~t=0,1,\ldots,n\}$ may have unstable eigenvalues).} Fig.~\ref{communication_system1}),  we utilize the characterization of the solution of $R_{0,n}^{na}(D)$ obtained in Theorem~\ref{alternative_expression}, to show that for a square error distortion function, the nonanticipative RDF is given by
\begin{align}
R^{na}(D)=\frac{1}{2}\sum_{i=0}^p\log\big(\frac{\lambda_{\infty,i}}{\delta_{\infty,i}}\big)=\frac{1}{2}\log\frac{|\Lambda_\infty|}{|\Delta_\infty|}\label{equation1r}
\end{align}
where
\begin{align*}
\Lambda_{\infty}\tri\lim_{n\longrightarrow\infty}\Lambda_t, \Lambda_t\tri\mathbb{E}\Big\{\Big(X_t-\mathbb{E}\big(X_t|Y^{t-1}\big)\Big)\Big(X_t-\mathbb{E}\big(X_t|Y^{t-1})\Big)^{tr}\Big\}\equiv\mbox{(\ref{lambda_infty})},~\Delta_\infty=diag\{\delta_{\infty,1},\ldots,\delta_{\infty,p}\}
\end{align*}
Here, $\{\lambda_{\infty,1},\ldots,\lambda_{\infty,p}\}$ are the steady state eigenvalues of $\Lambda_{\infty}$, and 
\begin{align}
\delta_{\infty,i} \tri\left\{ \begin{array}{ll} \xi_\infty & \mbox{if} \quad \xi_\infty\leq\lambda_{\infty,i} \\
\lambda_{\infty,i} &  \mbox{if}\quad\xi_\infty>\lambda_{\infty,i} \end{array} \right.,~i=1,\ldots,p,~\sum_{i=1}^p\delta_{\infty,i}=D.\nonumber
\end{align}
\noi In addition, in subsection~\ref{example:gaussian}, we recover from (\ref{equation1r}), several special cases. These include the expression of the nonanticipative RDF  for the {\it scalar stationary fully observed Gaussian-Markov source} (i.e., corresponding to $p=1$, $C=1$, $N=0$, $A=\alpha$, $B=\sigma_W$) also obtained in \cite[Theorem 3]{derpich-ostergaard2012} via alternative methods, and for {\it IID Gaussian sources} $\{X_t:~t=0,1\ldots\}\sim{N}(0;\sigma^2_X)$ the known expression $R(D)=\frac{1}{2}\log\frac{\sigma^2_X}{D},~\sigma^2_X\geq{D}\geq{0}$ .\\

\begin{figure}[ht]
\centering
\includegraphics[scale=0.70]{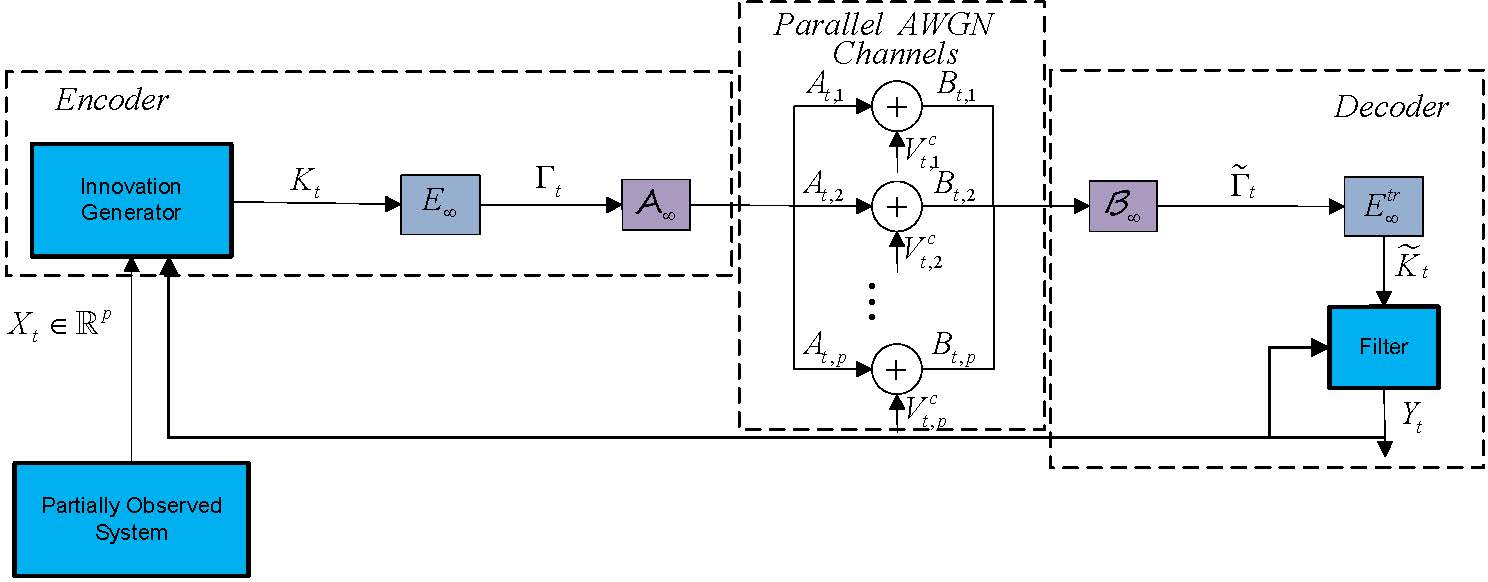}
\caption{Realization of the optimal reproduction distribution: $K_t=X_t-\mathbb{E}\{X_t|Y^{t-1}\}$, $\tilde{K}_t=Y_t-\mathbb{E}\{X_t|Y^{t-1}\}$, $\{E_{\infty}, E^{tr}_{\infty}\}$=unitary transformations, $\{{\cal A}_{\infty},{\cal B}_{\infty}\}$=scaling matrices, $\{\Gamma_t,\tilde{\Gamma}_t\}$= scaling matrices.}
\label{communication_system}
\end{figure}
\noi We note that recently in \cite{tanaka-kim-parrilo-mitter2014} the importance of example (\ref{equation1o}) is elaborated for the special case of fully observed Gaussian-Markov sources, i.e., $N=0$, $C=I$ (i.e., $X_t=Z_t$) in the context of the so-called sequential rate distortion theory \cite{tatikonda2000}, which is equivalent to the nonanticipative RDF. The authors in \cite{tanaka-kim-parrilo-mitter2014} provide an approximation solution by utilizing a semidefinite programming approach, while we give the exact closed form expression (\ref{equation1r}) for the more general model (\ref{equation1o}).
\vspace*{0.2cm}\\
\noi {\bf Joint Source-Channel Coding:~Symbol-by-Symbol Transmission.} In Section~\ref{joint_source_channel_coding},  we show that (\ref{equation1r}) is achievable by the JSCC design system depicted in Fig.~\ref{communication_system}, where the encoder operates using symbol-by-symbol codes, achieves channel capacity $C(P)$ ($P$ is the power allocated for transmission), while  the end-to-end average distortion is met, and $R^{na}(D)=C(P)$. \\
Moreover, we demonstrate that the general JSCC design system of Fig.~\ref{communication_system} gives as degraded cases (corresponding to $p=1$, $C=1$, $N=0$, $A=\alpha$, $B=\sigma_W$),  the following new examples of JSCC designs operating based on symbol-by-symbol codes.
\begin{description}
\item[{\bf(i)}] The optimal coding scheme of a {\it scalar Gaussian-Markov Source over a memoryless AWGN with feedback} (see Fig.~\ref{jscc_fully_feedback});
\item[{\bf(ii)}] the optimal coding scheme of a {\it scalar Gaussian-Markov Source over a memoryless AWGN channel without feedback} (see Fig.~\ref{jscc_fully_nofeedback}).
\end{description}
\begin{figure}[ht]
\centering
\includegraphics[scale=0.70]{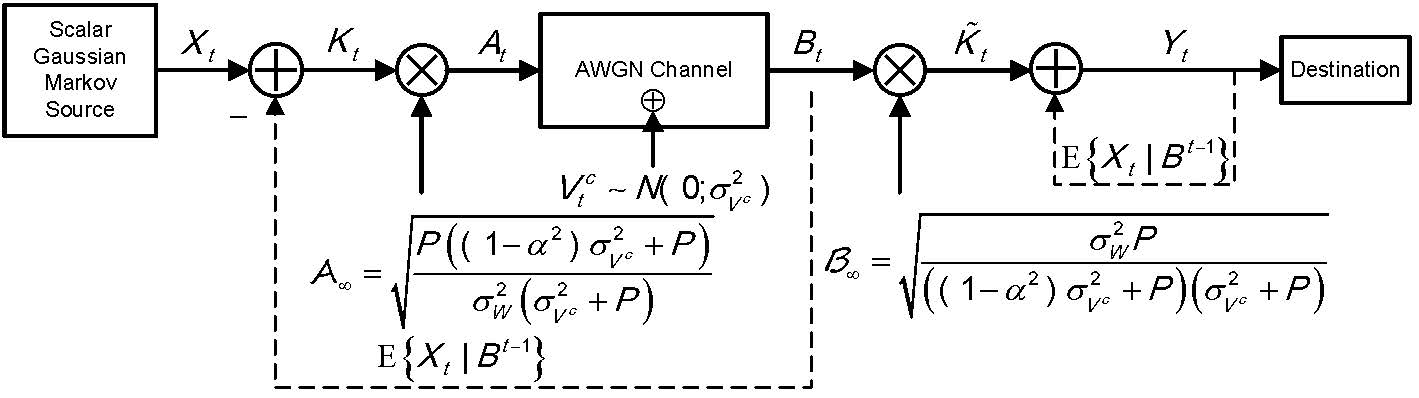}
\caption{Reduction of Fig.~\ref{communication_system}  to $p=1$, $C=1$, $N=0$, $A=\alpha, |\alpha|<1$, $B=\sigma_W$, which corresponds to the JSCC design system of a scalar Gaussian Markov Source given by (\ref{equation111}) over a memoryless AWGN channel with feedback, $K_t=X_t-\mathbb{E}\{X_t|B^{t-1}\}$.}\label{jscc_fully_feedback}
\end{figure}
\begin{figure}
\centering
\includegraphics[scale=0.70]{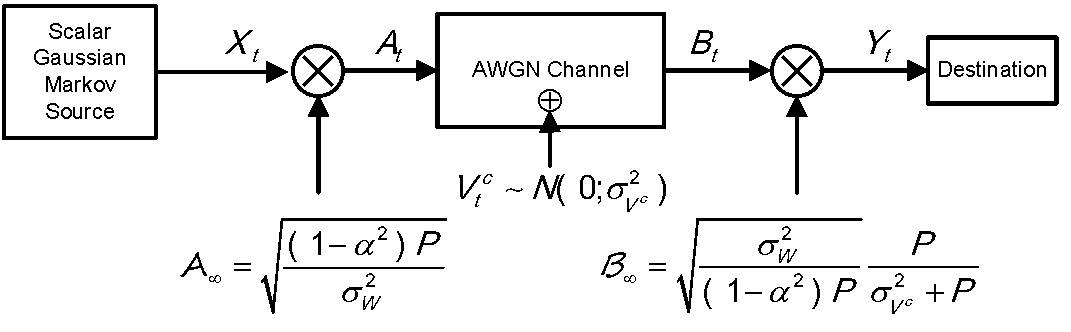}
\caption{Reduction of Fig.~\ref{communication_system}  to $p=1$, $C=1$, $N=0$, $A=\alpha, |\alpha|<1$, $B=\sigma_W$, which corresponds the  to JSCC design system of a scalar Gaussian Markov source given by (\ref{equation111}) over a memoryless AWGN channel without feedback.}\label{jscc_fully_nofeedback}
\end{figure}
Note that Fig.~\ref{jscc_fully_nofeedback} with $\alpha=0$, reduces to Goblick's JSCC design system of the {\it scalar IID Gaussian process transmitted over a memoryless AWGN channel without feedback}. Finally, we demonstrate how the realization scheme of Fig.~\ref{communication_system} recovers the Schalkwijk-Kailath's coding scheme which achieves the capacity of a memoryless AWGN channel with feedback \cite{schalkwijk-kailath1966}.\\
\noi When $\{X_t:~t=0,1\ldots,n\}$ is a vector with independent components $\{X^1_t:~t=0,1\ldots,n\},\ldots,\{X^n_t:~t=0,1\ldots,n\}$ then the analogues of Fig.~\ref{jscc_fully_feedback}, Fig.~\ref{jscc_fully_nofeedback} are easily obtained, and involve JSCC design of a vector source over a vector AWGN channel.\\
\vspace*{0.2cm}
\noi{\bf Bounds on Noncausal and Causal Codes.}~In Section~\ref{bounds_opta} we use $R^{na}(D)$ to derive the following bounds for $R(D)$, and the OPTA by causal codes \cite{neuhoff1982}, denoted by $r^{c,+}(D)$:
\begin{align}
r^{c,+}(D)\geq {R}^{na}(D) \geq{R}(D),~R_{0,n}^{na}(D)\geq{R}_{0,n}(D),~\forall{n}\geq{0}.\label{equation1m}
\end{align}
\noi Based on these bounds we evaluate the RL due to causality by using $R^{na}(D)$. This part compliments previous work by Linder and Zamir \cite{linder2006}, who showed using the results in \cite{neuhoff1982}, that at high resolution (small distortion), an optimal causal code for stationary source with finite differential entropy and square error distortion, consists of uniform quantizers followed by a sequence of entropy coders, and that the RL due to causality is given by the so-called space-filling loss of quantizers, which is at most $\frac{1}{2}\ln\big(\frac{2\pi{e}}{12}\big)\simeq{0.254}$~bits/sample.\\
\noi For arbitrary Gaussian stationary sources with memory defined by (\ref{equation1o}) and square-error distortion, we use the explicit expression of $R^{na}(D)$ (i.e., (\ref{equation1r})) in (\ref{equation1m})  to evaluate the RL due to causality. Our analysis of vector Gaussian stationary sources (\ref{equation1o}) compliment Gorbunov and Pinsker computation in \cite{pinsker-gorbunov1987,gorbunov-pinsker1991}, where it is shown (for scalar Gaussian sources), using power spectral densities that $R^{\varepsilon}(D)$ tends to Shannon's RDF $R(D)$, as $D\longrightarrow{0}$.
\vspace*{0.2cm}\\
\noi Moreover, for stationary Gaussian processes with square-error distortion, our results compliment recent contributions obtained in \cite{derpich-ostergaard2012}, on the gap between the OPTA by causal codes, nonanticipatory $\epsilon$-entropy, $R^{\varepsilon}(D)$, and classical RDF, $R(D)$. Specifically, in \cite{derpich-ostergaard2012} it is shown that for zero-mean Gaussian sources with square-error distortion (and bounded differential entropy rate), the OPTA by causal codes exceeds $R^{\varepsilon}(D)$ by less than approximately {0.254}~bits/sample. The analysis in \cite{derpich-ostergaard2012} includes a closed-form expression for $R^{\varepsilon}(D)$ only when the source is first-order Markov with square-error distortion, while for general Gaussian sources it is shown that $r^{c,+}(D)\leq{R}^{\varepsilon}(D)+\frac{1}{2}\log_2(2\pi{e})$ bits/sample, but no expression is given for $R^{\varepsilon}(D)$. Since we show  $R^{\varepsilon}(D)=R^{na}(D)$, then we can use this equivalence to evaluate the gap between the OPTA by causal codes and $R(D)$, for arbitrary multidimensional stationary Gaussian sources (partially observed), via the bound $R^{na}(D)-R(D)\leq{r}^{c,+}(D)-R(D)$. Note that the RL due to zero-delay  and noncausal codes does not exceed $\frac{1}{2}\log\frac{|\Lambda_{\infty}|}{|\Delta_{\infty}|}-R(D)$, where $R(D)$ can be computed using power spectral densities \cite{berger}.\\
\noi For the case of BSMS($p$), we show that the RL due to causality, i.e., $R^{na}(D)-R(D)$, does not exceed $H(m)-H(p)$, for the region $0\leq{D}\leq{D_c}$,~$m=1-p-D+2pD$, while for the region $D_c\leq{D}\leq\frac{1}{2}$, we compare the upper bound given in \cite{berger1977} with the upper bound obtained via $R^{na}(D)$. Our simulations and properties of $R^{na}(D)$ illustrate that the proposed upper bound on $R(D)$ based on $R^{na}(D)$ is more reliable compared to the one given  in \cite{berger1977}.
\vspace*{0.2cm}\\
\noi{\bf Noisy and Noiseless Coding Theorems.} In Section~\ref{noisy_coding_theorem_unmatched}, we show achievability of nonanticipative RDF using the JSCC design system (Fig.~\ref{realization_nonanticipative_RDF}) based on nonanticipative transmission, with respect to the excess distortion probability between the source symbols and their reproductions. That is, we show achievability based on probabilistic matching of the source and the channel with respect to excess distortion probability. For the multidimensional stationary Gaussian source, we use the JSCC design of Fig.~\ref{communication_system}, and we apply Chernoff's bound to compute the error exponent of the excess distortion probability. For the BSMS($p$), we apply uncoded transmission, and we invoke Hoeffding's inequality and large deviations theory to compute the error exponent of the excess distortion probability. In Section~\ref{sequential_causal_rdf}, we briefly point out that the classical noiseless coding theorem derived in \cite[Chapter 5]{tatikonda2000} for two dimensional sources (such as video coding applications) with per-sample or average distortion function is applicable, giving an alternative operational meaning to $R_{0,n}^{na}(D)$.
   
\section{Information Nonanticipative RDF}\label{equivalent_definitions}

\par In this section, we introduce the definition of information nonanticipative RDF, for general source/reproduction alphabets modelled by complete separable metric spaces (Polish spaces) so that our analysis holds for both finite and continuous alphabets (because we present examples for both). Then we state some properties of information nonanticipative RDF which follow directly from the general properties of directed information found in \cite{charalambous-stavrou2013aa}. 
\subsection{Information Nonanticipative RDF}\label{equivalence_of_distributions}
Let $\mathbb{N} \tri \{0,1,2,\ldots\},$ and $\mathbb{N}^n \tri \{0,1,2,\ldots,n\}.$ Introduce the source spaces $\{({\cal X}_n,{\cal B}({\cal X }_n)):n\in\mathbb{N}\}$ and the reproduction spaces $\{({\cal Y}_n,{\cal B}({\cal Y}_n)):n\in\mathbb{N}\},$ where ${\cal X}_n,{\cal Y}_n, n\in\mathbb{N}$ are Polish spaces, and ${\cal B}({\cal X}_n)$ and ${\cal B}({\cal Y}_n)$ are Borel $\sigma-$algebras of subsets of ${\cal X}_n$ and ${\cal Y}_n,$ respectively. Points in ${\cal X}^{\mathbb{N}}\tri{{\times}_{n\in\mathbb{N}}}{\cal X}_n,$ ${\cal Y}^{\mathbb{N}}\tri{\times_{n\in\mathbb{N}}}{\cal Y}_n$ are denoted by ${\bf x}\tri\{x_0,x_1,\ldots\}\in{\cal X}^{\mathbb{N}},$ ${\bf y}\tri\{y_0,y_1,\ldots\}\in{\cal Y}^{\mathbb{N}},$ respectively, while their restrictions to finite coordinates for any $n\in\mathbb{N}$ are denoted by $x^n\tri\{x_0,x_1,\ldots,x_n\}\in{\cal X}_{0,n},$ $y^n\tri\{y_0,y_1,\ldots,y_n\}\in{\cal Y}_{0,n}$. Let ${\cal B}({\cal X}^{\mathbb{N}})\tri\odot_{i\in\mathbb{N}}{\cal B}({\cal X}_i)$ denote the $\sigma-$algebra on ${\cal X}^{\mathbb{N}}$ generated by cylinder sets $\{{\bf x}=(x_0,x_1,\ldots)\in{\cal X}^{\mathbb{N}}:x_0\in{A}_0,x_1\in{A}_1,\ldots,x_n\in{A}_n\}, A_i\in{\cal B}({\cal X}_i), i=0,1,\ldots,n, n\in\mathbb{N}$, and similarly for ${\cal B}({\cal Y}^{\mathbb{N}})\tri\odot_{i\in\mathbb{N}}{\cal B}({\cal Y}_i)$. Thus, ${\cal B}({\cal X}_{0,n})$ and ${\cal B}({\cal Y}_{0,n})$ denote the $\sigma-$algebras of cylinder sets in ${\cal X}^{\mathbb{N}}$ and ${\cal Y}^{\mathbb{N}},$ respectively, with bases over $A_i\in{\cal B}({\cal X}_i)$, and $B_i\in{\cal B}({\cal Y}_i),~i=0,1,\ldots,n$, respectively. The set of probability distributions on any measurable space $({\cal X},{\cal B}({\cal X}))$ is denoted by ${\cal M}({\cal X})$.
\vspace*{0.2cm}\\
\noi{\bf Source Distribution.} The source is a collection of conditional probability distributions $\{P_{X_n|X^{n-1}}(dx_n|x^{n-1}):n\in\mathbb{N}\}$. For a cylinder set $B\tri\big\{{\bf x}\in{\cal X}^{\mathbb{N}}:x_0\in{B_0},x_1\in{B_1},\ldots,x_n\in{B_n}\big\},~B_i\in{\cal B}({\cal X }_i),~i=0,\ldots,n$, we can define the probability distribution ${P}_{X^n}(B_{0,n})$ on ${\cal B}({\cal X}_{0,n})$ by
\begin{align}
{P}_{X^n}(B_{0,n})\tri\int_{B_0}P_{X_0}(dx_0)\ldots\int_{B_n}P_{X_n|X^{n-1}}(dx_n|x^{n-1}),~B_{0,n}=\times_{i=0}^n{B_i}.\label{equation2}
\end{align}
For each $n\in\mathbb{N}$, we use the notation ${\cal Q}_n({\cal X}_n|{\cal X}_{0,n-1})\tri\{P_{X_n|X^{n-1}}(\cdot|x^{n-1})\in{\cal M}({\cal X}_n):~x^{n-1}\in{\cal X}_{0,n-1}\}$ to emphasize that $P_{X_n|X^{n-1}}(\cdot|x^{n-1})$ is a probability distribution on ${\cal X}_n$ for fixed $x^{n-1}\in{\cal X}_{0,n-1}$ and $P_{X^n|X^{n-1}}(dx_n|\cdot)$ is a measurable function on $x^{n-1}\in{\cal X}_{0,n-1}$ for fixed $dx_n$.
\vspace*{0.2cm}\\
\noi{\bf Reproduction Distribution.} The reproduction distribution is a collection of conditional probability distributions $\{Q_{Y_n|Y^{n-1},X^n}$ \\$(dy_n|y^{n-1},x^n):n\in\mathbb{N}\}$. By our notation, for each $n\in\mathbb{N}$, $Q_{Y_n|Y^{n-1},X^n}(\cdot|\cdot,\cdot)\in{\cal Q}_n({\cal Y}_n|{\cal Y}_{0,n-1}\times{\cal X}_{0,n})$. For a cylinder set $C\tri\Big\{{\bf y}\in{\cal Y}^{\mathbb{N}}:y_0{\in}C_0,y_1{\in}C_1,\ldots$, $y_n{\in}C_n\Big\}$,~$C_i\in{\cal B}({\cal Y}_i),~i=0,\ldots,n$,~$C_{0,n}=\times_{i=0}^n{C_i}$, we define the family of conditional probability distributions ${\overrightarrow{Q}}_{Y^n|X^n}(C_{0,n}|x^n)$ on ${\cal B}({\cal Y}_{0,n})$ by 
\begin{align}
{\overrightarrow{Q}}_{Y^n|X^n}(C_{0,n}|x^n)&\tri\int_{C_0}Q_{Y_0|X_0}(dy_0|x_0)\ldots\int_{C_n}Q_{Y_n|Y^{n-1},X^n}(dy_n|y^{n-1},x^n).\label{equation4}
\end{align}
Throughout the paper, we assume with abuse of notation that $Q_{Y_0|X_0}(dy_0|x_0)\equiv{Q}_{Y_0|X_0,Y^{-1}}(dy_0|x_0,y^{-1})$ and that either $P_{Y^{-1}}(dy^{-1})$ is fixed or $Y^{-1}=y^{-1}$ is fixed.\\
\noi We denote the family of probability distributions (\ref{equation4}) by the set 
\begin{align}
\overrightarrow{\cal Q}({\cal Y}_{0,n}|{\cal X}_{0,n})\tri\Big\{\overrightarrow{Q}_{Y^n|X^n}(\cdot|x^n)\in{\cal M}({\cal Y}_{0,n}):~x^n\in{\cal X}_{0,n},~\overrightarrow{Q}_{Y^n|X^n}(\cdot|{x}^n)~\mbox{is defined by (\ref{equation4})}\Big\}.\nonumber
\end{align}
\noi Thus, any element $\overrightarrow{Q}_{Y^n|X^n}(\cdot|\cdot)\in\overrightarrow{\cal Q}({\cal Y}_{0,n}|{\cal X}_{0,n})$ is represented by (\ref{equation4}).\\
\noi The following theorem is derived in \cite{charalambous-stavrou2013aa}.
\begin{theorem}\cite{charalambous-stavrou2013aa}$($Convexity$)$\label{convexity_of_sets}{\ \\}
The set of probability distributions $\overrightarrow{Q}_{Y^n|X^n}(\cdot|x^n)$ (for a fixed $x^n\in{\cal X}_{0,n}$) defined by (\ref{equation4}) is a convex subset of ${\cal M}({\cal Y}_{0,n})$.
\end{theorem}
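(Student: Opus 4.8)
The plan is to fix $\lambda\in[0,1]$ and two arbitrary elements $\overrightarrow{Q}^1_{Y^n|X^n}(\cdot|x^n),\overrightarrow{Q}^2_{Y^n|X^n}(\cdot|x^n)\in\overrightarrow{\cal Q}({\cal Y}_{0,n}|{\cal X}_{0,n})$, each represented through (\ref{equation4}) by its own chain of conditionals $\{Q^j_{Y_i|Y^{i-1},X^i}:~i=0,1,\ldots,n\}$, $j\in\{1,2\}$, and to exhibit a \emph{single} chain of conditionals whose composition (\ref{equation4}) equals the convex combination $\overrightarrow{Q}^\lambda_{Y^n|X^n}\tri\lambda\overrightarrow{Q}^1_{Y^n|X^n}+(1-\lambda)\overrightarrow{Q}^2_{Y^n|X^n}$. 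Since a convex combination of probability measures on ${\cal Y}_{0,n}$ is again such a measure, the entire content of the claim is that $\overrightarrow{Q}^\lambda_{Y^n|X^n}$ is itself of the nonanticipative form (\ref{equation4}); convexity then follows immediately.

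To build the chain for $\overrightarrow{Q}^\lambda_{Y^n|X^n}$, I would introduce a latent branch index $\Theta\in\{1,2\}$ with probability $\lambda$ on $\{\Theta=1\}$, independent of the source, so that $\overrightarrow{Q}^\lambda_{Y^n|X^n}(\cdot|x^n)$ is the $Y^n$-marginal (given $x^n$) of the joint law under which $Y^n$ is generated by chain $\Theta$. Writing $w_{i-1}(y^{i-1},x^{i-1})$ for the posterior probability of $\{\Theta=1\}$ given the past, obtained as the Radon--Nikodym derivative of $\lambda$ times the $Y^{i-1}$-marginal of chain $1$ with respect to the $Y^{i-1}$-marginal of $\overrightarrow{Q}^\lambda_{Y^n|X^n}$, I would define the candidate kernels by the Bayesian mixing rule
\begin{align*}
Q^\lambda_{Y_i|Y^{i-1},X^i}(dy_i|y^{i-1},x^i)&=w_{i-1}(y^{i-1},x^{i-1})\,Q^1_{Y_i|Y^{i-1},X^i}(dy_i|y^{i-1},x^i)\\
&\quad+\big(1-w_{i-1}(y^{i-1},x^{i-1})\big)\,Q^2_{Y_i|Y^{i-1},X^i}(dy_i|y^{i-1},x^i).
\end{align*}
The existence of the requisite regular conditional distributions and densities is guaranteed because ${\cal X}_n,{\cal Y}_n$ are Polish spaces, and each $Q^\lambda_{Y_i|Y^{i-1},X^i}(\cdot|y^{i-1},x^i)$ is immediately a probability measure, being a convex combination with weight $w_{i-1}\in[0,1]$.

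The key step --- and the point I expect to be the main obstacle --- is to show that this construction is genuinely nonanticipative, i.e. that each kernel depends on $(y^{i-1},x^i)$ only, and not on the future source symbols $x_{i+1},\ldots,x_n$. This holds because the $Y^{i-1}$-marginal of chain $j$ given $x^n$ is obtained by integrating out $y_i,\ldots,y_n$ and therefore equals $\otimes_{k=0}^{i-1}Q^j_{Y_k|Y^{k-1},X^k}(dy_k|y^{k-1},x^k)$, which depends on $x^{i-1}$ alone; consequently the weight $w_{i-1}$ is a function of $(y^{i-1},x^{i-1})$ and the kernel a function of $(y^{i-1},x^i)$, as required. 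Finally I would verify by induction on $i$, matching the $Y^i$-marginals $m^j_i$ through the chain rule together with the defining property of $w_{i-1}$, that the composition (\ref{equation4}) of the kernels $\{Q^\lambda_{Y_i|Y^{i-1},X^i}\}$ reproduces $\overrightarrow{Q}^\lambda_{Y^n|X^n}$ exactly; the base case is the identity $Q^\lambda_{Y_0|X_0}=\lambda Q^1_{Y_0|X_0}+(1-\lambda)Q^2_{Y_0|X_0}$ at $i=0$, and the inductive step is the total-probability identity $m^\lambda_{i-1}\otimes Q^\lambda_{Y_i|Y^{i-1},X^i}=\lambda\,m^1_i+(1-\lambda)\,m^2_i=m^\lambda_i$, which follows from $m^\lambda_{i-1}\,w_{i-1}=\lambda\,m^1_{i-1}$. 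The remaining measurability checks in the conditioning variables are routine, so $\overrightarrow{Q}^\lambda_{Y^n|X^n}\in\overrightarrow{\cal Q}({\cal Y}_{0,n}|{\cal X}_{0,n})$ and the set is convex.
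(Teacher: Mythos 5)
Your proposal is correct: the paper itself states this result only by citation to \cite{charalambous-stavrou2013aa} (the remark following the theorem merely restates that the convex combination must again be of the form (\ref{equation4}), which is exactly the content you identified), and the argument in that reference is essentially the one you give --- the convex combination is re-disintegrated into causal kernels whose mixing weight is the Radon--Nikodym derivative of $\lambda$ times the $Y^{i-1}$-marginal of the first chain with respect to the corresponding marginal of the mixture, and the nonanticipative dependence on $(y^{i-1},x^i)$ follows because those marginals depend on $x^{i-1}$ alone. The only point worth flagging is the routine one you already note: the weights $w_{i-1}$ are defined only up to $m^\lambda_{i-1}$-null sets, which suffices since the representation (\ref{equation4}) only determines the kernels almost everywhere with respect to the conditioning marginal.
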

\vspace*{0.2cm}
\noi Theorem~\ref{convexity_of_sets} states that for any $\lambda\in[0,1]$, and $\overrightarrow{Q}^1_{Y^n|X^n}(\cdot|{x}^n)$, $\overrightarrow{Q}_{Y^n|X^n}^2(\cdot|{x}^n)$ two probability measures on $({\cal Y}_{0,n},{\cal B}({\cal Y}_{0,n}))$ of the form (\ref{equation4}), then $\lambda\overrightarrow{Q}^1_{Y^n|X^n}(\cdot|{x}^n)+(1-\lambda)\overrightarrow{Q}^2_{Y^n|X^n}(\cdot|{x}^n)$ is also a probability measure on $({\cal Y}_{0,n},{\cal B}({\cal Y}_{0,n}))$ of the form (\ref{equation4}).  
\vspace*{0.2cm}\\
Given the source distribution ${P}_{X^n}(\cdot)\in{\cal M}({\cal X}_{0,n})$ and reproduction distribution $\overrightarrow{Q}_{Y^n|X^n}(\cdot|\cdot)\in\overrightarrow{\cal Q}({\cal Y}_{0,n}|{\cal X}_{0,n})$ define the following measures.\\
{\bf P1}: The joint distribution on ${\cal X}_{0,n}\times{\cal Y}_{0,n}$ defined uniquely for $A_i\in{\cal B}({\cal X}_i)$,~$B_i\in{\cal B}({\cal Y}_i)$,~$\forall{i}\in\mathbb{N}^n$, by
\begin{align}
P_{X^n,Y^n}(dx^n,dy^n)&\tri(P_{X^n}\otimes{\overrightarrow Q}_{Y^n|X^n})\big(\times^n_{i=0}(A_i{\times}B_i)\big)\nonumber\\
&\tri\int_{A_0}P_{X_0}(dx_0)\int_{B_0}Q_{Y_0|X_0}(dy_0|x_0)\ldots
\int_{A_n}P_{X_n|X^{n-1}}(dx_n|x^{n-1})\int_{B_n}Q_{Y_n|Y^{n-1},X^n}(dy_n|y^{n-1},x^n).\nonumber
\end{align}
{\bf P2}: The marginal distributions on ${\cal Y}_{0,n}$ defined uniquely for $B_i\in{\cal B}({\cal Y}_i)$, $\forall{i}\in\mathbb{N}^n$, by
\begin{align}
P_{Y^n}(\times^n_{i=0}B_i)&\tri(P_{X^n}\otimes{\overrightarrow Q}_{Y^n|X^n})\big(\times^n_{i=0}({\cal X}_i\times{B}_i)\big)\nonumber.
\end{align}
{\bf P3}: The product probability distribution ${\overrightarrow\Pi}_{0,n}:{\cal B}({\cal X}_{0,n})\odot{\cal B}({\cal Y}_{0,n})\longmapsto[0,1]$ defined uniquely for $A_i\in{\cal B}({\cal X}_i)$,~$B_i\in{\cal B}({\cal Y}_i)$,~$\forall{i}\in\mathbb{N}^n$, by
\begin{align}
{\overrightarrow\Pi}_{0,n}\big(\times^n_{i=0}(A_i{\times}B_i)\big)&\tri(P_{X^n}\times{P}_{Y^n})\big(\times^n_{i=0}(A_i{\times}B_i)\big)\nonumber\\
&=\int_{A_0}P_{X_0}(dx_0)\int_{B_0}P_{Y_0}(dy_0)\ldots
\int_{A_n}P_{X_n|X^{n-1}}(dx_n|x^{n-1})\int_{B_n}P_{Y_n|Y^{n-1}}(dy_n|y^{n-1})\nonumber
\end{align}
where the notation is $P_{Y_0}(dy_0)\equiv{P}_{Y_0|Y^{-1}}(dy_0|y^{-1})$.\\
\noi The information theoretic measure associated with nonanticipative RDF is a special case of directed information\footnote{Directed information corresponds to $\{P_{X_n|X^{n-1}}(dx_i|x^{n-1}):~n\in\mathbb{N}\}$ and $P_{X^n}(\cdot)$ replaced by $\{P_{X_n|X^{n-1},Y^{n-1}}(dx_n|x^{n-1},y^{n-1}):n\in\mathbb{N}\}$ and $\overleftarrow{P}_{X^n|Y^{n-1}}(dx^n|y^{n-1})\tri\otimes_{i=0}^n{P_{X_i|X^{i-1},Y^{i-1}}(dx_|x^{i-1},y^{i-1})}$, respectively, in the construction of measures {\bf P1}-{\bf P3}, and (\ref{equation33}).} \cite{tatikonda2000}, defined via relative entropy $\mathbb{D}(\cdot||\cdot)$ as follows.
\begin{align}
I_{P_{X^n}}(X^n\rightarrow{Y}^n)&\tri\mathbb{D}({P}_{X^n} \otimes {\overrightarrow Q}_{Y^n|X^n}||{\overrightarrow\Pi}_{0,n})\label{equation33}\\
&=\int_{{\cal X}_{0,n}\times{\cal Y}_{0,n}} \log \Big( \frac{{\overrightarrow Q}_{Y^n|X^n}(dy^n|x^n)}{P_{Y^n}(dy^n)}\Big)({P}_{X^n}\otimes {\overrightarrow Q}_{Y^n|X^n})(dx^n,dy^n)\label{equation203}\\
&\equiv{\mathbb{I}}_{X^n\rightarrow{Y^n}}({P}_{X^n}, {\overrightarrow Q}_{Y^n|X^n}).\label{equation7a}
\end{align}
The RHS of (\ref{equation203}) is obtained by using the chain rule of relative entropy \cite{dupuis-ellis97},\cite[Chapter 3]{pinsker-book}. In (\ref{equation7a}) we use the notation ${\mathbb{I}}_{X^n\rightarrow{Y^n}}(\cdot,\cdot)$ to indicate the functional dependence of $I_{P_{X^n}}(X^n\rightarrow{Y^n})$ on $\{P_{X^n}, {\overrightarrow Q}_{Y^n|X^n}\}$.\\
\noi We recall the following convexity result derived in \cite{charalambous-stavrou2013aa} for subsequent use.
\begin{theorem}\cite{charalambous-stavrou2013aa}$(${Convexity of information nonanticipative RDF}$)$\label{convexity_properties}{\ \\}
${\mathbb I}_{X^n\rightarrow{Y^n}}(P_{X^n},\overrightarrow{Q}_{Y^n|X^n})$ is a convex functional of $\overrightarrow{Q}_{Y^n|X^n}({\cdot|x^n})\in{\cal M}({\cal Y}_{0,n})$ for a fixed $P_{X^n}(\cdot)\in{\cal M}({\cal X}_{0,n})$, and a concave functional of $P_{X^n}(\cdot)\in{\cal M}({\cal X}_{0,n})$ for a fixed $\overrightarrow{Q}_{Y^n|X^n}({\cdot|x^n})\in{\cal M}({\cal Y}_{0,n})$.
\end{theorem}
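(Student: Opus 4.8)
The plan is to work directly from the relative-entropy representation (\ref{equation33}), namely $\mathbb{I}_{X^n\rightarrow Y^n}(P_{X^n},\overrightarrow{Q}_{Y^n|X^n})=\mathbb{D}\big(P_{X^n}\otimes\overrightarrow{Q}_{Y^n|X^n}\,\|\,\overrightarrow{\Pi}_{0,n}\big)$ with $\overrightarrow{\Pi}_{0,n}=P_{X^n}\times{P}_{Y^n}$, and to exploit two classical facts about relative entropy: its joint convexity in the pair (measure, reference measure), and the nonnegativity $\mathbb{D}(\cdot\|\cdot)\geq0$. The structural input that makes everything go through is Theorem~\ref{convexity_of_sets}: convex combinations of admissible reproduction kernels of the causal form (\ref{equation4}) remain admissible, so the two arguments of $\mathbb{D}(\cdot\|\cdot)$ stay well-defined measures of the required form along any line segment. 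Both assertions then reduce to identifying the correct affine versus bilinear dependence of $P_{X^n}\otimes\overrightarrow{Q}_{Y^n|X^n}$ and $\overrightarrow{\Pi}_{0,n}$ on each argument.

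For the convexity in $\overrightarrow{Q}_{Y^n|X^n}$, I would fix $P_{X^n}$ and set $\overrightarrow{Q}^\lambda=\lambda\overrightarrow{Q}^1+(1-\lambda)\overrightarrow{Q}^2$, $\lambda\in[0,1]$, which lies in $\overrightarrow{\cal Q}({\cal Y}_{0,n}|{\cal X}_{0,n})$ by Theorem~\ref{convexity_of_sets}. The key observation is that, for fixed $P_{X^n}$, both the joint measure $P_{X^n}\otimes\overrightarrow{Q}^\lambda$ and the induced output marginal $P_{Y^n}^\lambda(\cdot)=\int_{{\cal X}_{0,n}}\overrightarrow{Q}^\lambda(\cdot|x^n)P_{X^n}(dx^n)$ are affine in $\overrightarrow{Q}$, hence so is $\overrightarrow{\Pi}_{0,n}^\lambda=P_{X^n}\times{P}_{Y^n}^\lambda$. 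Applying the joint convexity of $\mathbb{D}(\cdot\|\cdot)$ to these two affine families immediately yields $\mathbb{I}_{X^n\rightarrow Y^n}(P_{X^n},\overrightarrow{Q}^\lambda)\leq\lambda\,\mathbb{I}_{X^n\rightarrow Y^n}(P_{X^n},\overrightarrow{Q}^1)+(1-\lambda)\,\mathbb{I}_{X^n\rightarrow Y^n}(P_{X^n},\overrightarrow{Q}^2)$.

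For the concavity in $P_{X^n}$ the same route fails, because with $\overrightarrow{Q}$ fixed the reference $\overrightarrow{\Pi}_{0,n}=P_{X^n}\times{P}_{Y^n}$ is \emph{bilinear} rather than affine in $P_{X^n}$ (both factors move). Instead I would establish the ``golden formula'', valid for any fixed $S_{Y^n}\in{\cal M}({\cal Y}_{0,n})$ dominating $P_{Y^n}$ and each $\overrightarrow{Q}(\cdot|x^n)$,
\[
\mathbb{I}_{X^n\rightarrow Y^n}(P_{X^n},\overrightarrow{Q}_{Y^n|X^n})=\int_{{\cal X}_{0,n}}\mathbb{D}\big(\overrightarrow{Q}_{Y^n|X^n}(\cdot|x^n)\,\|\,S_{Y^n}\big)P_{X^n}(dx^n)-\mathbb{D}\big(P_{Y^n}\,\|\,S_{Y^n}\big),
\]
obtained by inserting $\log S_{Y^n}$ inside (\ref{equation203}) and using that $P_{Y^n}$ is the ${\cal Y}_{0,n}$-marginal of $P_{X^n}\otimes\overrightarrow{Q}$. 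For fixed $\overrightarrow{Q}$ and $S_{Y^n}$ the first term is \emph{linear} in $P_{X^n}$, while $P_{Y^n}$ is affine in $P_{X^n}$ and $\mathbb{D}(\cdot\|S_{Y^n})$ is convex, so the subtracted term is convex; linear minus convex is concave. To make this precise for $P^\lambda=\lambda P^1+(1-\lambda)P^2$ I would take $S_{Y^n}=P_{Y^n}^\lambda$ (the marginal induced by $P^\lambda$), whereupon the linear first terms cancel exactly and the identity collapses to $\mathbb{I}_{X^n\rightarrow Y^n}(P^\lambda,\overrightarrow{Q})-\lambda\,\mathbb{I}_{X^n\rightarrow Y^n}(P^1,\overrightarrow{Q})-(1-\lambda)\,\mathbb{I}_{X^n\rightarrow Y^n}(P^2,\overrightarrow{Q})=\lambda\,\mathbb{D}(P_{Y^n}^1\|P_{Y^n}^\lambda)+(1-\lambda)\,\mathbb{D}(P_{Y^n}^2\|P_{Y^n}^\lambda)\geq0$, which is concavity.

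The main obstacle is measure-theoretic rather than conceptual: on Polish spaces one must justify the Radon--Nikodym manipulations underlying the golden formula and the joint convexity of $\mathbb{D}(\cdot\|\cdot)$ when the dominating measures vary, and verify the absolute-continuity and finiteness conditions (the choice $S_{Y^n}=P_{Y^n}^\lambda$ is convenient precisely because it forces $P_{Y^n}^j\ll S_{Y^n}$). These properties are standard for relative entropy on abstract measurable spaces; the only feature genuinely special to the nonanticipative setting is that the convex combinations must preserve the causal convolution structure of (\ref{equation4}), and this is exactly what Theorem~\ref{convexity_of_sets} guarantees, so no analytic difficulty beyond the classical theory arises.
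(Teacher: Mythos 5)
Your proposal is correct: the convexity in $\overrightarrow{Q}_{Y^n|X^n}$ via joint convexity of $\mathbb{D}(\cdot\|\cdot)$ applied to the affine families $P_{X^n}\otimes\overrightarrow{Q}^\lambda$ and $P_{X^n}\times P_{Y^n}^\lambda$, and the concavity in $P_{X^n}$ via the golden formula with the self-referential choice $S_{Y^n}=P_{Y^n}^\lambda$ (reducing the claim to nonnegativity of $\mathbb{D}(P_{Y^n}^j\|P_{Y^n}^\lambda)$, which is finite since $dP_{Y^n}^1/dP_{Y^n}^\lambda\leq 1/\lambda$), are both sound, and you correctly identify Theorem~\ref{convexity_of_sets} as the only ingredient specific to the causal setting. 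The paper itself gives no proof of this statement -- it is imported verbatim from \cite{charalambous-stavrou2013aa} -- and your argument is essentially the standard relative-entropy derivation underlying that reference, so there is nothing further to reconcile.
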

\noi For each $n\in\mathbb{N}$, let $d_{0,n}:{\cal X}_{0,n}\times{\cal Y}_{0,n}\longmapsto[0,\infty]$ be a measurable distortion function. The fidelity of reproduction of $y^n\in{\cal Y}_{0,n}$ by $x^n\in{\cal X}_{0,n}$ is defined by the set of conditional distributions
\begin{align}
\overrightarrow{\cal Q}_{0,n}(D)&\tri\Big\{\overrightarrow{Q}_{Y^n|X^n}(\cdot|x^n)\in{\cal M}({\cal Y}_{0,n}):~\frac{1}{n+1}\int_{{\cal X}_{0,n}\times{\cal Y}_{0,n}} d_{0,n}({x^n},{y^n})(P_{X^n}\otimes\overrightarrow{Q}_{Y^n|X^n})(d{x}^{n},d{y}^{n})\leq D\Big\}\label{eq2}
\end{align}
for some $D\geq{0}$. Denote by $\overrightarrow{\cal Q}_{0,\infty}(D)$ the corresponding set in (\ref{eq2}), when the fidelity is replaced by $\lim_{n\longrightarrow\infty}\frac{1}{n+1}\int_{{\cal X}_{0,n}\times{\cal Y}_{0,n}}$\\ $d_{0,n}(x^n,y^n)(P_{X^n}\otimes\overrightarrow{Q}_{Y^n|X^n})(d{x}^{n},d{y}^{n})\leq D$.\\
\noi The information nonanticipative RDF is defined as follows.
\begin{definition}$(${Information nonanticipative RDF}$)${\ \\}\label{nonanticipative_rdf1}
\begin{itemize} 
\item[(1)] The information nonanticipative RDF is defined by 
\begin{align}
{R}^{na}_{0,n}(D) \tri  \inf_{{\overrightarrow{Q}_{Y^n|X^n}(\cdot|x^n)\in\overrightarrow{\cal Q}_{0,n}(D)}}\mathbb{I}_{X^n\rightarrow{Y^n}}(P_{X^n},{\overrightarrow Q}_{Y^n|X^n})
\label{ex12}
\end{align}
provided the infimum over $\overrightarrow{\cal Q}_{0,n}(D)$ in (\ref{ex12}) exists; if not we set ${R}^{na}_{0,n}(D)=+\infty$.\\
\item[(2)] The information nonanticipative RDF rate is defined by 
\begin{align}
{R}^{na}(D)=\lim_{n\longrightarrow\infty}\frac{1}{n+1}{R}^{na}_{0,n}(D)\label{equation22}
\end{align}
provided the limit on the RHS of (\ref{equation22}) exists; if the infimum over $\overrightarrow{\cal Q}_{0,n}(D)$ in (\ref{ex12}) does not exist then we set ${R}^{na}(D)=+\infty$.
\end{itemize}
In addition, define
\begin{align}
\overrightharpoonup{R}^{na}(D)\tri\inf_{{\overrightarrow{Q}_{Y^\infty|X^\infty}(\cdot|x^{\infty})\in\overrightarrow{\cal Q}_{0,\infty}(D)}}\lim_{n\longrightarrow\infty}\frac{1}{n+1}\mathbb{I}_{X^n\rightarrow{Y^n}}(P_{X^n},{\overrightarrow Q}_{Y^n|X^n})\geq{R}^{na}(D).\label{equation64}
\end{align}
\end{definition}
\noi Since, in general, $\overrightharpoonup{R}^{na}(D)\geq{R}^{na}(D)$, then, $R^{na}(D)$ is more natural than $\overrightharpoonup{R}^{na}(D)$.  By analogy with the definition of classical RDF, one may assume that $\{(X_i,Y_i):~i=0,1,\ldots\}$ is jointly stationary and ergodic process or $\frac{1}{n+1}\log\Big(\frac{\overrightarrow{Q}_{Y^n|X^n}(\cdot|x^n)}{P_{Y^n}(\cdot)}\Big)(y^n)$ is information stable. However, we do not know \'a priori whether the joint process $\{(X_i,Y_i):~i=0,1,\ldots\}$ is stationary.


\section{Equivalence of Information Nonanticipative RDF and Nonanticipatory $\epsilon$-Entropy}\label{relations_epsilon_entropy_limit}

\par In this section, we first show equivalence of the information nonanticipative RDF (see Definition~\ref{nonanticipative_rdf1}) to Gorbunov and Pinsker's \cite{gorbunov-pinsker} nonanticipatory $\epsilon-$entropy defined by (\ref{equation1kk}), (\ref{equation1k}), respectively. Then, we consider consistent stationary sources as defined by Gorbunov and Pinsker \cite{gorbunov-pinsker}, and we use certain results from \cite{charalambous-stavrou-ahmed2014a}, to establish equality of the limiting expressions in (\ref{equation705}), finiteness of $R^{na}(D)$, and that for stationary sources the infimum over $\overrightarrow{Q}_{Y^n|X^n}(\cdot|x^n)\in\overrightarrow{\cal Q}_{0,n}(D)$ is achieved, and it is realizable by stationary source-reproduction pairs $\{(X_n,Y_n):n\in\mathbb{N}\}$.

\subsection{Equivalence of Nonanticipative RDF and Nonanticipatory $\epsilon-$Entropy}\label{equivalence_nrdf_epsilon_entropy}

\par For a given a source $P_{X^n}(\cdot)\in{\cal M}({\cal X}_{0,n})$ and a reproduction $P_{Y^n|X^n}(\cdot|\cdot)\in{\cal Q}_{0,n}(D)\subset{\cal Q}({\cal Y}_{0,n}|{\cal X}_{0,n})$, Gorbunov and Pinsker \cite{gorbunov-pinsker} restricted the fidelity set of classical RDF, $R_{0,n}(D)$, ${\cal Q}_{0,n}(D)$ defined by (\ref{equation1a}), to those reproduction distributions which satisfy the following MC.
\begin{align}
X_{n+1}^\infty\leftrightarrow{X^n}\leftrightarrow{Y^n}\Longleftrightarrow{P}_{Y^n|X^{\infty}}(dy^n|x^{\infty})={P}_{Y^n|X^n}(dy^n|x^n),~n=0,1,\ldots.\label{equation70}
\end{align}
\noi Then they introduced the nonanticipatory $\epsilon$-entropy and nonanticipatory message generation rate defined by (\ref{equation1kk}), (\ref{equation1k}), respectively. In addition, they defined
\begin{align}
\overrightharpoonup{R}^{\varepsilon}(D)\tri\inf_{\substack{P_{Y^\infty|X^\infty}(\cdot|x^{\infty})\in{\cal Q}_{0,\infty}(D):\\X^\infty_{i+1}\leftrightarrow{X^i}\leftrightarrow{Y^i},~i=0,1,\ldots}}\lim_{n\longrightarrow\infty}\frac{1}{n+1}I(X^n;Y^n)\geq{R}^{\varepsilon}(D).\label{equation23}
\end{align}
\noi The MC constraint (\ref{equation70}) is a probabilistic version of a deterministic causal reproduction coder (cascade of an encoder-decoder (ED)) \cite{neuhoff1982}, defined as follows.
\begin{definition}\cite{neuhoff1982}$(${Causal reproduction coder}$)${\ \\}\label{causal_reproduction_coder}
A reproduction coder $f_i:{\cal X}_{0,n}\longmapsto{\cal Y}_i$,~$\forall{i}=0,1,\ldots,n$, is called causal if the map $x^n\longmapsto{f}_i(x^n)$ is measurable $\forall{i}\in\mathbb{N}^n$ and
\begin{align*}
f_i(x^n)=f_i(\hat{x}^n)~~\mbox{whenever}~~x^i=\hat{x}^i,~\forall{n}\geq{i},~n\in\mathbb{N}.
\end{align*}
\end{definition}
\noi Since the class of randomized reproduction coders embeds deterministic causal reproduction coders, then probabilistically, a reproduction coder is causal if and only if the following MC holds $X_{i+1}^\infty\leftrightarrow{X}^i\leftrightarrow{Y}_i$, $\forall{i}\in\mathbb{N}$. By (\ref{equation1kk}), nonanticipatory $\epsilon$-entropy, $R^{\varepsilon}_{0,n}(D)$, imposes a probabilistic causality (nonanticipative) constraint on the optimal reproduction distribution.
\vspace*{0.2cm}\\
Next, we show that $R_{0,n}^{\varepsilon}(D)=R_{0,n}^{na}(D)$ and $R^\varepsilon(D)=R^{na}(D)$, by invoking the following equivalent statements of MCs.
\begin{lemma}$(${Equivalent nonanticipative statements}$)${\ \\}\label{equivalent_statements}
The following statements are equivalent.
\begin{description}
\item[{\bf MC1}:] $P_{Y^n|X^n}(dy^n|x^n)={\overrightarrow Q}_{Y^n|X^n}(dy^n|x^n)=\otimes_{i=0}^n{P}_{Y_i|Y^{i-1},X^i}(dy_i|y^{i-1},x^i)$,~~$\forall{n}\in\mathbb{N}$;

\item[{\bf MC2}:]  $Y_i \leftrightarrow (X^i, Y^{i-1}) \leftrightarrow (X_{i+1}, X_{i+2}, \ldots, X_n)$ forms a MC,~for each $i=0,1,\ldots, n-1$,~$\forall{n}\in\mathbb{N}$;

\item[{\bf MC3}:] $Y^i \leftrightarrow X^i \leftrightarrow X_{i+1}$ forms a MC,~for each  $i=0,1,\ldots, n-1$,~$\forall{n}\in\mathbb{N}$;

\item[{\bf MC4}:] $X_{i+1}^n\leftrightarrow{X^i}\leftrightarrow{Y^i}$ forms a MC,~for each $i=0,1,\ldots, n-1$,~$\forall{n}\in\mathbb{N}$. 
\end{description}
\end{lemma}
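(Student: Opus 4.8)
The plan is to prove the chain of equivalences $\textbf{MC1}\Leftrightarrow\textbf{MC2}\Leftrightarrow\textbf{MC4}\Leftrightarrow\textbf{MC3}$, exploiting that each statement is a different encoding of the single conditional-independence structure ``the reproduction up to time $i$ depends on the source only through $x^i$.'' Throughout I would work with regular conditional distributions (disintegrations) on the Polish spaces introduced in Section~\ref{equivalent_definitions}, and invoke the standard semi-graphoid axioms of conditional independence---symmetry ($A\leftrightarrow B\leftrightarrow C$ iff $C\leftrightarrow B\leftrightarrow A$), decomposition, and contraction---which are valid in this measure-theoretic setting.

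First I would establish $\textbf{MC1}\Leftrightarrow\textbf{MC2}$. By the chain rule of conditional probability the reproduction kernel always disintegrates as $P_{Y^n|X^n}=\otimes_{i=0}^n P_{Y_i|Y^{i-1},X^n}$, where a priori each factor may depend on the whole $x^n$. Statement $\textbf{MC1}$ asserts that this product collapses to $\otimes_{i=0}^n P_{Y_i|Y^{i-1},X^i}$, which holds if and only if every factor reduces, i.e.\ $P_{Y_i|Y^{i-1},X^n}=P_{Y_i|Y^{i-1},X^i}$ for each $i$. This per-coordinate identity is precisely the conditional independence $Y_i\leftrightarrow(X^i,Y^{i-1})\leftrightarrow(X_{i+1},\ldots,X_n)$, i.e.\ $\textbf{MC2}$, so the equivalence is immediate.

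Next I would prove $\textbf{MC1}\Leftrightarrow\textbf{MC4}$ by direct manipulation of the disintegrated measures. For $\textbf{MC1}\Rightarrow\textbf{MC4}$ I would marginalize the causal product form over $y_{i+1},\ldots,y_n$; since $\int_{{\cal Y}_k}P_{Y_k|Y^{k-1},X^k}(dy_k|\cdot)=1$, each integration peels off the last factor, leaving $P_{Y^i|X^n}=\otimes_{k=0}^i P_{Y_k|Y^{k-1},X^k}$, an expression depending on $x^n$ only through $x^i$. This is exactly $P_{Y^i|X^n}=P_{Y^i|X^i}$, which by symmetry reads $X_{i+1}^n\leftrightarrow X^i\leftrightarrow Y^i$, namely $\textbf{MC4}$. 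For the converse $\textbf{MC4}\Rightarrow\textbf{MC1}$ I would apply Bayes' rule: writing $P_{Y_i|Y^{i-1},X^n}$ as the ratio $P_{Y^i|X^n}/P_{Y^{i-1}|X^n}$ and substituting $\textbf{MC4}$ at indices $i$ and $i-1$ gives $P_{Y^i|X^i}/P_{Y^{i-1}|X^{i-1}}$, which depends on $x^n$ only through $x^i$, recovering the reduced factor and hence $\textbf{MC1}$.

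Finally I would close the loop with $\textbf{MC3}\Leftrightarrow\textbf{MC4}$. The direction $\textbf{MC4}\Rightarrow\textbf{MC3}$ is the decomposition axiom: since $X_{i+1}$ is a sub-block of $X_{i+1}^n$, the independence $Y^i\leftrightarrow X^i\leftrightarrow X_{i+1}^n$ immediately yields $Y^i\leftrightarrow X^i\leftrightarrow X_{i+1}$. The converse $\textbf{MC3}\Rightarrow\textbf{MC4}$ is the crux, and I would argue it by induction on the length of the future block. Fixing $i$, the base case $Y^i\leftrightarrow X^i\leftrightarrow X_{i+1}$ is $\textbf{MC3}$ at $i$; in the inductive step I would combine the hypothesis $Y^i\leftrightarrow X^i\leftrightarrow X_{i+1}^{i+j-1}$ with the consequence $Y^i\leftrightarrow X^{i+j-1}\leftrightarrow X_{i+j}$ of $\textbf{MC3}$ at index $i+j-1$ (obtained by dropping $Y_{i+1},\ldots,Y_{i+j-1}$ from $Y^{i+j-1}$), reading the latter as $Y^i\leftrightarrow(X^i,X_{i+1}^{i+j-1})\leftrightarrow X_{i+j}$. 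Applying the contraction axiom with $A=Y^i$, $B=X_{i+1}^{i+j-1}$, $C=X^i$, $D=X_{i+j}$ then gives $Y^i\leftrightarrow X^i\leftrightarrow X_{i+1}^{i+j}$, and iterating to $j=n-i$ yields $\textbf{MC4}$. The main obstacle is exactly this inductive growth of the future block: one must line up the shifted $\textbf{MC3}$ instances and correctly assign the roles in the contraction axiom, so the bookkeeping of which variables play $A,B,C,D$ is the delicate part. The measure-theoretic justification of the semi-graphoid axioms via regular conditional probabilities is routine but worth noting for the continuous-alphabet examples.
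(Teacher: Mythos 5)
Your proposal is correct and follows essentially the same route as the paper: the only nontrivial step in both is the passage from \textbf{MC3} to \textbf{MC4}, which you and the paper both handle by induction on the length of the future source block (you via the decomposition/contraction semi-graphoid axioms, the paper via the same manipulations written out explicitly on densities). The remaining equivalences, which the paper declares straightforward and omits, you spell out correctly via per-coordinate reduction of the disintegration and marginalization/Bayes.
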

\begin{proof}
See Appendix~\ref{appendix_equivalent_statements}.
\end{proof}
\vspace*{0.2cm}
\noi We note that {\bf MC3} of Lemma~\ref{equivalent_statements} is precisely Granger's definition of temporal causality \cite{solo2008}, which is used in econometrics to unravel complex relations between macroeconomic variables from time series observations. It is also applied in bioengineering \cite{solo2008,lin-hara-solo-vangel-belliveau-stufflebeam-hamalainen2009}, and more recently in neuroimaging to infer that $\{Y_n:n\in\mathbb{N}\}$ does not cause $\{X_n:n\in\mathbb{N}\}$. Note that \cite{pearl2009} refers to {\bf MC4} as the ``weak union" property of conditional independence.\\
\noi In the next theorem, we utilize Lemma~\ref{equivalent_statements},  and more specifically, the fact that {\bf MC4} is equivalent to {\bf MC2} and {\bf MC1}, to show that the extremum of the nonanticipatory $\epsilon$-entropy, $R^{\varepsilon}_{0,n}(D)$, defined by (\ref{equation1k}), is equivalent to the extremum of nonanticipative RDF, $R^{na}_{0,n}(D)$, defined by (\ref{ex12}).
\begin{theorem}$($Equivalence of $R^{na}_{0,n}(D)$ and {$R^{\varepsilon}_{0,n}(D)$}$)$\label{equivalent_rdf}{\ \\}
The nonanticipative RDF and nonanticipatory $\epsilon$-entropy are equivalent notions, i.e., $R^{na}_{0,n}(D)=R^{\varepsilon}_{0,n}(D),~~\forall{n}\in\mathbb{N}^n$.
\end{theorem}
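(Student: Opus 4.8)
The plan is to prove that the two constrained minimization problems defining $R^{na}_{0,n}(D)$ and $R^{\varepsilon}_{0,n}(D)$ are literally the \emph{same} optimization, by showing (i) they range over the same feasible set of reproduction kernels and (ii) their objective functionals agree on that set. Both reductions are consequences of Lemma~\ref{equivalent_statements}, so the entire argument is an application of the equivalence of {\bf MC1}--{\bf MC4}.

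First I would identify the feasible sets. By definition, $R^{\varepsilon}_{0,n}(D)$ minimizes over those $P_{Y^n|X^n}(\cdot|x^n)\in{\cal Q}_{0,n}(D)$ that additionally obey {\bf MC4}, i.e. $X^n_{i+1}\leftrightarrow{X^i}\leftrightarrow{Y^i}$ for $i=0,1,\ldots,n-1$. By the equivalence {\bf MC4}$\Leftrightarrow${\bf MC1} of Lemma~\ref{equivalent_statements}, any such kernel factorizes as $P_{Y^n|X^n}(dy^n|x^n)=\otimes_{i=0}^n P_{Y_i|Y^{i-1},X^i}(dy_i|y^{i-1},x^i)$, which is exactly the nonanticipative form $\overrightarrow{Q}_{Y^n|X^n}$ of an element of $\overrightarrow{\cal Q}({\cal Y}_{0,n}|{\cal X}_{0,n})$; conversely, every $\overrightarrow{Q}_{Y^n|X^n}$ trivially satisfies {\bf MC1} and hence {\bf MC4}. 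Because {\bf MC1} makes the joint laws $P_{X^n}\otimes P_{Y^n|X^n}$ and $P_{X^n}\otimes\overrightarrow{Q}_{Y^n|X^n}$ coincide, the distortion constraint cutting out ${\cal Q}_{0,n}(D)$ becomes identical to the one cutting out $\overrightarrow{\cal Q}_{0,n}(D)$. Thus the feasible set of $R^{\varepsilon}_{0,n}(D)$ equals $\overrightarrow{\cal Q}_{0,n}(D)$, the feasible set of $R^{na}_{0,n}(D)$.

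Next I would match the objectives on this common set. The plan is to decompose both functionals by the chain rule: $I(X^n;Y^n)=\sum_{i=0}^n I(X^n;Y_i|Y^{i-1})$, while $\mathbb{I}_{X^n\rightarrow{Y^n}}(P_{X^n},\overrightarrow{Q}_{Y^n|X^n})=\sum_{i=0}^n I(X^i;Y_i|Y^{i-1})$. Subtracting term by term and applying the chain rule once more gives
\begin{align}
I(X^n;Y^n)-\mathbb{I}_{X^n\rightarrow{Y^n}}=\sum_{i=0}^n I\big(X_{i+1}^n;Y_i\,\big|\,X^i,Y^{i-1}\big),\nonumber
\end{align}
and each summand is nonnegative and vanishes precisely when $Y_i\leftrightarrow(X^i,Y^{i-1})\leftrightarrow X^n_{i+1}$, i.e. when {\bf MC2} holds at stage $i$. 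Since {\bf MC2}$\Leftrightarrow${\bf MC4} by Lemma~\ref{equivalent_statements} and {\bf MC4} holds throughout the common feasible set, the right-hand side is zero, so $I(X^n;Y^n)=\mathbb{I}_{X^n\rightarrow{Y^n}}(P_{X^n},\overrightarrow{Q}_{Y^n|X^n})$ there.

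Combining the two reductions, the infimum defining $R^{\varepsilon}_{0,n}(D)$ is taken over the same set and of the same integrand as the infimum defining $R^{na}_{0,n}(D)$, so the two coincide for every $n\in\mathbb{N}$, which is the claim. I expect the only delicate point to be the measure-theoretic bookkeeping behind the two chain-rule decompositions and the assertion that the residual terms $I(X_{i+1}^n;Y_i|X^i,Y^{i-1})$ are well defined and vanish exactly under the conditional-independence Markov chain in the general Polish-space setting; this I would discharge by appealing to the directed-information identities and properties established in \cite{charalambous-stavrou2013aa}, after which the substantive content is carried entirely by the equivalence of {\bf MC1}--{\bf MC4} in Lemma~\ref{equivalent_statements}.
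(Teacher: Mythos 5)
Your argument is correct and follows the same route as the paper, whose proof consists of the single line that the claim ``follows directly from Lemma~\ref{equivalent_statements}.'' You have simply made explicit the two things that one-liner leaves implicit — that {\bf MC1}$\Leftrightarrow${\bf MC4} identifies the two feasible sets, and that the residual $\sum_{i=0}^n I(X_{i+1}^n;Y_i|X^i,Y^{i-1})$ in the chain-rule decomposition vanishes under {\bf MC2} so the objectives $I(X^n;Y^n)$ and $\mathbb{I}_{X^n\rightarrow Y^n}$ agree on that common set.
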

\begin{proof}  The derivation follows directly from  Lemma~\ref{equivalent_statements}.  
\end{proof}


\subsection{Infinite Horizon}\label{infinite_horizon}
\par In this section, we recall the finite horizon existence result derived in\cite[Theorem III.4]{charalambous-stavrou-ahmed2014a} to investigate existence of the information nonanticipative RDF rate, and the validity of interchanging the limit and infimum operations in (\ref{equation705}). One may also consider the two-sided definition of nonanticipative RDF by replacing $R_{0,n}^{na}(D)$ by $R^{na}_{n_1,n_2}(D)$, $n_2>n_1$, in which case, the rate is defined by $\lim_{n_2-n_{1}\longrightarrow\infty}\frac{1}{n_2-n_1+1}R^{na}_{n_1,n_2}(D)$, provided the limit exists. However, the rate is defined if and only if the following limit is defined for some $n_1$: $\lim_{n_2\longrightarrow\infty}\frac{1}{n_2-n_{1}}R^{na}_{n_1,n_2}(D)$. Hence, without loss of generality, we let $n_1=0$.\\
First, we prove the following inequality.
\begin{lemma}{\ \\}\label{useful_inequality}
The following holds.
\begin{align}
R^{na}(D)\leq \overrightharpoonup{R}^{na}(D)\tri\inf_{\overrightarrow{Q}_{Y^\infty|X^\infty}(\cdot|x^\infty)\in\overrightarrow{\cal Q}_{0,\infty}({D})}\lim_{n\longrightarrow\infty}\frac{1}{n+1}\mathbb{I}_{X^n\rightarrow{Y^n}}(P_{X^n},\overrightarrow{Q}_{Y^n|X^n}).\label{equation62ii}
\end{align}
\end{lemma}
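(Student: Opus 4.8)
\noi The plan is to prove the bound in the familiar ``limit-of-infimum $\leq$ infimum-of-limit'' form, while carefully accounting for the fact that the two optimizations run over \emph{different} feasibility sets: in $R^{na}(D)=\lim_{n}\frac{1}{n+1}\inf_{\overrightarrow{\cal Q}_{0,n}(D)}\mathbb{I}_{X^n\rightarrow{Y^n}}$ the per-letter distortion constraint is imposed at every finite horizon $n$, whereas in $\overrightharpoonup{R}^{na}(D)$ one fixes a single $\overrightarrow{Q}_{Y^\infty|X^\infty}\in\overrightarrow{\cal Q}_{0,\infty}(D)$ whose truncations need only satisfy the distortion constraint asymptotically. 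Accordingly, I would first fix an arbitrary $\overrightarrow{Q}_{Y^\infty|X^\infty}\in\overrightarrow{\cal Q}_{0,\infty}(D)$ (if $\overrightharpoonup{R}^{na}(D)=\infty$ the bound is trivial) and reduce the claim to the single-sequence estimate
\begin{align}
R^{na}(D)\leq\lim_{n\rightarrow\infty}\frac{1}{n+1}\mathbb{I}_{X^n\rightarrow{Y^n}}(P_{X^n},\overrightarrow{Q}_{Y^n|X^n}),\nonumber
\end{align}
after which taking the infimum over all admissible $\overrightarrow{Q}_{Y^\infty|X^\infty}$ on the right-hand side gives (\ref{equation62ii}).

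\noi The point requiring care is that the finite-horizon truncation $\overrightarrow{Q}_{Y^n|X^n}$ of a fixed $\overrightarrow{Q}_{Y^\infty|X^\infty}\in\overrightarrow{\cal Q}_{0,\infty}(D)$ is a legitimate nonanticipative reproduction kernel (the convolution form (\ref{equation4}) is preserved under restriction to the first $n+1$ coordinates), yet its per-letter distortion $D_n\tri\frac{1}{n+1}\int d_{0,n}(x^n,y^n)(P_{X^n}\otimes\overrightarrow{Q}_{Y^n|X^n})(dx^n,dy^n)$ may exceed $D$ at any given $n$; only $\limsup_{n}D_n\leq D$ is guaranteed. To convert the asymptotic constraint into finite-horizon feasibility, I would fix $\epsilon>0$, choose $N$ with $D_n\leq D+\epsilon$ for all $n\geq N$, and observe that then $\overrightarrow{Q}_{Y^n|X^n}\in\overrightarrow{\cal Q}_{0,n}(D+\epsilon)$, so that by the definition of the finite-horizon infimum
\begin{align}
R^{na}_{0,n}(D+\epsilon)\leq\mathbb{I}_{X^n\rightarrow{Y^n}}(P_{X^n},\overrightarrow{Q}_{Y^n|X^n}),\quad n\geq N.\nonumber
\end{align}

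\noi Dividing by $n+1$ and letting $n\rightarrow\infty$ (the left limit defines $R^{na}(D+\epsilon)$ and the right limit exists by the choice of $\overrightarrow{Q}$, or one passes to $\liminf$) gives $R^{na}(D+\epsilon)\leq\lim_{n}\frac{1}{n+1}\mathbb{I}_{X^n\rightarrow{Y^n}}(P_{X^n},\overrightarrow{Q}_{Y^n|X^n})$; taking the infimum over $\overrightarrow{Q}_{Y^\infty|X^\infty}\in\overrightarrow{\cal Q}_{0,\infty}(D)$ then yields $R^{na}(D+\epsilon)\leq\overrightharpoonup{R}^{na}(D)$ for every $\epsilon>0$. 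The final and most delicate step is to let $\epsilon\downarrow 0$: monotonicity alone gives only $R^{na}(D)\geq R^{na}(D^+)$, which is the wrong direction, so I would invoke right-continuity of $D\mapsto R^{na}(D)$. This follows from convexity together with monotonicity, since Theorems~\ref{convexity_of_sets} and~\ref{convexity_properties} render each $R^{na}_{0,n}(\cdot)$ convex and nonincreasing (the distortion constraint being affine in $\overrightarrow{Q}_{Y^n|X^n}$), hence so is the limit $R^{na}(\cdot)$, and a finite convex function is continuous on the interior of its domain. This gives $R^{na}(D)=\lim_{\epsilon\downarrow 0}R^{na}(D+\epsilon)\leq\overrightharpoonup{R}^{na}(D)$. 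The crux of the argument, and the step most likely to demand attention, is exactly this reconciliation of the per-horizon and asymptotic distortion constraints, coupled with the right-continuity needed to remove the $\epsilon$-relaxation.
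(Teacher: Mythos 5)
Your proposal is correct, but it takes a genuinely different and more careful route than the paper. The paper's own proof is two lines: it writes $R^{na}_{0,n}(D)\leq\mathbb{I}_{X^n\rightarrow{Y^n}}(P_{X^n},\overrightarrow{Q}_{Y^n|X^n})$ for every $\overrightarrow{Q}_{Y^n|X^n}\in\overrightarrow{\cal Q}_{0,n}(D)$, passes to the limit in $n$, and then takes the infimum over $\overrightarrow{Q}_{Y^\infty|X^\infty}\in\overrightarrow{\cal Q}_{0,\infty}(D)$ --- thereby implicitly treating every truncation of an infinite-horizon admissible kernel as an element of $\overrightarrow{\cal Q}_{0,n}(D)$ for each $n$. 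You correctly identify that this silent step is exactly where the argument is fragile, since membership in $\overrightarrow{\cal Q}_{0,\infty}(D)$ only constrains the distortion asymptotically, and you repair it with the $\epsilon$-relaxation ($\overrightarrow{Q}_{Y^n|X^n}\in\overrightarrow{\cal Q}_{0,n}(D+\epsilon)$ for $n\geq N$) followed by right-continuity of $D\mapsto R^{na}(D)$, which is available from Lemma~\ref{convex_nonincreasing} since pointwise limits of convex nonincreasing functions remain convex and nonincreasing. The trade-off: the paper's argument is shorter and needs no structural facts about $R^{na}(\cdot)$, but is only valid if one either strengthens the definition of $\overrightarrow{\cal Q}_{0,\infty}(D)$ to require per-horizon feasibility or accepts the gap; your argument is airtight for $D$ in the interior of the finiteness domain but does not cover the left endpoint $D=D_{\min}$, where a convex nonincreasing function can fail to be right-continuous (it may jump down as $\epsilon\downarrow 0$), so strictly speaking your conclusion $R^{na}(D)=\lim_{\epsilon\downarrow 0}R^{na}(D+\epsilon)$ requires $D>D_{\min}$. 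This is a minor caveat worth stating explicitly, but it does not affect the substance of your proof.
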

\begin{proof}
If the infimum in (\ref{ex12}) does not exist, there is nothing to prove. Hence, suppose this infimum exists. By definition we have 
\begin{align*}
R^{na}_{0,n}(D)\leq\mathbb{I}_{X^n\rightarrow{Y^n}}(P_{X^n},\overrightarrow{Q}_{Y^n|X^n}),~\forall\overrightarrow{Q}_{Y^n|X^n}(\cdot|x^n)\in\overrightarrow{\cal Q}_{0,n}(D)
\end{align*}
and hence, by taking the limit on both sides we obtain
\begin{align*}
\lim_{n\longrightarrow\infty}\frac{1}{n+1}R^{na}_{0,n}(D)\leq\lim_{n\longrightarrow\infty}\frac{1}{n+1}\mathbb{I}_{X^n\rightarrow{Y^n}}(P_{X^n},\overrightarrow{Q}_{Y^n|X^n}),~\forall\overrightarrow{Q}_{Y^\infty|X^\infty}(\cdot|x^\infty)\in\overrightarrow{\cal Q}_{0,\infty}(D).
\end{align*}
Taking the infimum over $\overrightarrow{Q}_{Y^\infty|X^\infty}(\cdot|x^\infty)\in\overrightarrow{\cal Q}_{0,\infty}(D)$ we obtain (\ref{equation62ii}). This shows that $R^{na}(D)\leq\overrightharpoonup{R}^{na}(D)$.
\end{proof}

\noi Since by Theorem~\ref{equivalent_rdf}, $R^{na}_{0,n}(D)=R^{\varepsilon}_{0,n}(D)$,~$\forall{n}\in\mathbb{N}^n$, and since  existence (under certain assumptions) of solution to the information nonanticipative RDF (\ref{ex12}) is shown in \cite[Theorem III.4]{charalambous-stavrou-ahmed2014a}, all technical results derived in \cite[Theorems 1-4]{gorbunov-pinsker} are directly applicable to $R_{0,n}^{na}(D)$ and its rate, $R^{na}(D)$, without assuming finiteness of $R^{na}_{0,n}(D)$ for some $n$, as in \cite{gorbunov-pinsker}. Next, we summarize these results in order to conclude that the limit $\lim_{n\longrightarrow\infty}\frac{1}{n+1}{R}^{na}(D)$ is finite.
\vspace*{0.2cm}\\
\noi The following theorem is a direct consequence of Lemma~\ref{useful_inequality}, \cite[Theorem III.4]{charalambous-stavrou-ahmed2014a}, and Gorbunov-Pinsker \cite[Theorem 2]{gorbunov-pinsker}.
\begin{theorem}$(${Limits}$)$\label{fundamental_limits}{\ \\}
Suppose the conditions of \cite[Theorem III.4]{charalambous-stavrou-ahmed2014a} holds and the source is stationary as defined in \cite[Theorem 2]{gorbunov-pinsker}.\\ 
Then the following hold.
\begin{align}
{R}^{na}(D)=\lim_{n\longrightarrow\infty}\inf_{\overrightarrow{Q}_{Y^n|X^n}(\cdot|x^n)\in\overrightarrow{\cal Q}_{0,n}(D)}\frac{1}{n+1}\mathbb{I}_{X^n\rightarrow{Y^n}}(P_{X^n},\overrightarrow{Q}_{Y^n|X^n})<\infty\label{equation710}
\end{align}
e.g., the limit exists and it is finite. Moreover,
\begin{align}
{R}^{na}(D)=\overrightharpoonup{R}^{na}(D)\equiv\inf_{\overrightarrow{Q}_{Y^\infty|X^\infty}(\cdot|x^\infty)\in\overrightarrow{\cal Q}_{0,\infty}({D})}\lim_{n\longrightarrow\infty}\frac{1}{n+1}\mathbb{I}_{X^n\rightarrow{Y^n}}(P_{X^n},\overrightarrow{Q}_{Y^n|X^n}).\label{equation133i}
\end{align}
\end{theorem}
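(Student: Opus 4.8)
The plan is to transfer the entire statement to Gorbunov and Pinsker's nonanticipatory $\varepsilon$-entropy by way of Theorem~\ref{equivalent_rdf}, which gives $R^{na}_{0,n}(D)=R^{\varepsilon}_{0,n}(D)$ for every $n$, so that the technical machinery of \cite[Theorems 1--4]{gorbunov-pinsker} becomes available verbatim for the sequence $\{R^{na}_{0,n}(D)\}_{n\geq 0}$. The proof then splits into two parts: (i) establishing that the normalized limit in (\ref{equation710}) exists and is finite, and (ii) proving the two inequalities that together yield the interchange (\ref{equation133i}).

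For part (i), the key structural fact, supplied by stationarity of the source in the sense of \cite[Theorem 2]{gorbunov-pinsker}, is that the block-length-indexed sequence $(n+1)\mapsto R^{na}_{0,n}(D)$ is subadditive; intuitively, a nonanticipative reproduction scheme on a block of length $n_1+n_2+2$ is dominated by concatenating optimal schemes on two shorter consistent blocks, using the Markov characterization {\bf MC4} and time-invariance of the source. By Fekete's subadditivity lemma this yields existence of the limit together with $\lim_{n\longrightarrow\infty}\frac{1}{n+1}R^{na}_{0,n}(D)=\inf_{n}\frac{1}{n+1}R^{na}_{0,n}(D)$, which is precisely (\ref{equation710}). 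Finiteness follows from \cite[Theorem III.4]{charalambous-stavrou-ahmed2014a}: for each $n$ the infimum in (\ref{ex12}) is attained, and one may exhibit a concrete feasible $\overrightarrow{Q}_{Y^n|X^n}(\cdot|x^n)\in\overrightarrow{\cal Q}_{0,n}(D)$ (for instance a memoryless per-letter channel meeting the distortion level $D$) whose directed information grows at most linearly in $n$, giving a uniform bound on $\frac{1}{n+1}R^{na}_{0,n}(D)$ and hence finiteness of the limit.

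For part (ii), one inequality is already in hand: Lemma~\ref{useful_inequality} gives $R^{na}(D)\leq\overrightharpoonup{R}^{na}(D)$. For the reverse inequality I would use \cite[Theorem III.4]{charalambous-stavrou-ahmed2014a} to select, for each $n$, an optimal finite-horizon reproduction kernel $\overrightarrow{Q}^{*,n}_{Y^n|X^n}$ attaining $R^{na}_{0,n}(D)$. Invoking \cite[Theorem 2]{gorbunov-pinsker}, stationarity of the source allows these optimizers to be chosen so that the induced joint source-reproduction laws are consistent and time-invariant; extracting a weak limit then produces an infinite-horizon reproduction distribution $\overrightarrow{Q}_{Y^\infty|X^\infty}\in\overrightarrow{\cal Q}_{0,\infty}(D)$ that is jointly stationary with the source. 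Lower semicontinuity and convexity of the directed information functional (Theorem~\ref{convexity_properties}) guarantee that the distortion constraint survives the limit and that the per-sample directed information of this stationary pair equals $\lim_{n\longrightarrow\infty}\frac{1}{n+1}R^{na}_{0,n}(D)=R^{na}(D)$; since this distribution is feasible for the infimum defining $\overrightharpoonup{R}^{na}(D)$ in (\ref{equation64}), we obtain $\overrightharpoonup{R}^{na}(D)\leq R^{na}(D)$, and combining with Lemma~\ref{useful_inequality} gives the equality (\ref{equation133i}).

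The main obstacle is the reverse inequality of part (ii): one must produce a single infinite-horizon stationary reproduction distribution that is simultaneously feasible (the average-distortion constraint must be preserved under the weak limit) and rate-optimal (its directed-information rate must coincide with $R^{na}(D)$ rather than merely bounding it). This demands tightness and relative compactness of the family $\{\overrightarrow{Q}^{*,n}_{Y^n|X^n}\}$, lower semicontinuity of $n\mapsto\mathbb{I}_{X^n\rightarrow{Y^n}}(P_{X^n},\overrightarrow{Q}_{Y^n|X^n})$, and the Ces\`aro-type averaging inherent in stationary processes -- exactly the ingredients assembled in \cite[Theorems 1--4]{gorbunov-pinsker}, which is why the theorem is framed as a direct consequence of those results together with Lemma~\ref{useful_inequality} and \cite[Theorem III.4]{charalambous-stavrou-ahmed2014a}.
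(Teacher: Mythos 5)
Your proof follows essentially the same route as the paper's: transfer everything to $R^{\varepsilon}_{0,n}(D)$ via Theorem~\ref{equivalent_rdf}, obtain (\ref{equation710}) from subadditivity (which the paper simply imports from \cite[Lemma 1]{gorbunov-pinsker} rather than re-deriving it from a concatenation argument) together with finiteness from \cite[Theorem III.4]{charalambous-stavrou-ahmed2014a} and Fekete's lemma, and obtain (\ref{equation133i}) from Lemma~\ref{useful_inequality} combined with \cite[Theorem 2]{gorbunov-pinsker}. The extra detail you sketch for the reverse inequality in part (ii) (weak limits of the finite-horizon optimizers, lower semicontinuity) is exactly the machinery the cited Gorbunov--Pinsker theorem packages, so your argument is correct and matches the paper's.
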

\begin{proof}
The derivation utilizes the existence result derived in \cite[Theorem III.4]{charalambous-stavrou-ahmed2014a}, and the subadditivity of $R^{na}_{0,n}(D)$, that is, $R^{na}_{0,n}(D)\leq{R}_{0,k}^{na}(D)+R^{na}_{k+1,n}(D),~0<k<n$. Since by Theorem~\ref{equivalent_rdf} we have $R^{na}_{0,n}(D)=R^{\varepsilon}_{0,n}(D)$, and by \cite[Lemma 1]{gorbunov-pinsker},  $R^{\varepsilon}_{0,n}(D)$ is subadditive, then $R^{na}_{0,n}(D)$ is also subadditive. Moreover, under the conditions of \cite[Theorem III.4]{charalambous-stavrou-ahmed2014a} we know that $R^{na}_{0,n}(D)$ is finite for any finite $n\in\mathbb{N}$. Utilizing this and the subadditivity of $R^{na}_{0,n}(D)$ we deduce that the limit in (\ref{equation710}) exist and it is finite. (\ref{equation133i}) follows from \cite[Theorem 2]{gorbunov-pinsker}, the stationarity of the source, and Theorem~\ref{equivalent_rdf}, which states that $R^{na}_{0,n}(D)=R^{\varepsilon}_{0,n}(D)$.
\end{proof}
\noi Often, in the derivation of classical RDF rate it is assumed that the process $\{(X_n,Y_n):~n\in\mathbb{N}\}$ is jointly stationary. This is not very natural because one does not know \'a priori whether the reproduction process is stationary. The next theorem utilizes \cite[Theorem III.4]{charalambous-stavrou-ahmed2014a} and the main theorems in \cite{gorbunov-pinsker} to establish that the infimum of the information nonanticipative RDF rate is achieved by a stationary reproduction distribution, which implies the process $\{(X_n,Y_n):~n\in\mathbb{N}\}$ is jointly stationary.
\begin{theorem}$(${Stationarity of the optimal reproduction distribution}$)$\label{joint_stationarity}{\ \\}
Suppose the conditions of \cite[Theorem III.4]{charalambous-stavrou-ahmed2014a} hold, the source $\{X_n:n\in\mathbb{N}\}$ is stationary and consistent, and the fidelity set is shift invariant as defined in \cite{gorbunov-pinsker}.\\
Then, the infimum in (\ref{equation133i}) is achieved by some $\overrightarrow{Q}^*_{Y^n|X^n}(\cdot|x^n)\in\overrightarrow{\cal Q}_{0,n}(D)$ such that the source-reproduction pair $\{(X_n,Y_n):~n\in\mathbb{N}\}$ is jointly stationary.
\end{theorem}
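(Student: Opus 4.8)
The plan is to establish joint stationarity by combining three ingredients already available in the excerpt: the finite-horizon existence result of \cite[Theorem III.4]{charalambous-stavrou-ahmed2014a}, the equivalence $R^{na}_{0,n}(D)=R^{\varepsilon}_{0,n}(D)$ from Theorem~\ref{equivalent_rdf}, and the structural theorems of Gorbunov and Pinsker \cite{gorbunov-pinsker} that apply under stationarity, consistency, and shift-invariance of the fidelity set. First I would invoke Theorem~\ref{fundamental_limits} to guarantee that the infimum in (\ref{equation133i}) is actually attained; by the equivalence with the $\epsilon$-entropy, the optimizer is a reproduction distribution $\overrightarrow{Q}^*_{Y^n|X^n}$ satisfying the causality Markov chain \textbf{MC4} of Lemma~\ref{equivalent_statements}, so it lies in the nonanticipative class. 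The goal is then to show this optimizer can be taken to generate a jointly stationary pair.

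The core of the argument is a shift-invariance / averaging technique. I would introduce the shift operator $T$ on the source sequence space and exploit that $\{X_n:n\in\mathbb{N}\}$ is stationary and consistent and that the fidelity set $\overrightarrow{\cal Q}_{0,n}(D)$ is shift invariant. The key step is to show that if $\overrightarrow{Q}^*_{Y^n|X^n}$ is optimal, then its time-shifted version is also feasible and achieves the same directed-information cost, by stationarity of the source and shift-invariance of the constraint. Then, using the convexity of the feasible set (Theorem~\ref{convexity_of_sets}) and the convexity of $\mathbb{I}_{X^n\rightarrow Y^n}(P_{X^n},\cdot)$ in the reproduction kernel for fixed source (Theorem~\ref{convexity_properties}), a Ces\`aro average of the shifted optimizers remains feasible and does no worse in cost; passing to a limit along this average (and using the existence/compactness underlying \cite[Theorem III.4]{charalambous-stavrou-ahmed2014a}) yields an optimizer that is itself shift invariant. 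A shift-invariant nonanticipative reproduction kernel driven by a stationary source produces a jointly stationary source-reproduction pair $\{(X_n,Y_n):n\in\mathbb{N}\}$, which is the claim.

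Technically, the realization of each intermediate step is supplied by \cite{gorbunov-pinsker}: I would cite their result that, under stationarity of the source and shift-invariance of the fidelity set, the extremizing reproduction distribution of the nonanticipatory $\epsilon$-entropy can be chosen stationary. Because Theorem~\ref{equivalent_rdf} identifies $R^{na}_{0,n}(D)$ with $R^{\varepsilon}_{0,n}(D)$ at every horizon, and because Theorem~\ref{fundamental_limits} already transfers the limiting statement (\ref{equation133i}) from the $\epsilon$-entropy setting to the nonanticipative setting, the stationarity conclusion transfers verbatim as well: the $\overrightarrow{Q}^*_{Y^n|X^n}$ attaining the $\epsilon$-entropy infimum is exactly the one attaining the $R^{na}$ infimum, so its stationarity is inherited.

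The main obstacle I anticipate is the interchange of the limit $n\to\infty$ with the stationarization/averaging procedure: one must ensure that the Ces\`aro-averaged reproduction kernels do not drift out of the feasible set $\overrightarrow{\cal Q}_{0,\infty}(D)$ in the limit, and that the per-symbol directed-information rate is lower semicontinuous along the averaging so that optimality is preserved in the limit. This is precisely where the existence result \cite[Theorem III.4]{charalambous-stavrou-ahmed2014a} must be leaned upon for the requisite compactness and lower-semicontinuity of $\mathbb{I}_{X^n\rightarrow Y^n}$, rather than re-derived; the remaining bookkeeping---verifying that shift-invariance of the kernel plus stationarity of the source yields joint stationarity of the pair---is a routine consequence of the Markov chain structure \textbf{MC4} together with the definition of the joint law via the construction \textbf{P1}.
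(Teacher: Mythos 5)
Your proposal is correct and follows essentially the same route as the paper's proof, whose entire argument is: the infimum is achieved by \cite[Theorem III.4]{charalambous-stavrou-ahmed2014a}, the identification $R^{na}_{0,n}(D)=R^{\varepsilon}_{0,n}(D)$ of Theorem~\ref{equivalent_rdf} transfers the problem to the nonanticipatory $\epsilon$-entropy setting, and \cite[Theorem 4]{gorbunov-pinsker} then yields joint stationarity. The shift-and-Ces\`aro-average construction you sketch is an unpacking of what the cited Gorbunov--Pinsker theorem does internally rather than a new ingredient, so the two arguments coincide in substance.
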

\begin{proof}
By \cite[Theorem III.4]{charalambous-stavrou-ahmed2014a} the infimum is achieved, and by Theorem~\ref{equivalent_rdf}, we have $R^{na}_{0,n}(D)=R^{\varepsilon}_{0,n}(D)$. By invoking \cite[Theorem 4]{gorbunov-pinsker} we establish the claim of stationarity of the joint process.
\end{proof}
\noi Therefore, by utilizing \cite[Theorem III.4]{charalambous-stavrou-ahmed2014a} we have strengthened the results described in \cite[Theorems 1-4]{gorbunov-pinsker}, which are based on the assumption that $R^{\varepsilon}_{0,n}(D)$ is finite for some finite $n\in\mathbb{N}$ and the infimum is achieved. 

\section{Optimal Reproduction of Nonanticipative RDF, Properties, and Examples}\label{optimal_reproduction_properties_examples}

\par In this section, we use the closed form expression of the optimal stationary reproduction corresponding to $R^{na}(D)$ \cite[Section IV]{charalambous-stavrou-ahmed2014a} to derive an alternative characterization of the solution to the nonanticipative RDF, $R^{na}(D)$, and to introduce certain of its properties. Then we apply them to compute $R^{na}(D)$ for the two running examples, the BSMS($p$) and the general multidimensional partially observed stationary Gaussian-Markov source defined by (\ref{equation1o}).\\
\noi The main assumption we impose is the following. 
\begin{assumption}\label{stationarity}{\ \\}
The $(n+1)$-fold convolution of causal conditional distribution $\overrightarrow{Q}_{Y^n|X^n}(\cdot|x^n)=\otimes^n_{i=0}Q_{Y_i|Y^{i-1},X^i}(\cdot|y^{i-1},x^i)$ which achieves the infimum of $R^{na}_{0,n}(D)$, is a convolution of stationary conditional distributions.
\end{assumption}

\subsection{Stationary optimal reproduction distribution}\label{optimal_reproduction_distribution_section}

\par Clearly, (\ref{ex12}) is a constrained problem, which is convex due to the convexity of the fidelity set, and the convexity of $\mathbb{I}_{X^n\rightarrow{Y^n}}(P_{X^n},\cdot)$, as a functional of $\overrightarrow{Q}_{Y^n|X^n}(\cdot|\cdot)\in\overrightarrow{\cal Q}({\cal Y}_{0,n}|{\cal X}_{0,n})$, (see Theorem~\ref{convexity_properties}). Therefore, we apply duality theory \cite{dluenberger69} to convert the constrained problem into an unconstrained problem using Lagrange multipliers, and then we verify the equivalence of the constrained and unconstrained problems. This procedure is done in \cite[Theorem IV.3]{charalambous-stavrou-ahmed2014a} hence, it is omitted; instead we state the main theorem.
\begin{theorem} $(${Optimal stationary reproduction distribution}$)$\label{th6}{\ \\}
Suppose Assumption~\ref{stationarity} and the conditions of \cite[Theorem IV.4]{charalambous-stavrou-ahmed2014a} hold, and $d_{0,n}=\sum_{i=0}^n\rho(T^ix^n,T^iy^n)$, $T^ix^n\subset\{x_0,\ldots,x_i\}$, $T^iy^n\subset\{y_0,\ldots,y_i\}$,~$i=0,\ldots,n$.\\
The following hold.
\begin{description}
\item[{\bf(1)}] The infimum is attained at $\overrightarrow{Q}^*_{Y^n|X^n}(\cdot|x^n)\in\overrightarrow{\cal Q}_{0,n}(D)$ given by\footnote{Due to stationarity assumption $P_{Y_i|Y^{i-1}}^*(\cdot|y^{i-1})=P^*(\cdot|y^{i-1})$ and $Q^*_{Y_i|Y^{i-1},X^i}(\cdot|y^{i-1},x^i)=Q^*(\cdot|y^{i-1},x^i)$.}
\begin{align}
&\overrightarrow{Q}^*_{Y^n|X^n}(\times_{i=0}^n{B}_i|x^n)=\int_{B_0} {Q}^*_{Y_0|X_0}(dy_0|x_0)\ldots \int_{B_n} {Q}^*_{Y_n|Y^{n-1},X^n}(dy_n|y^{n-1}, x^n)   \label{ex144}
\end{align}
where 
\begin{align}
{Q}^*_{Y_i|Y^{i-1},X^i}(dy_i|y^{i-1},x^i)=\frac{{e}^{s\rho(T^ix^n,T^iy^n)}P^*_{Y_i|Y^{i-1}}(dy_i|y^{i-1})}{\int_{{\cal Y}_i} e^{s \rho(T^ix^n,T^iy^n)} P^*_{Y_i|Y^{i-1}}(dy_i|y^{i-1})},~~i=0,1,\ldots,n, \: s\leq{0}\label{ex14}
\end{align}
and $P^*_{Y_i|Y^{i-1}}(\cdot|\cdot)\in {\cal Q}_i({\cal Y}_i|{\cal Y}_{0,{i-1}})$,~$i=0,1,\ldots,n$. 
\item[{\bf(2)}] The information nonanticipative RDF is given by
\begin{align}
{R}^{na}_{0,n}(D)&=sD(n+1)-\sum_{i=0}^n\int_{{\cal X}_{0,i}\times{\cal Y}_{0,n-1}}\log \Big( \int_{{\cal Y}_{i}} e^{s\rho(T^ix^n,T^iy^n)}P^*_{Y_i|Y^{i-1}}(dy_i|y^{i-1})\Big)\nonumber\\
&\qquad\times\overrightarrow{Q}_{Y^{i-1}|X^{i-1}}^*(dy^{i-1}|x^{i-1})\otimes{P}_{X^i}(dx^i).\label{ex15}
\end{align}
Moreover, if ${R}^{na}_{0,n}(D) > 0$ then $ s < 0$,  and
\begin{align}
\frac{1}{n+1}\sum_{i=0}^n\int_{{\cal X}_{0,i}\times{\cal Y}_{0,i}}\rho(T^ix^n,T^iy^n)\overrightarrow{Q}^*_{Y^i|X^i}(dy^i|x^i)\otimes{P}_{X^i}(dx^i)=D.\label{eq.7}
\end{align}
\end{description}
\end{theorem}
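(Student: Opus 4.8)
The plan is to treat (\ref{ex12}) as a convex program and solve it by Lagrange duality combined with a stage-wise (dynamic-programming) decomposition that exploits the causal convolution structure of $\overrightarrow{Q}_{Y^n|X^n}=\otimes_{i=0}^nQ_{Y_i|Y^{i-1},X^i}$ together with the additivity $d_{0,n}=\sum_{i=0}^n\rho(T^ix^n,T^iy^n)$. First I would invoke the chain rule of directed information to write the objective as a telescoping sum of conditional relative entropies,
\[
\mathbb{I}_{X^n\rightarrow{Y^n}}(P_{X^n},\overrightarrow{Q}_{Y^n|X^n})=\sum_{i=0}^n\int\mathbb{D}\big(Q_{Y_i|Y^{i-1},X^i}(\cdot|y^{i-1},x^i)\,\|\,P_{Y_i|Y^{i-1}}(\cdot|y^{i-1})\big)\,\overrightarrow{Q}_{Y^{i-1}|X^{i-1}}(dy^{i-1}|x^{i-1})\otimes{P}_{X^i}(dx^i),
\]
and the constraint as a corresponding per-stage sum. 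Since the fidelity set is convex and $\mathbb{I}_{X^n\rightarrow{Y^n}}(P_{X^n},\cdot)$ is convex in $\overrightarrow{Q}_{Y^n|X^n}$ by Theorem~\ref{convexity_properties}, I would attach a single multiplier $s\leq0$ to the average-distortion constraint and minimize the unconstrained Lagrangian $F_s\tri\mathbb{I}_{X^n\rightarrow{Y^n}}-s\int d_{0,n}(P_{X^n}\otimes\overrightarrow{Q})$; strong duality and equivalence of the constrained and unconstrained problems then follow from convexity and a Slater-type constraint qualification \cite{dluenberger69}.

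Next I would minimize $F_s$ one stage at a time. Because $F_s$ splits into a sum over $i$ of the per-stage functionals $\mathbb{D}(Q_{Y_i|Y^{i-1},X^i}\,\|\,P_{Y_i|Y^{i-1}})-s\int\rho\,Q_{Y_i|Y^{i-1},X^i}$, the inner minimization over $Q_{Y_i|Y^{i-1},X^i}(\cdot|y^{i-1},x^i)$ for fixed $(y^{i-1},x^i)$ reduces, after completing the exponential, to minimizing $\mathbb{D}\big(Q\,\|\,e^{s\rho}P^*_{Y_i|Y^{i-1}}\big)$ over probability measures $Q$ on $\mathcal{Y}_i$. By the Gibbs variational principle (Donsker--Varadhan), the minimizer is the exponentially tilted measure, giving exactly (\ref{ex14}), with minimal value $-\log\int_{\mathcal{Y}_i}e^{s\rho(T^ix^n,T^iy^n)}P^*_{Y_i|Y^{i-1}}(dy_i|y^{i-1})$.

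The crux, and the step I expect to be the main obstacle, is that $P^*_{Y_i|Y^{i-1}}$ appearing in (\ref{ex14}) is the marginal induced by the very kernel being optimized, so (\ref{ex14}) is an implicit fixed-point (consistency) condition rather than a closed form. I would handle this by the standard double-minimization device: replace $P_{Y_i|Y^{i-1}}$ by a free conditional kernel $R_{Y_i|Y^{i-1}}$, note that $\mathbb{I}_{X^n\rightarrow{Y^n}}=\min_R\sum_i\mathbb{E}\big[\mathbb{D}(Q_{Y_i|Y^{i-1},X^i}\,\|\,R_{Y_i|Y^{i-1}})\big]$ with the minimizer at $R=P_{Y_i|Y^{i-1}}$, and then use joint convexity to upgrade the resulting first-order stationarity conditions to global optimality. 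The optimal pair $(Q^*,P^*)$ is thus characterized by the simultaneous tilted form (\ref{ex14}) and the requirement that $P^*_{Y_i|Y^{i-1}}$ be the marginal of $Y_i$ under $Q^*$; attainment of the infimum and uniqueness of the tilted kernels follow from strict convexity of relative entropy together with the existence hypotheses imported from \cite[Theorem IV.4]{charalambous-stavrou-ahmed2014a}, which also license the stationarity asserted in Assumption~\ref{stationarity}.

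Part \textbf{(2)} is then a substitution. Inserting the optimal kernels into $\mathbb{I}_{X^n\rightarrow{Y^n}}=F_s^{*}+s\int d_{0,n}(P_{X^n}\otimes\overrightarrow{Q}^*)$ and using $F_s^{*}=-\sum_i\mathbb{E}\big[\log\int_{\mathcal{Y}_i}e^{s\rho}P^*_{Y_i|Y^{i-1}}\big]$ yields (\ref{ex15}) once the active constraint $\frac{1}{n+1}\int d_{0,n}=D$ is invoked, i.e. $s\int d_{0,n}=sD(n+1)$. To justify the sign of $s$ and the activity of the constraint I would argue by complementary slackness: if $s=0$ then (\ref{ex14}) collapses to $Q^*=P^*$, forcing $\mathbb{I}_{X^n\rightarrow{Y^n}}=0$ and hence $R^{na}_{0,n}(D)=0$; therefore $R^{na}_{0,n}(D)>0$ implies $s<0$, and strict negativity of $s$ makes the distortion constraint bind, giving (\ref{eq.7}). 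Finally, the monotonicity and convexity of $D\mapsto R^{na}_{0,n}(D)$ confirm that this binding regime is precisely the region where $R^{na}_{0,n}(D)>0$.
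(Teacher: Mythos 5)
Your proposal is correct and follows essentially the same route the paper takes: the paper's own treatment (sketched just before the theorem and carried out in the cited reference \cite[Theorem IV.4]{charalambous-stavrou-ahmed2014a}) is precisely to exploit convexity of the fidelity set and of $\mathbb{I}_{X^n\rightarrow Y^n}(P_{X^n},\cdot)$, pass to an unconstrained Lagrangian with multiplier $s\leq 0$, and identify the per-stage exponentially tilted kernels together with the marginal consistency condition. Your additional detail — the chain-rule decomposition into conditional relative entropies, the Gibbs variational step, the double-minimization device for the implicit marginal $P^*_{Y_i|Y^{i-1}}$, and the complementary-slackness argument for $s<0$ — fills in exactly the machinery the paper omits by citation.
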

\begin{proof}
The proof is described in \cite[Theorem IV.4]{charalambous-stavrou-ahmed2014a}.
\end{proof}

\noi The RHS term of (\ref{ex14}) determines, for each $i=0,1,\ldots$, the dependence of the reproduction distribution $Q_{Y_i|Y^{i-1},X^i}^*(\cdot|y^{i-1},x^i)$ on the past reproduction $y^{i-1}$ and the past and present source symbols $x^i$. Below we list a few observations regarding the information structure of the optimal stationary reproduction conditional distribution corresponding to $R_{0,n}^{na}(D)$. We shall use some of these in specific examples.
\begin{remark}$(${Information structures of the optimal stationary reproduction distribution}$)$\label{markov_stationary}
\begin{itemize}
\item[{\bf (1)}] If $\{X_n:n\in\mathbb{N}\}$ is stationary Gaussian and $\rho(T^ix^n,T^iy^n)=||x_i-y_i||^2$, a quadratic function of $(x^n,y^n)$, then for each $(y^{i-1},x^i)\in{\cal Y}_{0,i-1}\times{\cal X}_{0,i}$, $Q_{Y_i|Y^{i-1},X^i}^*(\cdot|y^{i-1},x^i)$ is Gaussian. This follows from the fact that the exponent in the RHS of (\ref{ex14}) is quadratic in $(x_{i},y_i)\in{\cal X}_{i}\times{\cal Y}_{i}$, and thus by assuming $P_{Y_i|Y^{i-1}}^*(\cdot|y^{i-1})$ is conditionally Gaussian then the RHS of (\ref{ex14}) will be of exponential quadratic form in $(x_i,y^i)$. Hence, this RHS can be matched by a conditional Gaussian distribution for $Q_{Y_i|Y^{i-1},X^i}^*(\cdot|y^{i-1},x^i)$. The procedure is standard and involves completion of squares.
\item[{\bf (2)}] If the distortion function is $\rho(T^ix^n,T^iy^n)=\rho(x_i,y^i)$ then
\begin{align*}
{Q}_{Y_i|Y^{i-1},X^i}^*(\cdot|y^{i-1},x^i)={Q}_{Y_i|Y^{i-1},X^i}^*(\cdot|y^{i-1},x_i)-a.a.~(x^i,y^{i-1}),~i=0,1,\ldots.
\end{align*}
\noi That is, the reproduction conditional distribution is Markov in $\{X_n:n\in\mathbb{N}\}$. However, even if we further restrict the distortion function to single letter $\rho(x_i,y_i)$, we cannot deduce how far into the past $Q_{Y_i|Y^{i-1},X^i}^*(\cdot|y^{i-1},x_i)$ depends on the reproduction symbols $y^{i-1}$. If the distortion function is of the form $\rho(x_i,x_{i-1},y^i)$ then
\begin{align*}
{Q}_{Y_i|Y^{i-1},X^i}^*(\cdot;y^{i-1},x^i)={Q}_{Y_i|Y^{i-1},X_i,X_{i-1}}^*(\cdot|y^{i-1},x_i,x_{i-1})-a.a. (x^i,y^{i-1}),~i=0,1,\ldots.
\end{align*}
\end{itemize}
\end{remark}
Despite the above observations, for specific sources with memory, we need additional properties of the solution of the nonanticipative RDF to determine the dependence of the optimal reproduction distribution on past reproduction symbols. The main properties are introduced below. 

\begin{lemma}({Convexity and monotonicity of $R^{na}(D)$}){\ \\} \label{convex_nonincreasing}
${R}^{na}_{0,n}(D)$ is a convex, nonincreasing function of $D\in[0,\infty)$.
\end{lemma}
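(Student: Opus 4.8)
The plan is to prove the two properties—convexity and monotonicity of $R^{na}_{0,n}(D)$ as a function of $D$—separately, since each follows from structural facts already established in the excerpt. Both arguments rest on the same two ingredients: the convexity of the functional $\mathbb{I}_{X^n\rightarrow{Y^n}}(P_{X^n},\cdot)$ in its second argument (Theorem~\ref{convexity_properties}), and the geometry of the fidelity sets $\overrightarrow{\cal Q}_{0,n}(D)$ defined by the linear distortion constraint in (\ref{eq2}).

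First I would establish monotonicity, which is the easier half. The key observation is that the constraint defining $\overrightarrow{\cal Q}_{0,n}(D)$ is an inequality $\frac{1}{n+1}\int d_{0,n}\,(P_{X^n}\otimes\overrightarrow{Q}_{Y^n|X^n})\le D$, so if $D_1\le D_2$ then $\overrightarrow{\cal Q}_{0,n}(D_1)\subseteq\overrightarrow{\cal Q}_{0,n}(D_2)$: any causal reproduction distribution meeting the tighter distortion budget also meets the looser one. Since $R^{na}_{0,n}$ is an infimum of the same functional over a larger feasible set for $D_2$, we get $R^{na}_{0,n}(D_2)\le R^{na}_{0,n}(D_1)$, which is precisely the statement that $R^{na}_{0,n}(\cdot)$ is nonincreasing. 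No properties of the integrand beyond the monotone inclusion of sublevel sets are needed here.

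Next I would prove convexity. Fix $D_1,D_2\in[0,\infty)$ and $\lambda\in[0,1]$, set $D_\lambda=\lambda D_1+(1-\lambda)D_2$, and pick (near-)optimizers $\overrightarrow{Q}^1_{Y^n|X^n}\in\overrightarrow{\cal Q}_{0,n}(D_1)$ and $\overrightarrow{Q}^2_{Y^n|X^n}\in\overrightarrow{\cal Q}_{0,n}(D_2)$. I would form the mixture $\overrightarrow{Q}^\lambda_{Y^n|X^n}\tri\lambda\overrightarrow{Q}^1_{Y^n|X^n}+(1-\lambda)\overrightarrow{Q}^2_{Y^n|X^n}$ and check that it is feasible for $D_\lambda$: because the distortion constraint is linear (affine) in the reproduction distribution, the average distortion under the mixture is exactly the convex combination of the two average distortions, hence $\le\lambda D_1+(1-\lambda)D_2=D_\lambda$. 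It is here that Theorem~\ref{convexity_of_sets} is essential, since it guarantees that the mixture $\overrightarrow{Q}^\lambda_{Y^n|X^n}$ is again a \emph{causal} (convolution-form) reproduction distribution of the type (\ref{equation4}), so it genuinely lies in $\overrightarrow{\cal Q}_{0,n}(D_\lambda)$ and not merely in the larger set of arbitrary conditionals. Then by feasibility and convexity of the functional (Theorem~\ref{convexity_properties}),
\begin{align*}
R^{na}_{0,n}(D_\lambda)&\le\mathbb{I}_{X^n\rightarrow{Y^n}}(P_{X^n},\overrightarrow{Q}^\lambda_{Y^n|X^n})\\
&\le\lambda\,\mathbb{I}_{X^n\rightarrow{Y^n}}(P_{X^n},\overrightarrow{Q}^1_{Y^n|X^n})+(1-\lambda)\mathbb{I}_{X^n\rightarrow{Y^n}}(P_{X^n},\overrightarrow{Q}^2_{Y^n|X^n}).
\end{align*}
Taking infima over the two optimizers on the right yields $R^{na}_{0,n}(D_\lambda)\le\lambda R^{na}_{0,n}(D_1)+(1-\lambda)R^{na}_{0,n}(D_2)$, i.e. convexity.

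\emph{The main obstacle} I anticipate is not the chain of inequalities but the verification that the mixture stays inside the nonanticipative feasible set. Unlike the classical RDF, where the feasible set is all of $\overrightarrow{\cal Q}({\cal Y}_{0,n}|{\cal X}_{0,n})$ intersected with a distortion constraint, here one must preserve the causal convolution structure $\otimes_{i=0}^n P_{Y_i|Y^{i-1},X^i}$ under convex combination, and a naive mixing of the factorizations would not obviously factor causally. This is exactly why I would lean on Theorem~\ref{convexity_of_sets} rather than re-derive it; with that result in hand, both feasibility and the convexity of the objective are immediate, and the routine step of confirming that the average distortion mixes affinely completes the argument.
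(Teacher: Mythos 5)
Your proposal is correct and follows essentially the same route as the paper: the paper's (one-line) proof likewise invokes Theorem~\ref{convexity_of_sets} for convexity of the feasible set $\overrightarrow{\cal Q}_{0,n}(D)$ and Theorem~\ref{convexity_properties} for convexity of $\mathbb{I}_{X^n\rightarrow{Y^n}}(P_{X^n},\cdot)$, with monotonicity implicit from the nesting of the fidelity sets. You have simply spelled out the standard mixing argument that the paper leaves to the reader, including the correct observation that the mixture must be checked to remain in the causal convolution family.
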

\begin{proof}
By Theorem~\ref{convexity_of_sets} the set $\overrightarrow{\cal Q}_{0,n}(D)$ is convex, and by Theorem~\ref{convexity_properties}, $\mathbb{I}_{X^n\rightarrow{Y^n}}(P_{X^n},\overrightarrow{Q}_{Y^n|X^n})$ is a convex functional of $\overrightarrow{Q}_{Y^n|X^n}(\cdot|x^n)\in\overrightarrow{\cal Q}({\cal Y}_{0,n}|{\cal X}_{0,n})$ for a fixed $P_{X^n}(dx^n)\in{\cal M}({\cal X}_{0,n})$. Hence, the result follows.
\end{proof}

\noi In the next lemma we identify minimum value of $D$ called $D_{max}$, beyond which $R^{na}_{0,n}(D)=0$.
\begin{lemma}($D_{max}$)\label{property5}{\ \\}
$R^{na}_{0,n}(D) > 0$ for all $D < D_{max}$ and $R^{na}_{0,n}(D)=0$ for all $D \geq D_{max}$, where
\begin{align*}
D_{max}\tri \min_{y^n\in {\cal Y}_{0,n}}\frac{1}{n+1}\sum_{i=0}^n \int_{{\cal X}_{0,i}}\rho(T^ix^n,T^iy^n)P_{X^i}(dx^i)
\end{align*}
provided the minimum exists.
\end{lemma}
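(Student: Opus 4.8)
The plan is to characterize exactly when the directed information vanishes and then to play this characterization against the definition of $D_{max}$. The key preliminary fact I would record is that, for a fixed source $P_{X^n}$, the functional $\mathbb{I}_{X^n\rightarrow{Y^n}}(P_{X^n},\overrightarrow{Q}_{Y^n|X^n})$ is nonnegative and equals zero precisely when the reproduction is independent of the source. This is immediate from (\ref{equation33})--(\ref{equation203}): since $\mathbb{I}_{X^n\rightarrow{Y^n}}=\mathbb{D}(P_{X^n}\otimes\overrightarrow{Q}_{Y^n|X^n}||\overrightarrow{\Pi}_{0,n})$ is a relative entropy it is nonnegative, and it vanishes if and only if the joint measure $P_{X^n}\otimes\overrightarrow{Q}_{Y^n|X^n}$ coincides with the product $\overrightarrow{\Pi}_{0,n}=P_{X^n}\times P_{Y^n}$ of {\bf P3}, equivalently $\overrightarrow{Q}_{Y^n|X^n}(\cdot|x^n)=P_{Y^n}(\cdot)$ for $P_{X^n}$-almost all $x^n$.

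For the regime $D\geq D_{max}$, I would exhibit an explicit feasible reproduction of zero rate. Let $y^{*,n}\in{\cal Y}_{0,n}$ attain the minimum defining $D_{max}$, and take the degenerate causal reproduction $\overrightarrow{Q}_{Y^n|X^n}(\cdot|x^n)=\otimes_{i=0}^n\delta_{y^*_i}(\cdot)$, which is of the convolution form (\ref{equation4}) with each factor constant in $(y^{i-1},x^i)$, hence admissible. Being source-independent, its directed information is zero by the preliminary step. Using $d_{0,n}=\sum_{i=0}^n\rho(T^ix^n,T^iy^n)$ together with $T^ix^n\subset\{x_0,\ldots,x_i\}$, the induced average distortion reduces from an integral over $P_{X^n}$ to one over the marginals and equals $\frac{1}{n+1}\sum_{i=0}^n\int_{{\cal X}_{0,i}}\rho(T^ix^n,T^iy^{*,n})P_{X^i}(dx^i)=D_{max}\leq D$, so this point lies in $\overrightarrow{\cal Q}_{0,n}(D)$. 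Since the objective is nonnegative and this feasible point achieves the value $0$, the infimum is $0$, i.e. $R^{na}_{0,n}(D)=0$.

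For the regime $D<D_{max}$, I would show no feasible reproduction can have zero rate. Indeed, if $\overrightarrow{Q}_{Y^n|X^n}$ had zero directed information it would be source-independent, and the same reduction gives its average distortion as $\int_{{\cal Y}_{0,n}}g(y^n)P_{Y^n}(dy^n)$, where $g(y^n)\tri\frac{1}{n+1}\sum_{i=0}^n\int_{{\cal X}_{0,i}}\rho(T^ix^n,T^iy^n)P_{X^i}(dx^i)\geq D_{max}$ by the very definition of $D_{max}$ as the minimum of $g$. Hence its distortion is at least $D_{max}>D$, contradicting membership in $\overrightarrow{\cal Q}_{0,n}(D)$; thus every feasible reproduction has strictly positive directed information. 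To pass from ``every feasible point is positive'' to ``the infimum is positive'' I would invoke attainment of the infimum from \cite[Theorem III.4]{charalambous-stavrou-ahmed2014a} (already used in Theorem~\ref{fundamental_limits}): the optimizer $\overrightarrow{Q}^*_{Y^n|X^n}$ is itself feasible, hence source-dependent, so $R^{na}_{0,n}(D)=\mathbb{I}_{X^n\rightarrow{Y^n}}(P_{X^n},\overrightarrow{Q}^*_{Y^n|X^n})>0$.

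The main obstacle is precisely this last passage in the case $D<D_{max}$: an infimum of strictly positive numbers may still be $0$, so the argument genuinely rests on attainment rather than on the elementary ``every feasible point is positive'' observation. The cleanest route is to cite the existence theorem of \cite{charalambous-stavrou-ahmed2014a}, which is already imported elsewhere; alternatively one could run a lower-semicontinuity and compactness argument on the convex feasible set, noting that a minimizing sequence would converge to a source-independent and therefore infeasible limit. A secondary, purely routine point to verify is the admissibility of the degenerate reproduction as an element of $\overrightarrow{\cal Q}_{0,n}(D)$ and the reduction of the average distortion to the marginals $P_{X^i}$, both of which follow directly from the causal structure $T^ix^n\subset\{x_0,\ldots,x_i\}$.
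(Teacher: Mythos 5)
Your proof is correct and follows exactly the route the paper intends: the paper omits its own proof with the remark that the derivation is ``similar to the one for the classical RDF,'' and your argument --- zero directed information iff source--reproduction independence, the degenerate causal reproduction $\otimes_{i=0}^n\delta_{y_i^*}$ achieving distortion $D_{max}$ at zero rate, and independence forcing average distortion at least $D_{max}$ --- is precisely that classical argument adapted to the convolution form (\ref{equation4}). You also correctly isolate the only genuinely delicate step, passing from ``every feasible point has strictly positive rate'' to ``the infimum is strictly positive,'' which does require either the attainment result of \cite[Theorem III.4]{charalambous-stavrou-ahmed2014a} or a lower-semicontinuity/Pinsker-type argument on a minimizing sequence; the paper glosses over this entirely.
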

\begin{proof}
The derivation is similar to the one for the classical RDF, hence it is omitted.
\end{proof}

\noi In the next lemma we give an alternative equivalent characterization of the optimal reproduction conditional distribution.
\begin{lemma}(Equivalent characterization of optimal stationary reproduction) \label{property3}{\ \\}
\noi The solution to the minimization problem of nonanticipative RDF defined by (\ref{ex15}) is such that
\begin{align}
\int_{{{\cal X}_{0,i}}} e^{s
\rho(T^ix^n,T^iy^n)} \lambda_{i}(x^i,y^{i-1}){P}_{X^i|Y^{i-1}}^*(dx^{i}|y^{i-1}) =1,~P_{Y_i|Y^{i-1}}^*(dy_i|y^{i-1})-a.s.,~\forall{i}\in\mathbb{N}^n\label{alternative-characterization1}
\end{align}
where 
\begin{align}
\lambda_{i}(x^i,y^{i-1})=\Big( \int_{{\cal Y}_i} e^{s\rho(T^ix^n,T^iy^n)} P_{Y_i|Y^{i-1}}^*(dy_i|y^{i-1})\Big)^{-1},~\forall{i}\in\mathbb{N}^n\label{alternative-characterization2}
\end{align}
and ${P}_{X^i|Y^{i-1}}^*(\cdot|\cdot)\in{\cal Q}({\cal X}_{0,i}|{\cal Y}_{0,i-1})$.
\end{lemma}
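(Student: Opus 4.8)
The plan is to obtain the fixed-point identity (\ref{alternative-characterization1}) directly from the closed-form optimal kernel (\ref{ex14}) of Theorem~\ref{th6}, by imposing the marginalization (consistency) constraint that links the optimal conditional reproduction $Q^*_{Y_i|Y^{i-1},X^i}$ to its own marginal $P^*_{Y_i|Y^{i-1}}$. This is precisely the nonanticipative counterpart of the classical Kuhn--Tucker characterization of the optimal output distribution for the Shannon RDF \cite{berger}, now stated conditionally on the past reproduction $y^{i-1}$ and with the source law replaced by the filtered posterior $P^*_{X^i|Y^{i-1}}$.

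First I would rewrite the optimal kernel (\ref{ex14}) using the definition (\ref{alternative-characterization2}) of $\lambda_i$, so that it factors as
\begin{align*}
Q^*_{Y_i|Y^{i-1},X^i}(dy_i|y^{i-1},x^i)=\lambda_i(x^i,y^{i-1})\, e^{s\rho(T^ix^n,T^iy^n)}\, P^*_{Y_i|Y^{i-1}}(dy_i|y^{i-1}),
\end{align*}
i.e. $\lambda_i$ is exactly the reciprocal of the normalizing denominator appearing in (\ref{ex14}). Next I would invoke the nonanticipative (causal) factorization built into the construction of Section~\ref{equivalent_definitions} and encoded by \textbf{MC2} of Lemma~\ref{equivalent_statements}: since $Y_i$ does not depend on future source symbols, $P_{Y_i|Y^{i-1},X^n}=Q_{Y_i|Y^{i-1},X^i}$, and therefore, given $Y^{i-1}=y^{i-1}$, marginalizing the pair $(X^i,Y_i)$ over $X^i$ recovers the reproduction marginal,
\begin{align*}
P^*_{Y_i|Y^{i-1}}(dy_i|y^{i-1})=\int_{{\cal X}_{0,i}}Q^*_{Y_i|Y^{i-1},X^i}(dy_i|y^{i-1},x^i)\,P^*_{X^i|Y^{i-1}}(dx^i|y^{i-1}).
\end{align*}
Substituting the factored form and pulling the common factor $P^*_{Y_i|Y^{i-1}}(dy_i|y^{i-1})$, which is independent of $x^i$, outside the integral reduces the identity to $P^*_{Y_i|Y^{i-1}}=\big(\int_{{\cal X}_{0,i}}\lambda_i\, e^{s\rho}\, dP^*_{X^i|Y^{i-1}}\big)\,P^*_{Y_i|Y^{i-1}}$. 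On the support of $P^*_{Y_i|Y^{i-1}}(\cdot|y^{i-1})$ this common factor is strictly positive, so cancelling it yields (\ref{alternative-characterization1}), holding $P^*_{Y_i|Y^{i-1}}$-a.s.; running the argument for each $i\in\mathbb{N}^n$, with the stationarity Assumption~\ref{stationarity} ensuring the same functional form across $i$, completes the derivation.

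The main obstacle is not any hard estimate but the careful handling of two probabilistic points. The first is justifying that the marginalization is taken against the filtered posterior $P^*_{X^i|Y^{i-1}}$ rather than the prior $P_{X^i}$: this relies on the causal identity $P_{Y_i|Y^{i-1},X^n}=Q_{Y_i|Y^{i-1},X^i}$ (exactly the nonanticipative constraint) together with a disintegration of the regular conditional distributions, available because the alphabets are Polish spaces. The second is the ``a.s.'' qualifier: cancellation of $P^*_{Y_i|Y^{i-1}}$ is legitimate only where it is positive, so (\ref{alternative-characterization1}) cannot be asserted for every $y_i$, and one must confirm that the null set $\{y_i:P^*_{Y_i|Y^{i-1}}(dy_i|y^{i-1})=0\}$ is irrelevant. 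Measurability of $\lambda_i$ and finiteness of the integrals, which make the interchange and the division well-defined, follow from the standing hypotheses of Theorem~\ref{th6} and \cite[Theorem IV.4]{charalambous-stavrou-ahmed2014a}.
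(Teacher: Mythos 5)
Your proposal is correct and follows essentially the same route as the paper's proof in Appendix~B: write the optimal kernel (\ref{ex14}) in the factored form $\lambda_i\,e^{s\rho}\,P^*_{Y_i|Y^{i-1}}$, integrate against the filtered posterior $P^*_{X^i|Y^{i-1}}$ to recover the marginal $P^*_{Y_i|Y^{i-1}}$, and cancel it on its support to obtain (\ref{alternative-characterization1}) $P^*_{Y_i|Y^{i-1}}$-a.s. The only cosmetic difference is that you justify the marginalization step via \textbf{MC2} of Lemma~\ref{equivalent_statements}, while the paper simply states the consistency identity (\ref{equation800}) directly.
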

\begin{proof}
See Appendix~\ref{appendix_property3}.
\end{proof}
\noi Lemma~\ref{property3} characterizes the solution to the optimization problem described in Theorem~\ref{th6} on a set of $P^*_{Y_i|Y^{i-1}}$-measure 1. It can be shown, by utilizing measure theoretic arguments, that a necessary and sufficient condition for existence of a solution given in Theorem~\ref{th6} is the condition
\begin{align}
\int_{{{\cal X}_{0,i}}} e^{s
\rho(T^ix^n,T^iy^n)} \lambda_{i}(x^i,y^{i-1}){P}_{X^i|Y^{i-1}}^*(dx^{i}|y^{i-1})\leq{1},~\forall{y^i}\in{\cal Y}_{0,i}, \hso i \in {\mathbb N}^n.\label{necessary_sufficient}
\end{align}

\noi Next, we utilize Lemma~\ref{property3} to characterize the solution to the information nonanticipative RDF, as a maximization over a certain class of functions. By using this property, we can derive a lower bound on $R^{na}_{0,n}(D)$, which is analogous to Shannon's lower bound. In fact, we use this bound to determine the dependence of the optimal (stationary) reproduction distribution discussed in Remark~\ref{markov_stationary} on past reproduction symbols, and to derive the information nonanticipative RDF of the multidimensional Gaussian-Markov process given by (\ref{equation1o}).
\begin{theorem}(Alternative characterization of solution of the information nonanticipative RDF)\label{alternative_expression}{\ \\}
An alternative expression of the information nonanticipative RDF, $R^{na}_{0,n}(D)$ is
\begin{align}
R^{na}_{0,n}(D)&=\sup_{s\leq{0}}\sup_{\lambda\in\Psi_s}\bigg\{sD(n+1)+\sum_{i=0}^n\int_{{\cal X}_{0,i}\times{\cal Y}_{0,i-1}}\log\Big(\lambda_i(x^i,y^{i-1})\Big)P_{X^{i-1},Y^{i-1}}(dx^{i-1},dy^{i-1})\otimes{P}_{X_i|X^{i-1}}(dx_i|x^{i-1})\bigg\}\label{alternative-definition}
\end{align}
where
\begin{align}
\Psi_s&\tri\Big\{\lambda\tri\{\lambda_i(x^i,y^{i-1})\geq{0}:~i=0,1,\ldots,n\}:\nonumber\\
&~\int_{{\cal X}_{0,i}}e^{s\rho(T^ix^n,T^iy^n)}\lambda_i(x^i,y^{i-1}){P}_{X^i|Y^{i-1}}(dx^{i}|y^{i-1})\leq{1},~\forall{y}^i\in{\cal Y}_{0,i},~i=0,1,\ldots,n\Big\}\label{equation-ls}.
\end{align}
\noi Moreover, for each $s\leq{0}$, a necessary and sufficient condition for $\lambda_i(\cdot,\cdot)$ to achieve the supremum in (\ref{alternative-definition}) is the existence of $P_{Y_i|Y^{i-1}}^*(\cdot|\cdot)$ related to $\lambda_i(\cdot,\cdot)$ via (\ref{alternative-characterization2}) such that (\ref{alternative-characterization1}) holds with equality $a.a.~y_i\in{\cal Y}_i$, where $P_{Y_i|Y^{i-1}}(dy_i|y^{i-1})>0,~i=0,1,\ldots,n$. 
\end{theorem}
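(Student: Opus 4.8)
The plan is to read the variational formula (\ref{alternative-definition}) as a duality (minimax) identity, pairing the primal infimum that defines $R^{na}_{0,n}(D)$ in (\ref{ex12}) against a dual supremum over the multiplier $s\le 0$ and the test functions $\lambda\in\Psi_s$. Writing $J(s,\lambda)$ for the bracketed functional on the right of (\ref{alternative-definition}), I would prove the two inequalities $R^{na}_{0,n}(D)\ge\sup_{s\le0}\sup_{\lambda\in\Psi_s}J(s,\lambda)$ (weak duality) and $R^{na}_{0,n}(D)\le\sup_{s\le0}\sup_{\lambda\in\Psi_s}J(s,\lambda)$ (achievability), built on Lemma~\ref{property3} and Theorem~\ref{th6}, and then extract the optimality condition by tracking the case of equality.

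For weak duality, fix any feasible $\overrightarrow{Q}_{Y^n|X^n}\in\overrightarrow{\cal Q}_{0,n}(D)$, any $s\le0$, and any $\lambda\in\Psi_s$. First I would decompose the directed information into per-stage conditional terms, $\mathbb{I}_{X^n\rightarrow Y^n}=\sum_{i=0}^n\int\log\big(\frac{Q_{Y_i|Y^{i-1},X^i}(dy_i|y^{i-1},x^i)}{P_{Y_i|Y^{i-1}}(dy_i|y^{i-1})}\big)P_{X^i,Y^i}(dx^i,dy^i)$, and rewrite each integrand via Bayes' rule as $\log\big(\frac{P_{X^i|Y^i}(dx^i|y^i)}{P_{X^i|Y^{i-1}}(dx^i|y^{i-1})}\big)$. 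Subtracting $s\sum_i\int\rho(T^ix^n,T^iy^n)\,dP_{X^i,Y^i}$ and $\sum_i\int\log\lambda_i(x^i,y^{i-1})\,dP_{X^i,Y^{i-1}}$ collapses each stage into $\int P_{Y^i}(dy^i)\int P_{X^i|Y^i}(dx^i|y^i)\log\big(\frac{P_{X^i|Y^i}(dx^i|y^i)}{e^{s\rho(T^ix^n,T^iy^n)}\lambda_i(x^i,y^{i-1})P_{X^i|Y^{i-1}}(dx^i|y^{i-1})}\big)$. Applying Jensen's inequality to the inner integral, viewed as a relative-entropy-type bound against the nonnegative measure $e^{s\rho}\lambda_iP_{X^i|Y^{i-1}}$, lower-bounds it by $-\log\int_{{\cal X}_{0,i}}e^{s\rho}\lambda_iP_{X^i|Y^{i-1}}(dx^i|y^{i-1})\ge0$, where the last step is precisely the constraint in (\ref{equation-ls}) defining $\Psi_s$. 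Each stage is therefore nonnegative, giving $\mathbb{I}_{X^n\rightarrow Y^n}\ge s\sum_i\int\rho\,dP_{X^i,Y^i}+\sum_i\int\log\lambda_i\,dP_{X^i,Y^{i-1}}$; invoking $s\le0$ together with the fidelity constraint $\frac{1}{n+1}\sum_i\int\rho\,dP\le D$ replaces the distortion term by $sD(n+1)$, so that $\mathbb{I}_{X^n\rightarrow Y^n}(P_{X^n},\overrightarrow{Q})\ge J(s,\lambda)$. Taking the infimum over $\overrightarrow{Q}\in\overrightarrow{\cal Q}_{0,n}(D)$ and then the supremum over $(s,\lambda)$ yields the weak-duality inequality.

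For achievability I would exhibit a pair attaining the bound. Take $s\le0$ as in Theorem~\ref{th6} and let $\lambda^*=\{\lambda_i^*\}$ be defined from the optimal $P^*_{Y_i|Y^{i-1}}$ via (\ref{alternative-characterization2}). By Lemma~\ref{property3}, $\lambda^*$ satisfies (\ref{alternative-characterization1}) with equality, hence $\lambda^*\in\Psi_s$ (it meets (\ref{necessary_sufficient}) with equality). Substituting $\log\lambda_i^*=-\log\int_{{\cal Y}_i}e^{s\rho}P^*_{Y_i|Y^{i-1}}(dy_i|y^{i-1})$ into $J(s,\lambda^*)$ reproduces term-by-term the expression (\ref{ex15}) for $R^{na}_{0,n}(D)$, once one notes $\overrightarrow{Q}^*_{Y^{i-1}|X^{i-1}}(dy^{i-1}|x^{i-1})\otimes P_{X^i}(dx^i)=P_{X^{i-1},Y^{i-1}}(dx^{i-1},dy^{i-1})\otimes P_{X_i|X^{i-1}}(dx_i|x^{i-1})$. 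Thus $J(s,\lambda^*)=R^{na}_{0,n}(D)$, which proves the reverse inequality and, with the previous paragraph, the claimed equality together with attainment of the supremum.

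Finally, the necessary-and-sufficient optimality condition follows by reading off equality in the weak-duality chain. Equality in the stagewise Jensen step forces $P^*_{X^i|Y^i}(dx^i|y^i)$ to coincide with the normalized tilted measure $e^{s\rho}\lambda_iP_{X^i|Y^{i-1}}(dx^i|y^{i-1})/\int e^{s\rho}\lambda_iP_{X^i|Y^{i-1}}$, while tightness of the $\Psi_s$ constraint forces the normalizer to equal one; together these are exactly (\ref{alternative-characterization1}) holding with equality for the $P^*_{Y_i|Y^{i-1}}$ associated with $\lambda$ through (\ref{alternative-characterization2}), and conversely the existence of such a $P^*_{Y_i|Y^{i-1}}$ turns every inequality into an identity. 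I expect the main obstacle to be the rigorous measure-theoretic justification of this equivalence: the $\Psi_s$ constraint is a pointwise (``for all $y^i$'') condition, whereas the objective only sees $P_{Y_i|Y^{i-1}}$-almost-all $y_i$, so one must argue on the correct null sets and use the concavity of $J$ in $\lambda$ and the linearity of the constraint to guarantee that the Karush--Kuhn--Tucker stationarity conditions are both necessary and sufficient for the program $\sup_{\lambda\in\Psi_s}J(s,\lambda)$. Strong duality, i.e.\ the interchange of infimum and supremum that makes the dual value attained rather than merely approached, is underwritten by the convexity structure of Theorem~\ref{convexity_properties} and the existence of the primal optimizer in Theorem~\ref{th6}.
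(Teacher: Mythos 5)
Your proposal is correct and follows essentially the same route as the paper's Appendix C: a weak-duality lower bound valid for every feasible $\overrightarrow{Q}_{Y^n|X^n}\in\overrightarrow{\cal Q}_{0,n}(D)$, $s\leq 0$ and $\lambda\in\Psi_s$, followed by achievability via the canonical choice $\lambda_i^*=\big(\int_{{\cal Y}_i}e^{s\rho}P^*_{Y_i|Y^{i-1}}(dy_i|y^{i-1})\big)^{-1}$ from Lemma~\ref{property3}, which reproduces (\ref{ex15}). The only cosmetic difference is that you pivot via Bayes' rule to the posterior ratio $P_{X^i|Y^i}/P_{X^i|Y^{i-1}}$ and invoke Gibbs/Jensen, whereas the paper applies $\log x\geq 1-\tfrac{1}{x}$ directly to the forward ratio $Q_{Y_i|Y^{i-1},X^i}e^{-s\rho}/(P_{Y_i|Y^{i-1}}\lambda_i)$ — the two bounds are equivalent and both reduce to the same constraint in (\ref{equation-ls}), and your closing caveat about the ``for all $y^i$'' versus almost-sure mismatch matches the paper's own remark around (\ref{necessary_sufficient}).
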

\begin{proof}
See Appendix~\ref{appendix_alternative_expression}.
\end{proof}
\noi The important point to be made regarding (\ref{alternative-definition}) is that by removing the supremum over $\lambda\in\Psi_s$ then a lower bound is obtained. For a given source and distortion function, this lower bound is then shown to be achievable by an optimal reproduction conditional distribution, which depends on finite past  reproduction symbols.

\subsection{Example 1:~Binary Symmetric Markov Source (BSMS($p$))}\label{example:bsms}

\par Consider a BSMS($p$), with stationary transition probabilities
$\big\{P_{X_i|X_{i-1}}(x_i|x_{i-1}):~(x_i,x_{i-1})\in\{0,1\}\times\{0,1\}\big\}$ given by $P_{X_i|X_{i-1}}(0|0)=P_{X_i|X_{i-1}}(1|1)=1-p$, and $P_{X_i|X_{i-1}}(1|0)=P_{X_i|X_{i-1}}(0|1)=p$, $i \in \{0,1,\ldots\}$. We consider a single letter Hamming distortion criterion, $\rho(x_i,y_i)=0$ if $x_i=y_i$ and $\rho(x_i,y_i)=1$ if $x_i \neq y_i$. The $R^{na}(D)$ of the BSMS($p$) is given in the next theorem. 
\begin{theorem}\label{marex1} The nonanticipative RDF, ${R}^{na}(D)$, for the BSMS($p$) with single letter Hamming distortion function is 
\[ { R}^{na}(D) = \left\{ \begin{array}{ll}
         H(m)-H(D) & \mbox{if $D \leq \frac{1}{2}$},~m=1-p-D+2pD\\
        0 & \mbox{otherwise}\end{array}  \right. \]
the optimal (stationary) reproduction distribution is 
\begin{align}
Q_{Y_i|Y_{i-1},X_{i}}^*(y_i|y_{i-1},x_{i})=\bbordermatrix{~ &  &  &  &  \cr
                   & \alpha & 1-\beta& \beta & 1-\alpha\vspace{0.3cm} \cr
                   & 1-\alpha & \beta& 1-\beta &  \alpha \cr}\label{marex1a}
\end{align}
where 
\begin{align}
\alpha=\frac{(1-p)(1-D)}{1-p-D+2pD},~\beta=\frac{p(1-D)}{p+D-2pD}.\label{def-alphabeta}
\end{align}
and $\{Y_i:~i=0,1\ldots\}$ is a first-order Markov with the same transition probability as that of the source.
\end{theorem}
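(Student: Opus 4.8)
The plan is to handle the converse and achievability at once through the dual characterization of Theorem~\ref{alternative_expression}, by exhibiting an explicit first-order Markov reproduction kernel together with a Lagrange multiplier $s$ and a function $\lambda$ that jointly satisfy the necessary and sufficient optimality condition (\ref{alternative-characterization1}). Since the distortion is single-letter Hamming, Remark~\ref{markov_stationary}(2) already forces the optimal kernel to be Markov in the source, $Q^*_{Y_i|Y^{i-1},X^i}(\cdot|y^{i-1},x_i)$; the substance of the theorem is that it is \emph{in addition} first-order Markov in the reproduction. Accordingly I would conjecture that $\{Y_i\}$ is first-order Markov with the \emph{source} transition matrix, i.e. $P^*_{Y_i|Y^{i-1}}(\cdot|y^{i-1})=P^*_{Y_i|Y_{i-1}}(\cdot|y_{i-1})$ with crossover probability $p$, and then verify that this guess is self-consistent and optimal.

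First I would fix $P^*_{Y_i|Y_{i-1}}$ to be the binary symmetric Markov kernel with parameter $p$ and set $s=\log\frac{D}{1-D}$ (so $s<0$ for $D<\tfrac12$, consistent with Theorem~\ref{th6}), and compute $\lambda_i$ from (\ref{alternative-characterization2}). A direct evaluation gives $\lambda_i(x_i,y_{i-1})=\frac{1-D}{m}$ when $x_i=y_{i-1}$ and $\frac{1-D}{1-m}$ when $x_i\neq y_{i-1}$, with $m=(1-p)(1-D)+pD=1-p-D+2pD$. Substituting this $s$ and the conjectured $P^*$ into (\ref{ex14}) yields exactly the kernel (\ref{marex1a}) with $\alpha,\beta$ as in (\ref{def-alphabeta}); I would then check directly that it meets the fidelity constraint with equality, $\mathbb{E}\,\rho(X_i,Y_i)=D$, and that it reproduces the assumed Markov$(p)$ law for $\{Y_i\}$ (a one-line computation gives $m\alpha+(1-m)(1-\beta)=1-p$).

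The crux, and the step I expect to be the main obstacle, is verifying the optimality condition (\ref{alternative-characterization1}) with equality, because it is stated in terms of the \emph{backward} (filtering) distribution $P^*_{X^i|Y^{i-1}}$ induced by the source and the candidate kernel. Plugging the computed $\lambda_i$ into (\ref{alternative-characterization1}) shows, after cancelling a factor $(1-2D)$, that the two equations (one per value of $y_i$) are compatible if and only if the filter collapses to $P^*_{X_i|Y^{i-1}}(\cdot|y^{i-1})$ depending on $y^{i-1}$ only through $y_{i-1}$, with $P^*(X_i=y_{i-1}|y^{i-1})=m$. I would establish this collapse by running the Bayesian filter recursion for the symmetric chain: assuming inductively that $P^*(X_{i-1}\neq Y_{i-1}\mid Y^{i-1})=D$, the measurement update through $Q^*$ followed by the one-step source prediction returns $P^*(X_i\neq Y_i\mid Y^{i})=D$ and the predicted law $P^*(X_i=y_{i-1}\mid y^{i-1})=m$, closing the induction. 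This finite-memory self-consistency of the filter is precisely where the source memory interacts with the causality constraint, and it is the only place where a genuinely source-specific computation is unavoidable.

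With optimality established, the exhibited pair $(s,\lambda)$ attains the supremum in Theorem~\ref{alternative_expression}, so it remains only to evaluate the objective at stationarity: $sD+\int\log\lambda_i\,dP = D\log\frac{D}{1-D}+\log(1-D)+H(m)=H(m)-H(D)$, giving $R^{na}(D)=H(m)-H(D)$ for $D\le\tfrac12$; equivalently one may verify $\frac{1}{n+1}\sum_i I(X_i;Y_i|Y_{i-1})=H(X_i|Y_{i-1})-H(X_i|Y_i,Y_{i-1})=H(m)-H(D)$. Finally, for $D\ge\tfrac12$ I would invoke Lemma~\ref{property5}: for Hamming distortion and the uniform stationary marginal one has $D_{max}=\min_{y}P(X_i\neq y)=\tfrac12$, hence $R^{na}(D)=0$. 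The reduction $m\to\tfrac12$ at $p=\tfrac12$ then recovers the memoryless binary result as a consistency check.
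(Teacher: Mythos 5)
Your proposal is correct, and all the explicit computations check out: with $e^{s}=\tfrac{D}{1-D}$ and the output law Markov$(p)$, formula (\ref{alternative-characterization2}) does give $\lambda_i=\tfrac{1-D}{m}$ or $\tfrac{1-D}{1-m}$ according to whether $x_i=y_{i-1}$, substitution into (\ref{ex14}) reproduces (\ref{marex1a}), the two optimality equations reduce (after cancelling $1-2D$) to the filter identity $P^*(X_i=y_{i-1}\,|\,y^{i-1})=m$, the Bayesian recursion closes with posterior error $D$ at every step, and $sD+\mathbb{E}\log\lambda_i=H(m)-H(D)$. The route you take is, however, organized differently from the paper's. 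The paper works forward from the primal fixed-point form of Theorem~\ref{th6}: it reconditions on $X_i$ to express $P^*_{Y_i|Y^{i-1}}$, deduces from the resulting system that both $P^*_{Y_i|Y^{i-1}}$ and $Q^*_{Y_i|Y^{i-1},X_i}$ drop their dependence on $Y^{i-2}$, solves for the kernel as a function of $s$, pins $s$ down from the fidelity constraint, and then evaluates $R^{na}(D)$ directly as the per-letter conditional mutual information (\ref{eq.200i}). You instead guess the stationary solution and verify it through the dual characterization of Theorem~\ref{alternative_expression} and Lemma~\ref{property3}, evaluating the Lagrangian objective rather than the mutual information; this is precisely the strategy the paper reserves for the Gaussian example in Appendix~\ref{appendix_solution_gaussian} (the Shannon-lower-bound-type argument), transplanted to the BSMS($p$). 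What your version buys is a cleaner separation between the source-specific step (the finite-memory collapse of the filter, which you correctly identify as the only unavoidable computation) and the generic optimality machinery; what it costs is that the first-order Markov structure of $\{Y_i\}$ with parameter $p$ must be conjectured rather than derived, whereas the paper's solving of the self-consistency system produces it. Both arguments share the same level of rigor regarding the $i=0$ boundary term and the passage to the per-letter limit, so I see no gap.
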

Next, we outline the major steps of the proof; for a more detailed version see \cite[Theorem 2.34]{ck_thesis}.
\begin{proof} The steady state distribution of the source is given by $P_{X}(X_i=0)\equiv{P}_{X}(0)=P_{X}(X_i=1)\equiv{P}_{X}(1)=0.5$. By Theorem~\ref{marex1} (see Remark~\ref{markov_stationary}), the stationary reproduction distribution is Markov with respect to the source,  given by
\small
\begin{align}
Q_{Y_i|Y^{i-1},X^i}^*(y_i|y^{i-1},x^i)=Q_{Y_i|Y^{i-1},X_i}^*(y_i|y^{i-1},x_i)=
\frac{e^{s\rho(x_i,y_i)}P^*_{Y_i|Y^{i-1}}(y_i|y^{i-1})}{\sum_{y_i\in\{0,1\}}e^{s\rho(x_i,y_i)}P^*_{Y_i|Y^{i-1}}(y_i|y^{i-1})},~i=0,1,\ldots.\label{msexa}
\end{align}
\normalsize
The distribution of the reproduction symbols, $P^*_{Y_i|Y^{i-1}}(dy_i|y^{i-1})$, is calculated by reconditioning on $X_i$. Then, it is substituted into the RHS OF (\ref{msexa}) to deduce that both $P^*_{Y_i|Y^{i-1}}(dy_i|y^{i-1})$ and $Q_{Y_i|Y^{i-1},X_i}^*(y_i|y^{i-1},x_i)$ are conditional independent of $Y^{i-2}$. Solving the system of resulting equations we calculate $Q_{Y_i|Y_{i-1},X_{i}}^*(y_i|y_{i-1},x_{i})$ as a function of the Lagrange multiplier $``s"$. The Lagrange multiplier $``s"$ is found from the fidelity constraint and is substituted into the equation of the optimal reproduction distribution to obtain (\ref{marex1a}). It can be verified  that $\{Y_i:~i=0,1,\ldots\}$ is first-order Markov with the same transition probability as that of the source $\{X_i:~i=0,1,\ldots\}$. 
Finally, the information nonanticipative RDF is computed using the expression
\begin{align}
{R}^{na}(D)
&=\sum_{x_i,y_i,y_{i-1}}P^*_{X_i,Y_i,Y_{i-1}}(x_i,y_i,y_{i-1})\log\bigg(\frac{Q^*_{Y_i|Y_{i-1},X_i}(y_i|y_{i-1},x_i)}{P^*_{Y_i|Y_{i-1}}(y_i|y_{i-1})}\bigg). \label{eq.200i}
\end{align}
\end{proof}

\noi The graph of $R^{na}(D)$ is illustrated in Fig.~\ref{nardf-graph}; it shows that, as $p$ tends to $\frac{1}{2}$, and the source becomes less correlated  then $R^{na}(D)$ increases. Note that for $p=\frac{1}{2}$, then BSMS($\frac{1}{2}$) is the IID Bernoulli source. Then, $m=1-p-D+2pD=0.5$ and therefore $R^{na}(D)=1-H(D)$, $D<\frac{1}{2}$, which as expected is equal to the R(D) of an IID Bernoulli source with single letter distortion. At high resolution corresponding to $D\longrightarrow{0}$, then $\lim_{D\longrightarrow{0}}R^{na}(D)\simeq{H}(p)$ which is the entropy rate of the BSMS($p$).
\begin{figure}[ht]
\centering
\includegraphics[scale=0.70]{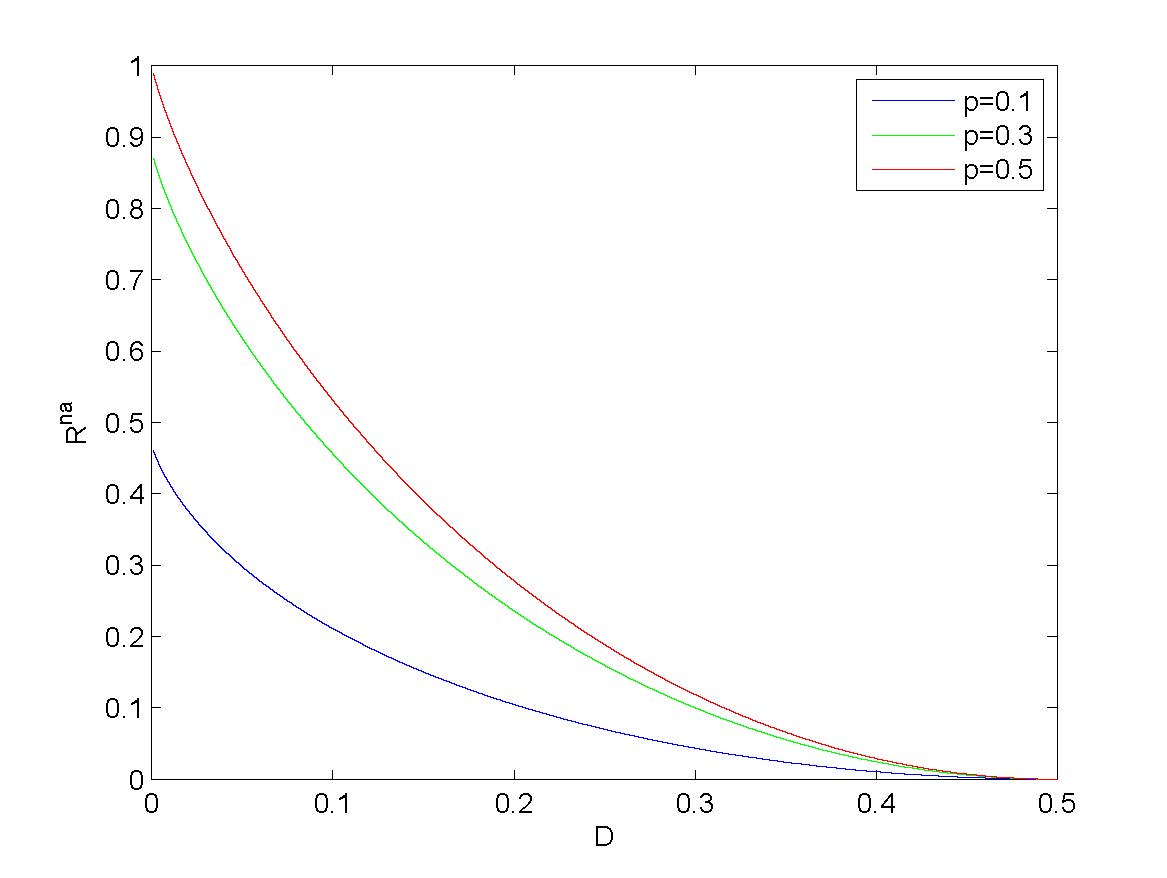}
\caption{$R^{na}(D)$ for different values of parameter $p$.}
\label{nardf-graph}
\end{figure}


\subsection{Example 2:~Multidimensional Partially Observed Gaussian Process}\label{example:gaussian}
\par Consider the discrete-time multidimensional partially observed linear Gauss-Markov source described by (\ref{equation1o}). The model in (\ref{equation1o}), is often encountered in applications where the process $\{Z_t:~t\in\mathbb{N}\}$ is not directly observed;  instead, what is directly observed is the process $\{X_t:~t\in\mathbb{N}\}$ which is a noisy version of it. This is a realistic model for any sensor which collects information on the underlying process $CZ_t$, subject to additive Gaussian noise. Hence, in this application the objective is to compress the sensor data $\{X_t:~t=0,1,\ldots,n\}$. Next, we introduce certain assumptions concerning (\ref{equation1o}) and the distortion function, which are sufficient for existence of the limit, $R^{na}(D)\tri\lim_{n\longrightarrow\infty}\frac{1}{n+1}R^{na}_{0,n}(D)$.
\begin{description}
\item[{\bf (G1)}] ($C,A$) is detectable and ($A,\sqrt{BB^{tr}}$) is stabilizable, \cite{caines1988}; 
\item[{\bf(G2)}] The state and observation noise $\{(W_t,V_t):t\in\mathbb{N}^n\}$ are Gaussian IID vectors $W_t\in\mathbb{R}^k$, $V_t\in\mathbb{R}^d$, mutually independent with parameters $N(0;I_{k\times{k}})$ and $N(0;I_{d\times{d}})$, independent of the Gaussian RV $Z_0$, with parameters $N(0;\bar{\Sigma}_0)$.
\item[{\bf (G3)}] The distortion function is single letter defined by $d_{0,n}(x^n,{y}^n)\tri\sum_{t=0}^n||x_t-{y}_t||_{\mathbb{R}^p}^2$.
\end{description}
Special cases of model (\ref{equation1o}) are discussed in  \cite{tatikonda-mitter2004,charalambous-farhadi2008}. Specifically, \cite{tatikonda-mitter2004} computes $R^{na}(D)$ for the scalar fully observed case corresponding to $X_t=Z_t\in\mathbb{R}$ and \cite{charalambous-farhadi2008} computes $R^{na}(D)$ for the scalar partially observed case $X_t\in\mathbb{R}$ via indirect methods.\\ 
Next, we invoke the characterization of the nonanticipative RDF given by Theorem~\ref{alternative_expression} to derive the exact expression of $R^{na}(D)$ for  model (\ref{equation1o}). The corresponding optimal reproduction conditional distribution is constructed as shown in Fig.~\ref{communication_system}.\\
\noi According to Theorem~\ref{th6}, the optimal stationary reproduction distribution is given by
\begin{align}
{Q}^*_{{Y}_t|Y^{t-1},X^t}(d{y}_t|y^{t-1},x^t)&=\frac{e^{s||{y}_t-x_t||_{\mathbb{R}^p}^2}P^*_{{Y}_t|{Y}^{t-1}}(d{y}_t|{y}^{t-1})}{\int_{{\cal Y}_t}e^{s||{y}_t-x_t||_{\mathbb{R}^p}^2}P^*_{{Y}_t|{Y}^{t-1}}(d{y}_t|{y}^{t-1})},~s\leq{0}\nonumber\\
&\equiv{Q}^*_{{Y}_t|Y^{t-1},X^t}(d{y}_t|y^{t-1},x_t)-a.a.~(y^{t-1},x_t).\label{eq.9}
\end{align}
\noi Hence, from (\ref{eq.9}), it follows that the optimal reproduction is Markov with respect to the process $\{X_t:~t\in\mathbb{N}\}$. Moreover, since the exponential term $||{y}_t-x_t||_{\mathbb{R}^p}^2$ in the RHS of (\ref{eq.9}) is quadratic in $(x_t,{y}_t)$, and $\{Z_t:~i\in\mathbb{N}\}$ is Gaussian then $\{(Z_t,{X}_t):~t\in\mathbb{N}\}$ are jointly Gaussian, and it follows that a Gaussian distribution $Q_{{Y}_t|{Y}^{t-1},X_t}(\cdot|{y}^{t-1},x_t)$ (for a fixed realization of $({y}^{t-1},x_t)$), and a Gaussian distribution $P_{{Y}_t|{Y}^{t-1}}(\cdot|{y}^{t-1})$ can match the left and right side of (\ref{eq.9}). Therefore, at any time $t\in\mathbb{N}$, the output ${Y}_t$ of the optimal reconstruction channel depends on $X_t$ and the previous outputs ${Y}^{t-1}$, and its conditional distribution is Gaussian. Hence, the channel connecting $\{X_t:t\in\mathbb{N}\}$ to $\{{Y}_t:t\in\mathbb{N}\}$ is realized by
\begin{eqnarray}
{Y}_t=\bar{A}X_t+\bar{B}{Y}^{t-1}+V^c_t,~t\in\mathbb{N}\label{eq.10}
\end{eqnarray}
where $\bar{A}\in\mathbb{R}^{p\times{p}}$, $\bar{B}\in\mathbb{R}^{p\times{t}p}$, and $\{V^c_t:~t\in\mathbb{N}\}$ is an independent sequence of Gaussian vectors $N(0;Q_t)$.\\
\noi Introduce the error estimate $\{K_t:~t\in\mathbb{N}\}$, and its covariance $\{\Lambda_t:~t\in\mathbb{N}\}$, defined by
\begin{eqnarray}
K_t\triangleq{X}_t-\widehat{X}_{t|t-1},~\widehat{X}_{t|t-1}\tri\mathbb{E}\Big{\{}X_t|\sigma\{{Y}^{t-1}\}\Big{\}},~\Lambda_t\triangleq\mathbb{E}\{K_tK_t^{tr}\},~t\in\mathbb{N}\label{equation52}
\end{eqnarray}
where $\sigma\{{Y}^{t-1}\}$ is the $\sigma$-algebra generated by the sequence $\{Y^{t-1}\}$. The covariance is diagonalized by introducing a unitary transformation $\{E_t:t\in\mathbb{N}\}$ such that 
\begin{eqnarray}
E_t\Lambda_t{E}_t^{tr}=diag\{\lambda_{t,1},\ldots\lambda_{t,p}\},~\Gamma_t\triangleq{E}_tK_t,~t\in\mathbb{N}.\label{equation53}
\end{eqnarray}
\noi Note that although $\{\Gamma_t:~t\in\mathbb{N}\}$ has independent Gaussian components, each one of these components are correlated. Analogously, introduce to the process $\{\tilde{K}_t:~t\in\mathbb{N}\}$ defined by
\begin{align}
\tilde{K}_t\tri{Y}_t-\widehat{X}_{t|t-1},~\tilde{\Gamma}_t=E_t\tilde{K}_t,~t\in\mathbb{N}.\label{eq.12i}
\end{align}

\noi Since $d_{0,n}(X^n,{Y}^n)=d_{0,n}(K^n,\tilde{K}^n)=\sum_{t=0}^n||\tilde{K}_t-K_t||_{\mathbb{R}^p}^2=\sum_{t=0}^n||\tilde{\Gamma}_t-\Gamma_t||_{\mathbb{R}^p}^2=d_{0,n}(\Gamma^n,\tilde{\Gamma}^n)$ the square error fidelity criterion $d_{0,n}(\cdot,\cdot)$ is not affected by the above preprocessing and post processing of  $\{(X_t,Y_t):~t\in\mathbb{N}\}$. Moreover, using basic properties of conditional entropy, if necessary, we can show the following expressions are equivalent. 
\begin{align}
R^{na}(D)&=\lim_{n\longrightarrow\infty}R_{0,n}^{na,K^n,\tilde{K}^n}(D)\tri\lim_{n\longrightarrow\infty}\inf_{\overrightarrow{P}_{\tilde{K}^n|K^n}:~\mathbb{E}\big{\{}d_{0,n}(K^n,\tilde{K}^n)\leq{D}\big{\}}}\frac{1}{n+1}\sum_{t=0}^n{I}(K_t;\tilde{K}_t|\tilde{K}^{t-1})\nonumber\\
&=\lim_{n\longrightarrow\infty}R_{0,n}^{na,\Gamma^n,\tilde{\Gamma}^n}(D)\tri\lim_{n\longrightarrow\infty}\inf_{\overrightarrow{P}_{\tilde{\Gamma}^n|\Gamma^n}:~\mathbb{E}\big{\{}d_{0,n}(\Gamma^n,\tilde{\Gamma}^n)\leq{D}\big{\}}}\frac{1}{n+1}\sum_{t=0}^n{I}(\Gamma_t;\tilde{\Gamma}_t|\tilde{\Gamma}^{t-1}).\label{equation.2}
\end{align}
\noi Next, we give the expression of $R^{na}(D)$ by using the specific realization of (\ref{eq.10}) shown in Fig.~\ref{communication_system}, where $\{{\cal A}_\infty,{\cal B}_\infty:~t=0,1,\ldots\}$ are to be determined.
 
\begin{theorem}($R^{na}(D)$ of multidimensional stationary partially observed Gaussian source)\label{solution_gaussian}{\ \\}
Under Assumptions ({\bf G1})-({\bf G3}), the information nonanticipative RDF rate for the multidimensional stationary partially observed Gaussian source (\ref{equation1o}) is given by
\begin{align}
R^{na}(D)=\frac{1}{2}\sum_{i=1}^p\log\Big{(}\frac{\lambda_{\infty,i}}{\delta_{\infty,i}}\Big{)}\label{equa.13}
\end{align}
where $diag\{\lambda_{\infty,1},\ldots,\lambda_{\infty,p}\}=\lim_{t\longrightarrow\infty}E_t\Lambda_tE^{tr}_{t}=E_\infty{\Lambda}_\infty{E}_\infty^{tr}$,
\begin{align} 
\Lambda_\infty&=\lim_{t\longrightarrow\infty}\mathbb{E}\Big\{\Big(C\big(Z_t-\mathbb{E}\big\{Z_t|\sigma\{Y^{t-1}\}\big\}\big)+NV_t\Big)\Big(C\big(Z_t-\mathbb{E}\big\{Z_t|\sigma\{Y^{t-1}\}\big\}\big)+NV_t\Big)^{tr}\Big\}\nonumber\\
&=C\lim_{t\longrightarrow\infty}\Sigma_t{C}^{tr}+NN^{tr}=C\Sigma_{\infty}C^{tr}+NN^{tr}\label{lambda_infty}
\end{align}
\begin{eqnarray}
\delta_{\infty,i} \tri\left\{ \begin{array}{ll} \xi_\infty & \mbox{if} \quad \xi_\infty\leq\lambda_{\infty,i} \\
\lambda_{\infty,i} &  \mbox{if}\quad\xi_\infty>\lambda_{\infty,i} \end{array} \right.,~i=2,\ldots,p\label{equa.13i}
\end{eqnarray}
and $\xi_\infty$ is chosen such that $\sum_{i=1}^p\delta_{\infty,i}=D$. Moreover, $\Sigma_\infty$ is the steady state covariance of the error $Z_t-\mathbb{E}\{Z_t|{Y}^{t-1}\}\sim{N}(0,\Sigma_\infty),~\widehat{Z}_{t|t-1}\tri\mathbb{E}\{Z_t|{Y}^{t-1}\}$, of the Kalman filter given by
\begin{align}
\widehat{Z}_{t+1|t}&=A\widehat{Z}_{t|t-1}\nonumber\\
&+A\Sigma_\infty(E_\infty^{tr}H_{\infty}E_{\infty}C)^{tr}M_\infty^{-1}\big({Y}_t-C\widehat{Z}_{t|t-1}\big),~\hat{Z}_{0|-1} =\mathbb{E}\{ Z_0| Y^{-1}\}, Z_0- \hat{Z}_{0|-1}\sim{N}(0,\Sigma_\infty),~t\in\mathbb{N}\label{10}\\
\Sigma_{\infty}&=A\Sigma_\infty{A}^{tr}-A\Sigma_{\infty}(E_\infty^{tr}H_\infty{E}_{\infty}C)^{tr}M_{\infty}^{-1}(E_{\infty}^{tr}H_{\infty}E_{\infty}C)\Sigma_{\infty}A^{tr}+BB_{\infty}^{tr}\label{11}
\end{align}
\begin{align}
M_\infty=E_\infty^{tr}H_\infty{E}_{\infty}C\Sigma_{\infty}(E_{\infty}^{tr}H_{\infty}E_{\infty}C)^{tr}+E_{\infty}^{tr}H_{\infty}E_{\infty}NN^{tr}(E_{\infty}^{tr}H_{\infty}E_{\infty})^{tr}+E_{\infty}^{tr}{\cal B}_{\infty}Q{\cal B}_{\infty}^{tr}E_\infty\label{11i}
\end{align}
and
\begin{align}
H_\infty&=\lim_{t\longrightarrow\infty}H_t={d}iag\{\eta_{\infty,1},\ldots,\eta_{\infty,p}\},~H_t\tri{d}iag\{\eta_{t,1},\ldots,\eta_{t,p}\},~\eta_{t,i}=1-\frac{\delta_{t,i}}{\lambda_{t,i}},~i=1,\ldots,p,~t\in\mathbb{N}\label{equation11mmm}\\
{\cal B}_{\infty}&=\lim_{t\longrightarrow\infty}{\cal B}_t=\sqrt{H_\infty\Delta_\infty{Q}^{-1}},~{\cal B}_t\triangleq\sqrt{H_t\Delta_tQ^{-1}},~t\in\mathbb{N}\label{equation11mmma}\\
\Delta_\infty&=\lim_{t\longrightarrow\infty}\Delta_t=diag\{\delta_{\infty,1},\ldots,\delta_{\infty,p}\},~\Delta_t=diag\{\delta_{t,1},\ldots,\delta_{t,p}\}, t\in\mathbb{N}\label{equation11mmmb}\\
Q&=\lim_{t\longrightarrow\infty}Q_t=diag\{q_{\infty,1},\ldots,q_{\infty,p}\}, Q_t=diag\{q_{t,1},\ldots,q_{t,p}\},~t\in\mathbb{N}\label{equation11mmmbe}.
\end{align}
Moreover, 
\begin{align}
{Y}_t=E_\infty^{tr}H_{\infty}E_{\infty}C(Z_t-\widehat{Z}_{t|t-1})+C\widehat{Z}_{t|t-1}+E_\infty^{tr}H_{\infty}E_{\infty}N V_t+E_{\infty}^{tr}{\cal B}_{\infty}V^c_t,~t\in\mathbb{N}.\label{equation_decoder1}
\end{align}
\end{theorem}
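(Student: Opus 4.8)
The plan is to reduce the vector problem to a collection of decoupled scalar Gaussian rate--distortion problems on the Kalman prediction-error (innovations) process, and then to establish matching lower and upper bounds that pin down the stationary value (\ref{equa.13}). First I would start from the optimal stationary reproduction guaranteed by Theorem~\ref{th6}: for the quadratic distortion the exponent in (\ref{ex14}) is quadratic in $(x_t,y_t)$, so (as already noted around (\ref{eq.9})) both $Q^*_{Y_t|Y^{t-1},X_t}(\cdot|y^{t-1},x_t)$ and $P^*_{Y_t|Y^{t-1}}(\cdot|y^{t-1})$ are Gaussian and the reproduction is Markov in $\{X_t\}$, yielding the affine realization (\ref{eq.10}). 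I would then introduce the MMSE predictor $\widehat{X}_{t|t-1}=\mathbb{E}\{X_t|Y^{t-1}\}$ and the prediction error $K_t$ with covariance $\Lambda_t$ from (\ref{equation52}), diagonalize $\Lambda_t$ by the unitary $E_t$ of (\ref{equation53}), and pass to $(\Gamma_t,\tilde\Gamma_t)$. Because the change of variables $(X_t,Y_t)\mapsto(K_t,\tilde K_t)\mapsto(\Gamma_t,\tilde\Gamma_t)$ is, conditionally on $Y^{t-1}$, an invertible affine map leaving the distortion invariant, the directed-information objective collapses to the innovations form (\ref{equation.2}), namely $\sum_t I(\Gamma_t;\tilde\Gamma_t|\tilde\Gamma^{t-1})$.

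The heart of the argument is the decoupling. I would show that, for the optimal Gaussian test channel, the prediction errors $\{\Gamma_t\}$ form a conditionally independent (innovations) sequence, so each summand $I(\Gamma_t;\tilde\Gamma_t|\tilde\Gamma^{t-1})$ reduces to a memoryless vector-Gaussian mutual information with diagonal source covariance $diag\{\lambda_{t,1},\dots,\lambda_{t,p}\}$. For the lower bound I would invoke the alternative characterization of Theorem~\ref{alternative_expression}: dropping the supremum over $\lambda\in\Psi_s$ and inserting the Gaussian choice of $\lambda_i$ matched to the prediction-error covariance gives the Shannon-type bound $R^{na}_{0,n}(D)\geq\frac{1}{2}\sum_t\sum_i\log(\lambda_{t,i}/\delta_{t,i})$, after the reverse water-filling allocation (\ref{equa.13i}) of $\{\delta_{t,i}\}$ subject to $\sum_{t,i}\delta_{t,i}\leq (n+1)D$. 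For achievability I would exhibit the explicit Gaussian channel (\ref{eq.10})--(\ref{equation_decoder1}), verify it lies in $\overrightarrow{\cal Q}_{0,n}(D)$ with the distortion met with equality, and compute its directed information to the same value, so that the two bounds coincide and give $R^{na}_{0,n}(D)$ in closed form.

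Finally I would pass to the stationary regime. Using Theorem~\ref{fundamental_limits} to interchange limit and infimum, and Theorem~\ref{joint_stationarity} to guarantee a jointly stationary optimizer, the per-step quantities converge: $\Lambda_t\to\Lambda_\infty$ as in (\ref{lambda_infty}), the water level stabilizes to $\xi_\infty$, and the predictor becomes the steady-state Kalman filter. The remaining task is to verify the consistency between the optimal test channel and the filter, i.e. that the filter driven by the reproduction (\ref{equation_decoder1}) is the MMSE predictor whose error covariance solves the algebraic Riccati equation (\ref{11})--(\ref{11i}) with gain matrices $H_\infty,{\cal B}_\infty,\Delta_\infty,Q$; existence and uniqueness of $\Sigma_\infty$ here is precisely what detectability/stabilizability in ({\bf G1}) secure, and summing the stationary per-component rates yields (\ref{equa.13}).

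The main obstacle is establishing this joint fixed point. One must show simultaneously that the Gaussian $\lambda$ achieving the lower bound of Theorem~\ref{alternative_expression} is exactly realized by the channel (\ref{eq.10}), and that the innovations covariance induced by that channel is the steady-state Kalman error covariance solving (\ref{11}); only when this loop closes do the lower and upper bounds agree. Secondary care is needed in justifying the innovations whiteness that underlies the decoupling, and in passing the reverse water-filling through the limit so that the finite-horizon allocation $\{\delta_{t,i}\}$ converges to the stationary level $\xi_\infty$ in (\ref{equa.13i}); both of these I expect to follow once the stationarity provided by Theorem~\ref{joint_stationarity} is in hand, leaving the Riccati fixed point as the genuinely substantive step.
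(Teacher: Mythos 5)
Your proposal is correct and follows essentially the same route as the paper's proof in Appendix~D: achievability via the explicit Gaussian test-channel realization $\tilde{K}_t=E_t^{tr}H_tE_tK_t+E_t^{tr}{\cal B}_tV_t^c$ acting on the diagonalized prediction error (upper bound), a Shannon-type lower bound obtained from Theorem~\ref{alternative_expression} by choosing $\lambda_t(k_t,\tilde{k}^{t-1})=\alpha_t/\bar{p}(k_t|\tilde{k}^{t-1})$ and solving for the Lagrange multiplier $s=-1/(2\delta_{\infty,i})$ so the two bounds coincide, and finally the modified Kalman filter and Riccati equation to identify $\Sigma_\infty$ and hence $\Lambda_\infty$. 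The ``joint fixed point'' you flag as the substantive step is exactly what the paper resolves by deriving the steady-state filter driven by the reproduction (\ref{equation_decoder1}).
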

\begin{proof}
See Appendix~\ref{appendix_solution_gaussian}.
\end{proof}
\vspace*{0.2cm}
\noi For scalar stationary Gaussian sources with memory, $R^{na}(D)$ given in Theorem~\ref{solution_gaussian} simplifies considerably. We illustrate this via examples.\\

\noi{\bf Degraded cases.}
\begin{itemize}
\item[{\bf(1)}] \underline{Scalar Stationary Partially Observed Gaussian Markov Source:} This corresponds to (\ref{equation1o}) by setting $m=p=1$, $C=c$, $N=\sigma_V$, $A=\alpha$, $B=\sigma_W$, i.e., $\sigma_{W}W_t\sim{N}(0;\sigma^2_{W})$ and $\sigma_{V}V_t\sim{N}(0;\sigma^2_{V})$ giving
\begin{align}
\left\{ \begin{array}{ll} Z_{t+1}=\alpha{Z}_t+\sigma_{W}W_t,~t=0,1,\ldots,~Z_0\sim{N}(0;\sigma^2_W(1-\alpha^2)^{-1}), \hso |\alpha|<1, \\
X_t=cZ_t+\sigma_V{V}_t,~t=0,1,\ldots\end{array} \right.\label{equation27j}
\end{align}
Then $\sigma^2_{Z_t}\tri Var(Z_t)=\sigma^2_W(1-\alpha^2)^{-1}$, $\sigma^2_{X_t}=c^2\sigma^2_{Z_t}+\sigma^2_V$. In this case, by Theorem~\ref{solution_gaussian} we have 
\begin{align}
\Lambda_\infty=\lambda_{\infty,1}=c^2\Sigma_\infty+\sigma^2_V,~\Delta_\infty=\delta_{\infty,1}=D,~\mbox{where}~H_\infty=1-\frac{D}{c^2\Sigma_\infty+\sigma^2_V}~\mbox{and}~E_\infty=1.\nonumber 
\end{align}
Hence, using (\ref{11i}), we obtain
\begin{align}
&M_{\infty}=c^2\Sigma_{\infty}H^2_\infty+\sigma^2_V{H}^2_{\infty}+H_\infty{D}=H^2_\infty\big(c^2\Sigma_\infty+\sigma^2_V)+H_\infty{D}=(c^2\Sigma_\infty+\sigma^2_V){H}_\infty.\label{11jj}
\end{align}
Also, from (\ref{11}), we obtain
\begin{align}
\Sigma_\infty&=\alpha^2\Sigma_\infty-\alpha^2c^2\Sigma^2_\infty{H}^2_\infty{M}^{-1}+\sigma_W^2\nonumber\\
&\stackrel{(a)}
\Longrightarrow{c^4}\Sigma^3_\infty+\Sigma^2_\infty(2c^2\sigma^2_V-\alpha^2c^2\sigma^2_V-\alpha^2c^2D-c^4\sigma^2_W)-\Sigma_\infty(\alpha^2\sigma^4_V+2c^2\sigma^2_V\sigma^2_W)-\sigma^4_V\sigma^2_W=0\label{11kk}
\end{align}
where $(a)$ follows from (\ref{11jj}). It can be verified that the cubic equation (\ref{11kk}) admits a positive  solution.    
From (\ref{equa.13}) we obtain
\begin{align}
R^{na}(D)=\frac{1}{2}\log\frac{\lambda_{\infty,1}}{\delta_{\infty,1}},~\lambda_{\infty,1}=c^2\Sigma_\infty+\sigma^2_V \geq{D}.\label{partially_scalar}
\end{align}
\item[{\bf(2)}] \underline{Scalar Stationary Fully Observed Gaussian Markov Source:} This corresponds to (\ref{equation27j}) by setting $c=1$, $\sigma_V=0$ giving
\begin{align}
X_{t+1}=\alpha{X}_t+\sigma_W{W}_t,~W_t\sim{N}(0;1),~X_0\sim{N}(0;\sigma^2_W(1-\alpha^2)^{-1}),~|\alpha|<1.\label{equation111}
\end{align}
Then $\sigma^2_{X_t} \tri Var(X_t)=\sigma^2_W(1-\alpha^2)^{-1}$.
Since this is a special case of {\bf(1)} for $c=1$, $\sigma_V=0$, we obtain from (\ref{11jj})
\begin{align}
&M_{\infty}=\Sigma_\infty{H}_\infty.\label{11j}
\end{align}
Also, using (\ref{11kk}), we obtain
\begin{align}
\lambda_{\infty,1} = \Sigma_\infty=\alpha^2{D}+\sigma_W^2\label{11k}
\end{align}
 Finally, by substituting (\ref{11k}) in the expression of the nonanticipative RDF (\ref{equa.13}) we obtain
\begin{align}
R^{na}(D)=\frac{1}{2}\log\frac{\lambda_{\infty,1}}{\delta_{\infty,1}}=\frac{1}{2}\log\frac{\Sigma_\infty}{D}=\frac{1}{2}\log\Big(\frac{\alpha^2{D}+\sigma_W^2}{D}\Big)=\frac{1}{2}\log\Big(\alpha^2+\frac{\sigma_W^2}{D}\Big),~\Sigma_\infty\geq{D},~|\alpha|<1.\label{12j}
\end{align} 
\noi This is precisely the expression derived in \cite[Theorem 3]{derpich-ostergaard2012} using power spectral densities. 
\item[{\bf(3)}]\underline{IID Gaussian Source:} This corresponds to (\ref{equation111}) by setting  $\alpha=0$, $\sigma_X=\sigma_W$, which implies $\{X_t:~t=0,\ldots\}$ is $N(0;\sigma^2_X)$. By (\ref{12j}), with $\alpha=0$, $\sigma_X=\sigma_W$ then $R^{na}(D)=R(D)=\frac{1}{2}\log\frac{\sigma^2_{X}}{D}$, $\sigma^2_{X}\geq{D}$.\\
This is the well known RDF of the IID Gaussian source \cite{berger}. 
\item[{\bf(4)}]\underline{Vector Source with Independent Component versions of {\bf(1)}, {\bf(2)}.} The vector versions of (\ref{equation27j}) and (\ref{equation111}) with independent spacial components, $\{Z_t^1,\ldots,Z_t^m\}$, and $\{X_t^1,\ldots,X_t^p\}$ is a straight forward extension of (\ref{partially_scalar}) and (\ref{12j}).
\end{itemize}

\section{JSCC Design Based on Symbol-by-Symbol Transmission of Gaussian Sources with Memory}\label{joint_source_channel_coding}
In this section, we use the solution of the nonanticipative RDF $R^{na}(D)$ of the multidimensional stationary Gaussian source with memory in  JSCC design using symbol-by-symbol transmission  over an AWGN channel with or without feedback. We also illustrate that Theorem~\ref{solution_gaussian} and its corresponding realization scheme shown in Fig.~\ref{communication_system} gives as a special case the Schalkwijk-Kailath coding scheme.\\
\noi First, we show that $\{\tilde{K}_t:~t\in\mathbb{N}\}$ is the innovation process of $\{Y_t:~t\in\mathbb{N}\}$, and hence the two processes generate the same $\sigma$-algebras (they contain the same information).
\begin{lemma}(Equivalence of information generated by $\{Y_t:~t=0,\ldots\}$ and $\{\tilde{K}_t:~t=0,\ldots\}$)\label{sigma-algebras}{\ \\}
The following hold.
\begin{align*}
{\cal F}_{0,t}^{Y}\tri\sigma\{Y_s:~s=0,1,\ldots,t\}={\cal F}_{0,t}^{\tilde{K}}\tri\sigma\{\tilde{K}_s:~s=0,1,\ldots,t\},~\forall{t}\in\mathbb{N}.
\end{align*}
that is, ${\cal F}_{0,t}^{Y}\subseteq{\cal F}_{0,t}^{\tilde{K}}$ and ${\cal F}_{0,t}^{\tilde{K}}\subseteq{\cal F}_{0,t}^{Y}$,~$\forall{t}\in\mathbb{N}$.
\end{lemma}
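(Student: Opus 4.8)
The plan is to exploit the fact that $\tilde{K}_t$ is obtained from $Y_t$ by a causal, invertible transformation, so that at every time $t$ the two processes carry identical information. From the definitions (\ref{equation52}) and (\ref{eq.12i}), $\tilde{K}_t \tri Y_t - \widehat{X}_{t|t-1}$ with $\widehat{X}_{t|t-1} \tri \mathbb{E}\{X_t|\sigma\{Y^{t-1}\}\}$; the one-step predictor $\widehat{X}_{t|t-1}$ is by construction $\sigma\{Y^{t-1}\}$-measurable, i.e. a measurable function of the past outputs $Y^{t-1}$, and the same relation inverts to $Y_t = \tilde{K}_t + \widehat{X}_{t|t-1}$. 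I would therefore establish ${\cal F}_{0,t}^{Y} = {\cal F}_{0,t}^{\tilde{K}}$ by induction on $t\in\mathbb{N}$, using these two identities for the two directions of inclusion.

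For the base case $t=0$, since $Y^{-1}=y^{-1}$ (or the prior) is fixed, $\widehat{X}_{0|-1}=\mathbb{E}\{X_0|Y^{-1}\}$ is deterministic, whence $\tilde{K}_0$ and $Y_0$ differ by a constant and $\sigma\{\tilde{K}_0\}=\sigma\{Y_0\}$. Assume inductively ${\cal F}_{0,t-1}^{Y}={\cal F}_{0,t-1}^{\tilde{K}}$. For the inclusion ${\cal F}_{0,t}^{\tilde{K}}\subseteq{\cal F}_{0,t}^{Y}$, note that $\widehat{X}_{t|t-1}$ is $\sigma\{Y^{t-1}\}\subseteq{\cal F}_{0,t}^{Y}$-measurable while $Y_t$ is ${\cal F}_{0,t}^{Y}$-measurable, so $\tilde{K}_t=Y_t-\widehat{X}_{t|t-1}$ is ${\cal F}_{0,t}^{Y}$-measurable; together with ${\cal F}_{0,t-1}^{\tilde{K}}={\cal F}_{0,t-1}^{Y}\subseteq{\cal F}_{0,t}^{Y}$ this gives the inclusion. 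For the reverse inclusion ${\cal F}_{0,t}^{Y}\subseteq{\cal F}_{0,t}^{\tilde{K}}$, I use the induction hypothesis to view $\widehat{X}_{t|t-1}$ as a measurable function of $\tilde{K}^{t-1}$ (since it is $\sigma\{Y^{t-1}\}={\cal F}_{0,t-1}^{\tilde{K}}$-measurable), so that $Y_t=\tilde{K}_t+\widehat{X}_{t|t-1}$ is ${\cal F}_{0,t}^{\tilde{K}}$-measurable; combined with ${\cal F}_{0,t-1}^{Y}\subseteq{\cal F}_{0,t}^{\tilde{K}}$ this closes the induction.

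To justify the terminology \emph{innovation process}, I would additionally verify $\mathbb{E}\{Y_t|Y^{t-1}\}=\widehat{X}_{t|t-1}$, so that $\tilde{K}_t=Y_t-\mathbb{E}\{Y_t|Y^{t-1}\}$. This follows by taking $\mathbb{E}\{\cdot|\sigma\{Y^{t-1}\}\}$ in the realization (\ref{equation_decoder1}): the term $C(Z_t-\widehat{Z}_{t|t-1})$ has vanishing conditional mean because $\widehat{Z}_{t|t-1}=\mathbb{E}\{Z_t|Y^{t-1}\}$, the noises $V_t$ and $V_t^c$ are zero-mean and independent of $Y^{t-1}$, and $C\widehat{Z}_{t|t-1}$ is $\sigma\{Y^{t-1}\}$-measurable; hence $\mathbb{E}\{Y_t|Y^{t-1}\}=C\widehat{Z}_{t|t-1}=\mathbb{E}\{X_t|Y^{t-1}\}=\widehat{X}_{t|t-1}$, using $X_t=CZ_t+NV_t$ and $\mathbb{E}\{V_t|Y^{t-1}\}=0$. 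Note, however, that this identity is not needed for the $\sigma$-algebra equality itself; only the causal measurability of $\widehat{X}_{t|t-1}$ with respect to $Y^{t-1}$ is used.

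The conceptual content is minimal: the result is really the statement that a lower-triangular, unit-diagonal (causal and invertible) map between $Y^t$ and $\tilde{K}^t$ preserves the generated $\sigma$-algebras. The only point requiring care is measurability, namely that $\widehat{X}_{t|t-1}$ is genuinely $\sigma\{Y^{t-1}\}$-measurable (immediate from the definition of conditional expectation) and that, in the reverse inclusion, the induction hypothesis legitimately re-expresses it as a function of $\tilde{K}^{t-1}$. The main (and mild) obstacle is thus the bookkeeping of the two nested filtrations across the two inclusions rather than any substantive analytic difficulty.
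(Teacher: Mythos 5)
Your proof is correct, and it takes a somewhat different and more self-contained route than the paper's. The paper proves the easy inclusion ${\cal F}_{0,t}^{\tilde{K}}\subseteq{\cal F}_{0,t}^{Y}$ exactly as you do, but for the reverse inclusion it first computes that $\tilde{K}_t$ coincides with the innovation $I_t=Y_t-\mathbb{E}\{Y_t|Y^{t-1}\}$ (using the realization (\ref{equation_decoder1}) to show $\mathbb{E}\{Y_t|Y^{t-1}\}=\widehat{X}_{t|t-1}$) and then invokes the standard equivalence between the innovations filtration and the observation filtration. You instead bypass the innovations identification entirely and prove ${\cal F}_{0,t}^{Y}={\cal F}_{0,t}^{\tilde{K}}$ by direct induction, using only that $\widehat{X}_{t|t-1}$ is $\sigma\{Y^{t-1}\}$-measurable and that the map inverts causally via $Y_t=\tilde{K}_t+\widehat{X}_{t|t-1}$; the identity $\mathbb{E}\{Y_t|Y^{t-1}\}=\widehat{X}_{t|t-1}$ is relegated to a side remark, correctly flagged as unnecessary for the filtration equality. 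What your approach buys is that it makes explicit the causal-invertibility argument that the paper's appeal to the innovations theorem leaves implicit (and which, for a general process, is exactly what must be checked, since innovations need not generate the full observation $\sigma$-algebra without such an inversion); what the paper's approach buys is the additional structural fact that $\tilde{K}_t$ is the innovations process, which is then reused in Remark~\ref{scalar} to conclude that $\tilde{K}_t$ is independent of $Y^{t-1}$ — a property your $\sigma$-algebra argument alone does not deliver.
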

\begin{proof}
Since $\tilde{K}_s=Y_s-\mathbb{E}\big{\{}X_s|Y^{s-1}\big{\}}$, $0\leq{s}\leq{t}$, then ${\cal F}_{0,t}^{\tilde{K}}\subseteq{\cal F}_{0,t}^{Y}$, $\forall{t}\in\mathbb{N}$. Hence, we need to show that ${\cal F}_{0,t}^{Y}\subseteq{\cal F}_{0,t}^{\tilde{K}}$, $\forall{t}\in\mathbb{N}$. The innovation process of $\{Y_t:~t\in\mathbb{N}\}$ is by definition (see Fig.~\ref{communication_system}, (\ref{eq.12i}), (\ref{equation95ii}))
\begin{align}
{I}_t&=Y_t-\mathbb{E}\big{\{}Y_t|Y^{t-1}\big{\}}\nonumber\\
&=E_\infty^{tr}H_{\infty}E_{\infty}\Big{(}X_t-\mathbb{E}\big{\{}X_t|Y^{t-1}\big{\}}\Big{)}+E_{\infty}^{tr}{\cal B}_{\infty}V^c_t+\mathbb{E}\big{\{}X_t|Y^{t-1}\big{\}}-\mathbb{E}\big{\{}X_t|Y^{t-1}\big{\}}\nonumber\\
&=E_{\infty}^{tr}H_{\infty}E_{\infty}\Big(X_t-\mathbb{E}\big{\{}X_t|Y^{t-1}\big{\}}\Big)+E_{\infty}^{tr}{\cal B}_{\infty}V^c_t=\tilde{K}_t.\label{equation_sigma_algebra}
\end{align}
Since the innovation process $\{I_s:~s=0,1,\ldots,{t}\}$ and the optimal reproduction process $\{Y_s:~s=0,1,\ldots,t\}$ generates the same $\sigma-$algebras, then ${\cal F}_{0,t}^{I}\subseteq{\cal F}_{0,t}^{Y}$, ${\cal F}_{0,t}^{Y}\subseteq{\cal F}_{0,t}^{I}$, i.e., ${\cal F}_{0,t}^{Y}={\cal F}_{0,t}^{I}$, and hence, by (\ref{equation_sigma_algebra}) we also obtain ${\cal F}_{0,t}^{Y}\subseteq{\cal F}_{0,t}^{\tilde{K}}$,~$\forall{t}\in\mathbb{N}$.
\end{proof}

\noi We now observe the following consequence of Lemma~\ref{sigma-algebras}.

\begin{remark}
\label{scalar}{\ \\}
By Lemma~\ref{sigma-algebras}, all conditional expectations with respect to the process $\{Y_t:  t=0,1, \ldots\}$ can be replaced by conditional expectations with respect to the independent process $\{\tilde{K}_t: t=0, 1, \ldots\}$. Hence, the process $\{K_t: t=0, 1, \ldots\}$ can be written as $K_t=X_t-\mathbb{E}\big\{X_t|\sigma\{Y^{t-1}\}\big\}=X_t-\mathbb{E}\big\{X_t|\sigma\{\tilde{K}^{t-1}\}\big\}$, while its reconstruction is given by 
\begin{align}
\tilde{K}_t=E_{\infty}^{tr}H_{\infty}E_{\infty}\Big(X_t-\mathbb{E}\big{\{}X_t|\tilde{K}^{t-1}\big{\}}\Big)+E_{\infty}^{tr}{\cal B}_{\infty}V^c_t=E_{\infty}^{tr}H_{\infty}E_{\infty}K_t+E_{\infty}^{tr}{\cal B}_{\infty}V^c_t,~t=0, 1, \ldots. \label{scalar1}
\end{align}
Furthermore, by Lemma~\ref{sigma-algebras}, $K_t$ and $\tilde{K}_t$ are independent of $Y_0,\ldots,Y_{t-1}$, and $\tilde{K}_0,\ldots,\tilde{K}_{t-1}$, $t=0,1,\ldots$. This property is analogous to the JSCC of a scalar RV over a scalar additive Gaussian noise channel with feedback \cite[Theorem 5.6.1]{ihara1993}. 
\end{remark}

\noi The realization of Fig.~\ref{communication_system} illustrates a ``Duality of a Source and a Channel" \cite{charalambous-kourtellaris-stavrou2015}, that of the multidimensional stationary Gaussian source process and the multidimensional memoryless AWGN channel. To give an operational meaning to this duality, based on the realization of $R^{na}(D)$ of Fig.~\ref{communication_system}, we need to ensure that end-to-end average distortion is achieved.  We do this by using existing results on capacity of memoryless  AWGN channels. 

%

\noi Consider the memoryless AWGN channels which appear in Fig.~\ref{communication_system}, defined by $B_{t}=A_{t}+V^c_{t},~t=0,\ldots,n$, with  $\{V^c_t \tri Vector\{V_{t,1}^c, V_{t,2}^c, \ldots, V_{t, p}^c\}: t=0,\ldots, n\}$,~$N(0;Q_t)$ (Gaussian) with  $Q_t\tri{C}ov(V^c_t)=diag\{q_{t,1},q_{t,2},\ldots,q_{t,p}\}$,  $\{A_t \tri Vector\{A_{t,1},A_{t,2}, \ldots, A_{t, p}\}: t=0,\ldots, n\}$, $P_t=Cov(A_t), t=0, \ldots, n$,  and power constraint $\frac{1}{n+1}\sum_{t=0}^n {\mathbb E} ||A^2_t||_{\mathbb{R}^p} =\frac{1}{n+1} \sum_{t=0}^n Trace(P_t)  \leq{P}$. It is known that the capacity of such a channel with or without feedback  subject to a power constraint is the same, and it is given by  $ C(P)\tri \lim_{n\longrightarrow \infty}\frac{1}{n+1} C_{0,n}(P)$, where 
\begin{align}
C_{0,n}(P)&=\sup_{P_{A^n}:~\frac{1}{n+1}\mathbb{E}\{\sum_{t=0}^n||A^2_t||_{\mathbb{R}^p}\}\leq{P}}I(A^n;B^n)\label{nonstationary:capacity:waterfilling}
\end{align}
\noi Further, it is known that the channel input distribution corresponding to the maximization of  the right hand side of  (\ref{nonstationary:capacity:waterfilling}) satisfies $P_{A_t|A^{t-1}}=P_{A_t}, t=0, \ldots, n$, and $\{A_t: t=0,1, \ldots, n\}$   is Gaussian $N(0;P_t), P_t \tri E \{A_tA_t^{tr}\}=Cov(A_t)$. Denote the eigenvalues of $P_{t}$ by  $P_{t,1}, P_{t,2}, \ldots, P_{t, p}$, for $t=0,1, \ldots, n$. Then (\ref{nonstationary:capacity:waterfilling}) becomes
 \begin{align}
 C_{0,n}(P)\equiv{C}_{0,n}(P^*_{t,1},\ldots,P^*_{t,p}:~t=0,\ldots,n)= \max_{\frac{1}{n+1}\sum_{t=0}^n \sum_{i=1}^p P_{t,i} \leq P} \frac{1}{2} \frac{1}{n+1} \sum_{t=0}^n \sum_{i=0}^p\log(1+\frac{P_{t,i}}{q_{t,i}}). \label{WFC}
 \end{align}
 where $\{P^*_{t,1},\ldots,P^*_{t,p}\}$ is the optimal allocation of power. The capacity is given by
\begin{align}
&C(P)\equiv{C}(P^*_{\infty,1},\ldots,P^*_{\infty,p})=\lim_{n\longrightarrow\infty}C_{0,n}(P)\nonumber\\
&=\lim_{n\longrightarrow\infty}\max_{\frac{1}{n+1}\sum_{t=0}^n \sum_{i=1}^p P_{t,i} \leq P} \frac{1}{2} \frac{1}{n+1} \sum_{t=0}^n \sum_{i=0}^p \log(1+ \frac{P_{t,i}}{q_{t,i}})=\frac{1}{2}\sum_{i=0}^p\log(1+ \frac{P^*_{\infty,i}}{q_{\infty,i}}),~\sum_{i=1}^p{P}^*_{\infty,i}=P.\label{WFC_limit}
\end{align}
The solution can be found using standard techniques and corresponds to water-filling of parallel memoryless Gaussian channels.\\
 \noi Next, we ensure $R(D)=C(P)$, end-to-end average distortion is satisfied, and the encoder operates at channel capacity. For a given $D\in[D_{min},D_{max}]>0$ there exists a power $P \in [P_{min}, P_{max}]$ such that 
\begin{align}
R^{na}(D)&=\lim_{n\longrightarrow\infty}{R}_{0,n}^{na}(D)=\lim_{n\longrightarrow\infty}\frac{1}{2(n+1)} \sum_{t=0}^n\log\frac{|\Lambda_{t}|}{|\Delta_{t}|}\nonumber\\
&=\frac{1}{2}\sum_{i=1}^p\log\Big(\frac{\lambda_{\infty,i}}{\delta_{\infty,i}}\Big)=\frac{1}{2}\sum_{i=0}^p\log(1+ \frac{P^*_{\infty,i}}{q_{\infty,i}})=C(P)\Big{|}_{\frac{P^*_{\infty,i}}{q_{\infty,i}}=\frac{\lambda_{\infty,i}}{\delta_{\infty,i}}-1,~i=1,\ldots,p}.\label{equation59}
\end{align}
Below we give the JSCC design based on symbol-by-symbol transmission with and without feedback encoding.

\noi{\bf JSCC Design with Feedback.} The AWGN channel with feedback has the following implementation.
\begin{align}
B_t=A^{fb}_\infty(X_t,B^{t-1})+V_t^{c}&={\cal A}_{\infty}E_{\infty}\Big(X_t-\mathbb{E}\{X_t|B^{t-1}\}\Big)+V_t^c,~{\cal A}_\infty\tri\sqrt{Q\Delta_{\infty}^{-1}H_{\infty}},~t\in\mathbb{N},~i=1,\ldots,p.\label{additive_channel_limit}
\end{align}
where ${\cal A}_\infty$ is chosen to ensure the power allocation is satisfied and to guarantee the encoder operates at $C(P)$.
\noi This shows that, for a given distortion level $D\in[D_{min},D_{max}]>0$, the realization shown in Fig.~\ref{communication_system} is optimal in the sense that the end-to-end nonanticipative RDF, $R^{na}(D)$ is achieved (with the prescribed average distortion), the encoder (\ref{additive_channel_limit}) achieves the capacity of the channel, and (\ref{equation59}) is satisfied. Thus, $R^{na}(D)$ is achievable over the $\{$encoder, channel, decoder$\}$ design  shown in Fig.~\ref{communication_system}.\\

\noi{\bf JSCC Design without Feedback.} The AWGN channel without feedback follows from (\ref{additive_channel_limit}) with $\mathbb{E}\{X_t|B^{t-1}\}$ replaced by $\mathbb{E}\{X_t|\sigma(\mathbb{R}^p)\}=EX_t=0$, that is, $B_t=A^{nfb}_\infty(X_t)+V_t^{c}={\cal A}_{\infty}E_{\infty}X_t+V_t^c,~{\cal A}_\infty\tri\sqrt{Q\Delta_{\infty}^{-1}H_{\infty}},~t\in\mathbb{N},~i=1,\ldots,p$.\\

\noi Note that the JSCC design (shown in Fig.~\ref{communication_system}) is not the only choice. One may consider JSCC design of the multidimensional Gaussian source with memory over other types of Gaussian channels.
\vspace*{0,2cm}

\noi Next, we show that {\it JSCC design of Multidimensional stationary Gaussian Sources with Memory} presented in Fig.~\ref{communication_system} includes as degraded special cases several scenarios, including previously known results.\\

\noi{\bf Degraded cases.}

\begin{itemize}
\item[{\bf(FB1)}]\underline{\it JSCC design of transmitting a scalar Gaussian Markov Source over a memoryless AWGN channel with feedback.} Let $X_t$ be a scalar Gaussian Markov source defined by (\ref{equation111}), where it is shown that $\lambda_{\infty,1}=\alpha^2{D}+\sigma^2_W$, $R^{na}(D)=\frac{1}{2}\log(\alpha^2+\frac{\sigma^2_W}{D})$ (see (\ref{11k}), (\ref{12j})). The capacity of a memoryless AWGN channel with or without feedback is obtained from (\ref{WFC_limit}) by setting $p=1$ and $q_{\infty,1}=\mathbb{E}\{|V^c_t|^2\}=Q=\sigma^2_{V^c}$ giving
\begin{align}
C(P)=\frac{1}{2}\log(1+\frac{P}{\sigma^2_{V^c}}).\label{gaussian_capacity_scalar}
\end{align}
Suppose that the channel is used once per source symbol. For this realization, the smallest achievable distortion is
\begin{align}
{D}_{min}=\frac{\sigma^2_{W}\sigma^2_{V^c}}{(1-\alpha^2)\sigma^2_{V^c}+P}.\label{minimum_distortion_2}
\end{align}
\noi Using Theorem~\ref{solution_gaussian} and (\ref{additive_channel_limit}) we obtain  
\begin{align}
B_t= \sqrt{\frac{P}{\lambda_{\infty,1}}}K_t+V_t^c=\sqrt{\frac{P\big((1-\alpha^2)\sigma^2_{V^c}+P\big)}{\sigma^2_W(\sigma^2_{V^c}+P)}}K_t+V_t^c,~K_t=X_t-\mathbb{E}\{X_t|B^{t-1}\}.\label{additive_channel_feedback1b} 
\end{align}
From Remark~\ref{scalar} (by setting $p=1$), the decoder expression (\ref{scalar1}) becomes
\begin{align}
\tilde{K}_t=H_\infty{K}_t+{\cal B}_\infty{V}_t^c={\cal B}_\infty\big({\cal A}_\infty{K}_t+V_t^c\big)\stackrel{(a)}={\cal B}_\infty{B}_t
\label{decoder1}
\end{align}
where $(a)$ follows from (\ref{additive_channel_limit}). By using the fact that $q_{\infty,1}=\sigma^2_{V^c},~\delta_{\infty,1}=D$ the scaling factor ${\cal B}_\infty$ (\ref{equation11mmma}) (for the scalar case $p=1$), which guarantees the minimum end-to-end distortion error is 
\begin{align}
{\cal B}_\infty=\sqrt{\frac{(1-\frac{\delta_{\infty,1}}{\lambda_{\infty,1}})\delta_{\infty,1}}{q_{\infty,1}}}=\sqrt{\frac{\sigma^2_W{P}}{\big((1-\alpha^2)\sigma^2_{V^c}+P\big)\big(\sigma^2_{V^c}+P\big)}}.\label{scaling_factor_feedback_2}
\end{align}
By substituting (\ref{scaling_factor_feedback_2}) into (\ref{decoder1}) we get
\begin{align}
\tilde{K}_t=\sqrt{\frac{\sigma^2_W{P}}{\big((1-\alpha^2)\sigma^2_{V^c}+P\big)\big(\sigma^2_{V^c}+P\big)}}{B}_t.
\label{decoder2b}
\end{align}
Finally, the average end-to-end distortion is computed by evaluating the expectation
\begin{align*}
D=\mathbb{E}\{|X_t-{Y}_t|^2\}=\mathbb{E}\{|K_t-\tilde{K}_t|^2\}=\frac{\sigma^2_W\sigma^4_{V^c}+\sigma^2_W\sigma^2_{V^c}P}{\big((1-\alpha^2)\sigma^2_{V^c}+P\big)(P+\sigma^2_{V^c})}=\frac{\sigma^2_W\sigma^2_{V^c}}{(1-\alpha^2)\sigma^2_{V^c}+P}=D_{min}.
\end{align*} 
\noi This JSCC  design is the one  illustrated in Fig.~\ref{jscc_fully_feedback}.

%

\end{itemize}

\vspace*{0.2cm}
\noi\underline{Special case: Realization Without Feedback.} When there is no feedback, all statements presented in {\bf(FB1)} holds, with ${\mathbb E}\{X_t|B^{t-1}\}$ replaced by ${\mathbb E}\{X_t|\sigma\{ {\mathbb R}\}\}=  {\mathbb E}\{X_t\}=0$ (i.e., only \'a priori information  is used), and the encoder has the following structure.  
\begin{align}
B_t= \sqrt{\frac{P}{\lambda_{\infty,1}}}X_t+V_t^c,~\lambda_{\infty,1}=Var(X_t)=\sigma_W^2(1-|\alpha|^2)^{-1},~|\alpha|<1. \label{additive_channel_nofeedback} 
\end{align}
Toward this, we inspect {\bf(FB1)} without feedback.

\begin{itemize}
\item[{\bf (NFB1)}] \underline{\it JSCC design of transmitting a scalar Gaussian Markov source over a memoryless AWGN channel without feedback.} Let $\{X_t: t=0,1, \ldots \}$ be the scalar stationary Gaussian Markov source defined by (\ref{equation111}). In contrary to the case {\bf (FB1)}, here the encoder has the form of (\ref{additive_channel_nofeedback}), where $\lambda_{\infty,1}=\sigma^2_{X_t}=\frac{\sigma^2_W}{1-\alpha^2}$, and $R^{na}(D)=\frac{1}{2}\log\frac{\sigma^2_W}{(1-\alpha^2)D},~\frac{\sigma^2_W}{1-\alpha^2}\geq{D},~|\alpha|<1$. Consider the realization of $R^{na}(D)$ over a memoryless AWGN channel without feedback whose channel capacity is defined by (\ref{gaussian_capacity_scalar}). Suppose that the channel is used once per source symbol. For this realization, the smallest achievable distortion is
\begin{align}
{D}_{min}=\frac{\sigma^2_{W}\sigma^2_{V^c}}{(1-\alpha^2)(P+\sigma^2_{V^c})}.\label{minimum_distortion_2_nofeedback}
\end{align}
\noi Then (\ref{additive_channel_nofeedback}) is expressed as 
\begin{align}
B_t= \sqrt{\frac{P}{\lambda_{\infty,1}}}K_t+V_t^c=\sqrt{\frac{(1-\alpha^2)P}{\sigma^2_W}}X_t+V_t^c.\label{additive_channel_nofeedback1b} 
\end{align}
\noi By using the fact that $q_{\infty,1}=\sigma^2_{V^c},~\delta_{\infty,1}=D$ the scaling factor ${\cal B}_\infty$ (\ref{equation11mmma}) for the scalar case $p=1$, which guarantees the minimum end-to-end distortion error is 
\begin{align}
{\cal B}_\infty=\sqrt{\frac{(1-\frac{\delta_{\infty,1}}{\lambda_{\infty,1}})\delta_{\infty,1}}{q_{\infty,1}}}=\sqrt{\frac{\lambda_{\infty,1}}{P}}\frac{P}{P+\sigma^2_{V^c}}=\sqrt{\frac{\sigma^2_W}{(1-\alpha^2)P}}\frac{P}{P+\sigma^2_{V^c}}.\label{scaling_factor_nofeedback_2}
\end{align}
By substituting (\ref{scaling_factor_nofeedback_2}) into (\ref{decoder1}) we get
\begin{align}
\tilde{K}_t=\sqrt{\frac{\sigma^2_W}{(1-\alpha^2)P}}\frac{P}{P+\sigma^2_{V^c}}{B}_t.
\label{decoder2b_nofeedback}
\end{align}
Finally, the end-to-end distortion is computed by evaluating the expectation
\begin{align*}
D=\mathbb{E}\{|X_t-{Y}_t|^2\}=\mathbb{E}\{|K_t-\tilde{K}_t|^2\}=\frac{\sigma^2_W\sigma^4_{V^c}}{(1-\alpha^2)(P+\sigma^2_{V^c})^2}+\frac{\sigma^2_W\sigma^2_{V^c}P}{(1-\alpha^2)(P+\sigma^2_{V^c})^2}=\frac{\sigma^2_W\sigma^2_{V^c}}{(1-\alpha^2)(P+\sigma^2_{V^c})}=D_{min}.
\end{align*}

\noi This JSCC design is the one   illustrated in Fig.~\ref{jscc_fully_nofeedback}.

\item[{\bf (NFB2)}] \underline{\it JSCC design of transmitting a scalar IID Gaussian source over a memoryless AWGN channel without feedback.} By setting $\alpha=0$ and $\sigma_X=\sigma_W$ in case {\bf(NFB1)}, the encoder has the form of (\ref{additive_channel_nofeedback}) where $\lambda_{\infty,1}=\sigma^2_X$.\\
\noi This is the case discussed in \cite{goblick1965} (see also \cite[Example 2.2]{gastpar2002}).\\
\end{itemize}

\vspace*{0.2cm}
\noi Next, we show how to recover from   the realization scheme depicted in Fig.~\ref{communication_system}, the {\it Schalkwijk-Kailath coding scheme which achieves the feedback capacity of memoryless Gaussian channels} \cite{schalkwijk-kailath1966}.\\
\underline{Feedback Realization: The Schalkwijk-Kailath coding scheme.} Consider a scalar Gaussian RV $X\sim{N}(0;\sigma^2_X)$. By letting $p=1$ in (\ref{equation59})-(\ref{additive_channel_limit}), we have ${\cal A}_\infty\equiv\sqrt{Q\Delta^{-1}_\infty{H}_\infty}=\sqrt{q_{\infty,1} \frac{1}{\delta_{\infty,1}}\big(1- \frac{\delta_{\infty,1}}{\lambda_{\infty,1}}\big)}, \frac{P^*_{\infty,1}}{q_{\infty,1}}=\frac{\lambda_{\infty,1}}{\delta_{\infty,1}}-1$ which implies ${\cal A}_\infty=\sqrt{\frac{P^*_{\infty,1}}{\lambda_{\infty,1}}}$, $\lambda_{\infty,1}= Var(X - {\mathbb E}\{X| B^{t-1}\})=Var(K_t)$, $K_t=X-{\mathbb E}\{X| B^{t-1}\}$ and $B_t= \sqrt{\frac{P^*_{\infty,1}}{\lambda_{\infty,1}}}K_t+V_t^c, t\in\mathbb{N}$. \\ 
Substituting into the encoder (\ref{additive_channel_limit}) the limiting values, $P^*_{\infty, 1}=P$  then 
\begin{align}
B_t= \sqrt{\frac{P}{\lambda_{\infty,1}}}\big(X-{\mathbb E}\{X| B^{t-1}\}\big)+V_t^c,~t=0,1,\ldots.\label{additive_channel_feedback} 
\end{align}
This is the Schalkwijk-Kailath coding scheme \cite{schalkwijk-kailath1966} of a scalar Gaussian RV $X$ in which an encoder is designed to achieve the capacity of memoryless AWGN channel with feedback. \\
\begin{remark}{\ \\}
The above JSCC designs  based on symbol-by-symbol transmission of scalar sources over  AWGN channels,  can be generalized to vector sources over vector AWGN channels.
\end{remark}
\section{Bounds on OPTA by Noncausal and Causal Codes}\label{bounds_opta} 

\par In this section, we show that the nonanticipative RDF is a lower bound on the OPTA by causal codes developed in \cite{neuhoff1982}.\\
\noi Consider a causal source code \cite{neuhoff1982} and define the average fidelity by
\begin{align*}
{d}^{+}(x,y)\tri\limsup_{k\longrightarrow\infty}\frac{1}{n+1}\mathbb{E}\Big\{d_{0,n}(x^n,y^n)\Big\},~d_{0,n}(x^n,y^n)\tri\sum_{i=0}^n\rho(x_i,y_i).
\end{align*}
Let $l_n(x^\infty)$ denote the total number of bits received at the decoder at the time it reproduces the output sequence $\{Y_n:~n\in\mathbb{N}\}$, when the source is $\{X_n:~n\in\mathbb{N}\}$. In \cite{neuhoff1982} the average rate of the encoder-decoder pairs using causal reproduction coders is measured by $\limsup_{n\longrightarrow\infty}\frac{1}{n+1}\mathbb{E}\Big\{l_n(X^\infty)\Big\}$.\\
\noi Moreover, given a source $\{X_n:~n=0,1,\ldots\}$ the OPTA by causal codes subject to fidelity is given by \cite{neuhoff1982} 
\begin{align}
r^{c,+}(D)\tri\inf_{\substack{y_i:~y_i=f_i(x^i),\forall{i}\in\mathbb{N}\\~f_i~is~causal,~\forall{i}\in\mathbb{N},~{d}^{+}(x,y)\leq{D}}}\limsup_{n\longrightarrow\infty}\frac{1}{n+1}\mathbb{E}\Big\{l_n(X^\infty)\Big\}\label{equation702}.
\end{align}
\noi Causal codes based on \cite{neuhoff1982} are analyzed and further generalized in \cite{weissman-merhav2005} for stationary ergodic sources, under a variety of side information available at the encoder and decoder. Although  expression (\ref{equation702}) is very attractive, its computation for general sources is very difficult.\\ 
\noi Next, we show that the OPTA by causal codes is bounded below by the expression of information nonanticipative RDF rate. Consider the joint distribution defined by $P_{X^n}(dx^n)$, and a reproduction distribution $\overrightarrow{Q}_{Y^n|X^n}(\cdot|x^n)$, so that the randomized coders are consistent with Definition~\ref{causal_reproduction_coder}.
Then, by data processing inequality we have the following bounds.
\begin{align}
\mathbb{E}\Big\{l_n(X^\infty)\Big\}\geq{H}(Y^n)\geq\sum_{i=0}^n\Big\{H(Y_i|Y^{i-1})-H(Y_i|Y^{i-1},X^i)\Big\}\stackrel{(b)}=\mathbb{I}_{X^n\rightarrow{Y^n}}(P_{X^n},\overrightarrow{Q}_{Y^n|X^n})\label{equation750}
\end{align}
where $(b)$ follows from the fact that the joint distribution is defined by $P_{X^n}(dx^n)$ and the conditional reproduction distribution $\overrightarrow{Q}_{Y^n|X^n}(\cdot|x^n)$. Therefore, by taking the infimum of the RHS of (\ref{equation750}) over $\overrightarrow{Q}_{Y^n|X^n}(\cdot|x^n)\in\overrightarrow{\cal Q}_{0,n}(D)$ and its left side of reproduction codes as in (\ref{equation702}) we obtain
\begin{align}
r^{c}_{0,n}(D)&\tri\inf_{\substack{\{y_i:~y_i=f_i(x^i),~f_i~is~causal,~i=0,1,\ldots,n\}\\\frac{1}{n+1}\mathbb{E}\{d_{0,n}(x^n,y^n)\}\leq{D}}}\frac{1}{n+1} \mathbb{E}\Big\{l_n(X^\infty)\Big\}\geq\frac{1}{n+1}{R}^{na}_{0,n}(D)\stackrel{(c)}\geq\frac{1}{n+1}{R}_{0,n}(D)\nonumber
\end{align}
where $(c)$ follows from the fact that $R^{na}_{0,n}(D)\geq{R}_{0,n}(D)$. \\
\noi In the previous bounds we can first take $\limsup_{n\longrightarrow\infty}$ and then the infimum giving
\begin{align*}
r^{c,+}(D)\geq{R}^{na,+}(D)\tri&\inf_{\substack{\overrightarrow{Q}_{Y^\infty|X^\infty}(\cdot|x^{\infty})\\ \in\overrightarrow{\cal Q}_{0,\infty}(D)}}\limsup_{n\longrightarrow\infty}\frac{1}{n+1}R^{na}_{0,n}(D)\nonumber\\
&\geq{R}^{+}(D)\tri\inf_{\substack{\overrightarrow{Q}_{Y^\infty|X^\infty}(\cdot|x^{\infty})\\ \in\overrightarrow{\cal Q}_{0,\infty}(D)}}\limsup_{n\longrightarrow\infty}\frac{1}{n+1}R(D).
\end{align*}
Therefore, the information nonanticipative RDF, $R^{na}(D)$, and rate $R^{na,+}(D)$, are lower bounds on $r^{c,+}(D)$, the OPTA by causal codes, and upper bounds to the classical RDF and rate $R^{+}(D)$.
\vspace*{0.2cm}

\noi{\it Bounds on the OPTA by causal codes for BSMS($p$).} 
\par The classical RDF for the BSMS($p$) is only known for the specific distortion region $0\leq{D}\leq{D}_c$ \cite{gray1971},
 and is given by
\begin{align}
R(D)=H(p)-H(D) \ \text{if} ~D\leq{D}_c=\frac{1}{2}\Big(1-\sqrt{1-\big(\frac{p}{q}\big)^2}\Big), \ p\leq 0.5.\label{graybound}
\end{align}

\begin{figure}
        \centering
        \begin{subfigure}[b]{0.5\textwidth}
                \includegraphics[width=\textwidth]{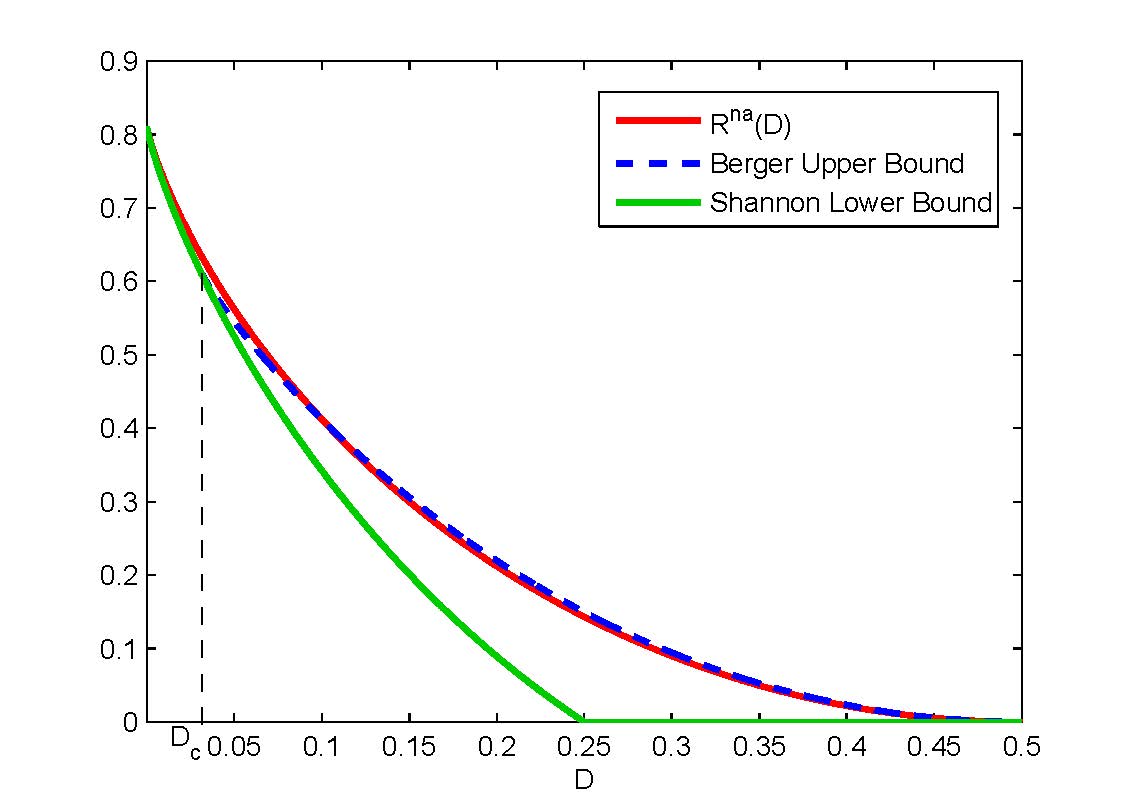}
                \caption{p=0.25 ($D_C=0.0286$).}
                \label{fig:p025}
        \end{subfigure}%
        \begin{subfigure}[b]{0.5\textwidth}
                \includegraphics[width=\textwidth]{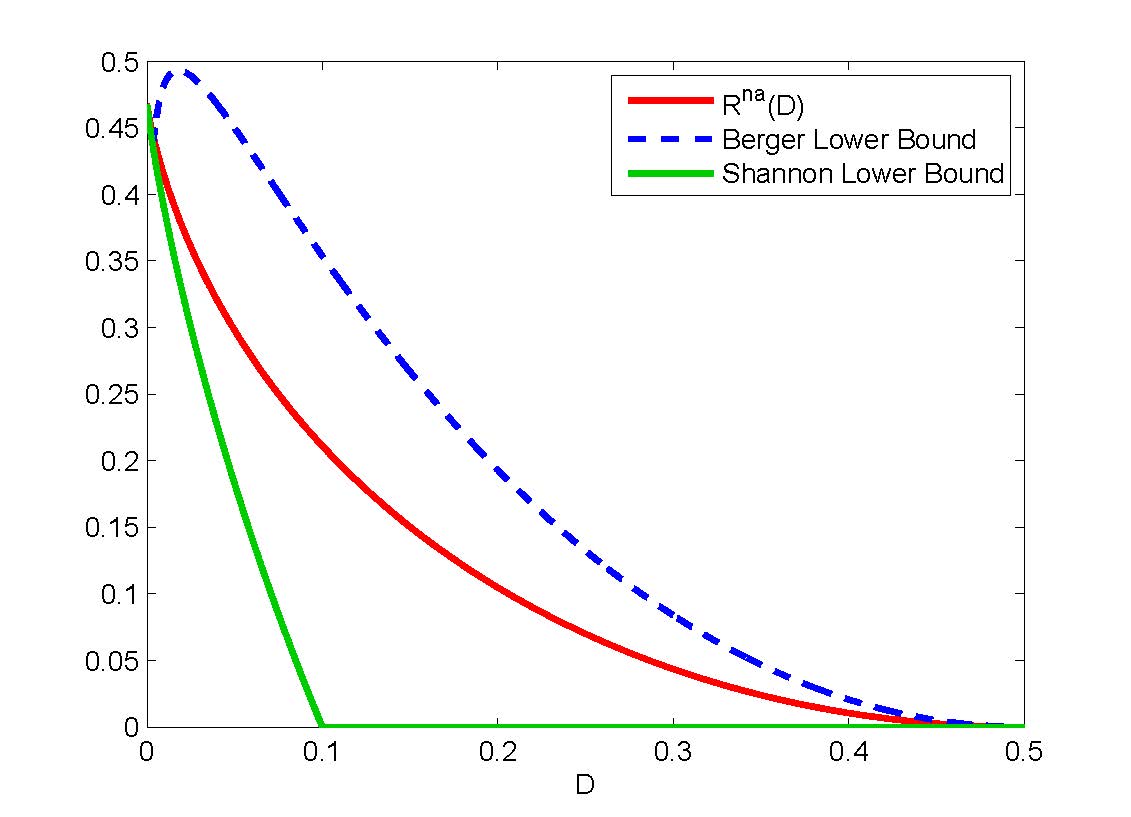}
                \caption{p=0.1 ($D_C=0.0031$).}
                \label{fig:p01}
        \end{subfigure}
        \caption{Bounds for BSMS($p$). Shannon's lower bound is tight for $D<D_C$ \cite{gray1971}.}\label{figmakrov}
\end{figure}

\noi For the remainder of the distortion region $D>{D}_c$ only upper and lower bounds on $R(D)$ are known \cite{berger1977}. In fact, it is  shown by Gray \cite{gray1971} that (\ref{graybound}) is also a lower bound for $R(D)$ in the region  $D_c \leq D \leq \frac{1}{2}$ and is equivalent to the Shannon lower bound. Our expression of the nonanticipative RDF provides an upper bound on the classical RDF for all possible values of $D$, $0\leq D\leq 0.5$. Figure~\ref{figmakrov} shows the upper bound derived by Berger \cite[equations 46, 47]{berger1977}, which hold for $D_c\leq{D}\leq\frac{1}{2}$, Shannon's lower bound, and the upper bound based on $R^{na}(D)$. Shannon's lower bound is tight for $D \leq D_c$ \cite{gray1971}. Moreover, since $R^{na}(D)$ is nonincreasing and convex as a function of $D$, and nonincreasing for all values of $p\in[0,0.5]$ (these are easily shown), then the upper bound based on $R^{na}(D)$ is convex, when compared to Berger's upper bound which is not necessarily convex and nonincreasing (as illustrated by the blue graph in Figure~\ref{fig:p025}). Finally, we use the bound $R(D)\leq{R}^{na}(D)\leq{r}^{c,+}(D)$ to deduce that the RL of causal codes with respect to the OPTA by noncausal codes of the BSMS($p$) cannot exceed

\[ RL=R^{na}(D)-R(D) \leq \left\{
  \begin{array}{l l}
    H(m)-H(p) & \quad \text{if} \ 0\leq{D}\leq p \label{ratelosub6} \\
    H(m)-H(D) & \quad \text{if} \  p<{D}\leq 0.5.
  \end{array} \right. \]
This bound on the RL is illustrated in Figure~\ref{maximum_rate_loss}, and demonstrates the fluctuation of the RL for $p\in[0,0.5]$. It is interesting to see that the maximum value of the RL is $0.2144$ and corresponds to $ (p=0.1012,D=0.1012)$. For high resolution ($D\longrightarrow  0$), the classical RDF and the nonanticipative RDF are equivalent and equal to $H(p)$.

\begin{figure}
\centering
\includegraphics[scale=0.7]{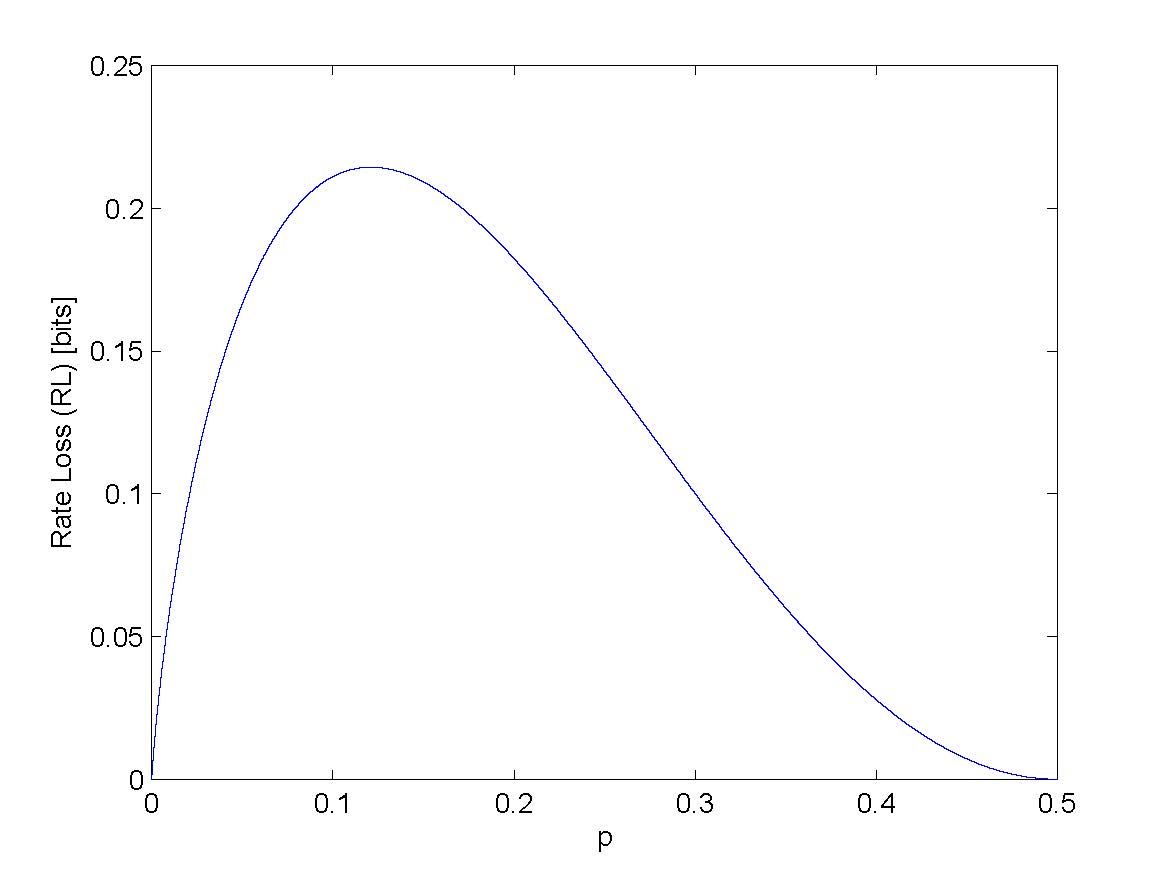}
\caption{Upper bound on the Rate Loss (RL) for the for BSMS, $p\in[0,0.5]$.}
\label{maximum_rate_loss}
\end{figure}
 
\vspace*{0.2cm}

\noi{\it Bounds on multidimensional partially observed Gaussian source.}

\par For the multidimensional partially observable stationary Gaussian-Markov source given in (\ref{equation1o}), the RL of causal codes with respect to $R(D)$ is at most $R^{na}(D)-R(D)$ bits/sample, where $R^{na}(D)$ is given in Theorem~\ref{solution_gaussian}, while the expression for $R(D)$ is found in \cite{berger}. On the other hand, $R^{na}(D)-R(D)$ is the RL of zero-delay codes with respect to the values of $R(D)$. To facilitate the computation of RL of zero delay codes with respect to those  of $R(D)$, one can work in frequency domain, by deriving the equivalent expression of $R^{na}(D)$ using the solution given in Theorem~\ref{solution_gaussian} and Szeg\"o formulae. For scalar Gaussian stationary processes such an expression is given in \cite{pinsker-gorbunov1987,gorbunov-pinsker1991}.\\
Next, we evaluate the RL of causal codes with respect $R(D)$ by considering the first-order (scalar) Gaussian-Markov autoregressive source given by (\ref{equation111}). For this model we take $\alpha=1$ which is the model with a parametric expression for specific distortion region, discussed in \cite[Example 6.3.2.1]{berger}. Specifically, for this model, the parametric expression of $R(D)$ is
$R(D)=\frac{1}{2}\log\frac{\sigma^2_W}{D},~0\leq{D}\leq{D}_c=\frac{\sigma^2_W}{4}$. For the nonanticipative RDF, this model can be explicitly computed by (\ref{12j}) by setting $\alpha=1$ as follows.
\begin{align*}
R^{na}(D)=\frac{1}{2}\log\Big(1+\frac{\sigma_W^2}{D}\Big),~0\leq{D}\leq\infty.
\end{align*} 
Hence, the RL due to causal codes for the first order Gaussian-Markov autoregressive source given by (\ref{equation111}) for $\alpha=1$ cannot exceed
\begin{align*}
RL=R^{na}(D)-R(D)=\frac{1}{2}\log\big(1+\frac{D}{\sigma^2_W}\big),~0\leq{D}\leq{D}_c=\frac{\sigma^2_W}{4}.
\end{align*}
Note for $D>\frac{\sigma^2_W}{4}$, no explicit closed form expression of $R(D)$ for the first order Gaussian autoregressive source is found in the literature. Instead, only approximation solutions are given \cite[Chapter 6]{berger}.

\section{Coding Theorem for JSCC Design Using Nonanticipative Codes}\label{coding_theorems}

\par In this section, we describe a general constructive procedure for JSCC design based on nonanticipative transmission, for sources with memory and channels with memory with and without feedback, with respect to average end-to-end distortion or excess distortion probability, operating optimally, that is, 
\begin{itemize}
\item[{\bf(1)}] the end-to-end average or excess distortion probability is achieved;
\item[{\bf(2)}] the encoder achieves the channel capacity;
\item[{\bf(3)}] $R^{na}(D)=C(P)$, where $C(P)$ is the capacity of the channel with power $P$.
\end{itemize}
The constructive procedure is a generalization of the JSCC design of  multidimensional stationary Gaussian-Markov source transmitted over the vector memoryless AWGN channel presented in Section~\ref{example:gaussian}. For memoryless sources and memoryless channels a similar method is described in \cite{kostina-verdu2012itw}, and evaluated for {\it Example-IID-BSS} and {\it Example-IID-GS}. 
 

\subsection{Nonanticipative  JSCC Design}\label{noisy_coding_theorem_unmatched}

\par The elements of the JSCC design are illustrated in Fig.~\ref{realization_nonanticipative_RDF_sbs}. We focus on JSCC design systems which are nonanticipative, that is, the encoder, channel, and decoder at each time instant $i$ process samples causally, with memory on past symbols, and without anticipation with respect to symbols occurring at times $j>i$. 
\begin{figure}[t]
\centering
\includegraphics[scale=0.70]{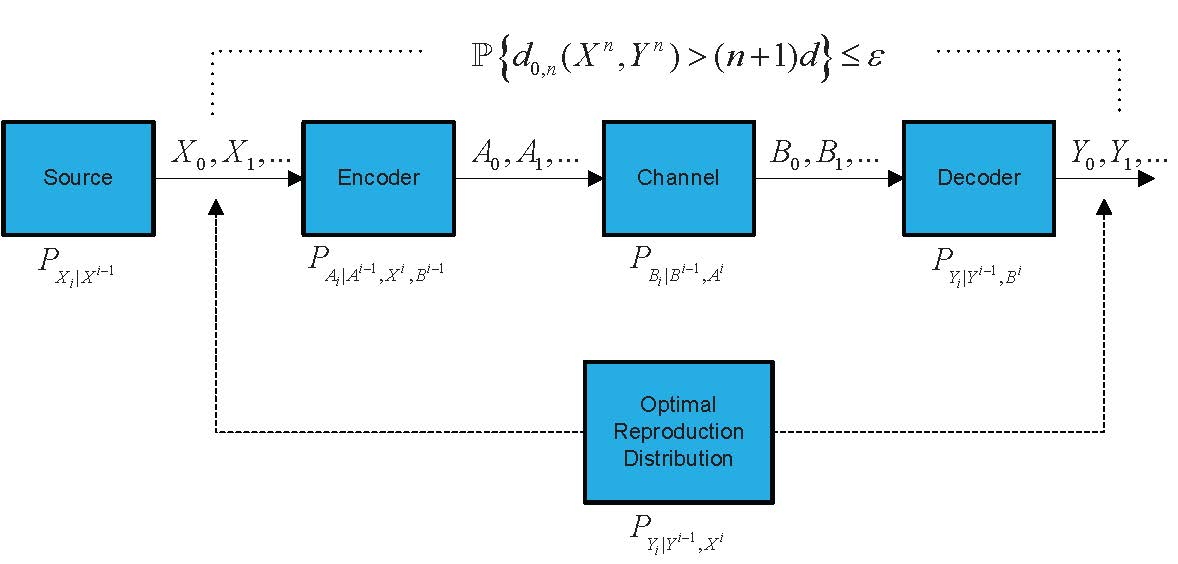}
\caption{JSCC design based on nonanticipative transmission for sources with memory.}
\label{realization_nonanticipative_RDF_sbs}
\end{figure}
Since by data processing inequality,  information nonanticipative RDF is lower than the capacity of the channel, we impose a  cost for transmitting symbols over the channel, which  is a measurable function
\begin{align}
&c_{0,n}\hspace{-0.1cm}:\hspace{-0.1cm}{\cal A}_{0,n}\hspace{-0.1cm}\times\hspace{-0.1cm}{\cal B}_{0,n-1}\hspace{-0.1cm}\longmapsto \hspace{-0.1cm}[0,\infty), \hso c_{0,n}(a^n,b^{n-1})\hspace{-0.1cm}\tri\hspace{-0.1cm}\sum_{i=0}^{n}
{\gamma}(T^ia^n,T^ib^{n-1})\label{chsteq1}
\end{align}
where $T^ia^n\subset\{a_0,\ldots,a^i\}$ and $T^ib^{n-1}\subset\{b_0,\ldots,b^{i-1}\}$.\\
\noi We use the following definition of  a nonanticipative  code.
\begin{definition}
\label{symbol_by_symbol_code_with_memory_without_anticipation}
(Nonanticipative  code){\ \\}
 An $(n,d,\epsilon,P)$ nonanticipative 
code is a tuple
\begin{align}
\Big({\cal X}_{0,n}, {\cal A}_{0,n}, {\cal B}_{0,n}, {\cal Y}_{0,n}, P_{X^n}, \overrightarrow{P}_{A^n|B^{n-1},X^n}, {\overrightarrow P}_{B^n|A^n}, \overrightarrow{P}_{Y^n|B^n}, d_{0,n}, c_{0,n}\Big)\nonumber
\end{align}
\parbox{15.5cm}{
 where
$\overrightarrow{P}_{A^n|B^{n-1},X^n}\sim\{P_{A_i|A^{i-1}, B^{i-1},X^i}(\cdot|\cdot,\cdot,\cdot): i \in\mathbb{N}^n\}$, $\overrightarrow{P}_{Y^n|B^n}\sim\{P_{Y_i|Y^{i-1},B^i}$ $(\cdot|\cdot,\cdot): i \in\mathbb{N}^n\}$, is the code (i.e., \{encoder, decoder\}), ${\overrightarrow P}_{B^n|A^n}\sim\{P_{B_i|B^{i-1}, A^{i}}(\cdot|\cdot,\cdot): i \in\mathbb{N}^n\}$ is the channel, with excess distortion probability}
\begin{align}
{\mathbb P}\Big\{d_{0,n}(X^n,Y^n)>(n+1)d\Big\}\leq\epsilon, \ \epsilon\in(0,1), \ d>0 \nonumber
\end{align}
and transmission cost 
\begin{align}
\frac{1}{n+1} {\mathbb E}\Big\{c_{0,n}(A^n,B^{n-1})\Big\}\leq P, \  P>0\nonumber
\end{align}
where ${\mathbb P}\{\cdot\}$ and $\mathbb{E}\{\cdot\}$ are taken with respect to the joint distribution $P_{X^n,A^n,B^n,Y^n}(d{x}^n,d{a}^n,d{b}^n,d{y}^n)$ induced by $\{$source, encoder, channel, decoder$\}$ .\\
\noi An uncoded nonanticipative code, denoted by $(n,d,\epsilon)$, is a subset of an $(n,d,\epsilon,P)$ nonanticipative code in which an encoder and decoder are identity maps,
$P_{A_i|A^{i-1},B^{i-1},X^i}$ $(da_i|a^{i-1},b^{i-1},x^i)=\delta_{X_i}(d{a}_i)$, $P_{Y_i|Y^{i-1},B^{i}}(dy_i|y^{i-1},b^{i})=\delta_{B_i}(d{y}_i)$, that is, $A_i=X_i$, $Y_i=B_i$, $i=0,1,\ldots,n$.
\end{definition}

\noi The well-known {\it Example-IID-BSS} of JSCC design utilizes an uncoded nonanticipative code, while {\it Example-IID-GS} utilizes an $\{$encoder, decoder$\}$ pair, which scale their input (see \cite{kostina-verdu2012itw}).\\
\noi Next, we define the minimum excess distortion as follows.
\begin{definition}(Minimum excess distortion)
\label{minimu_excess_distortion}{\ \\}
The minimum excess distortion achievable by a nonanticipative
code $(n,d,\epsilon,P)$ is defined by
\begin{align}
D^o(n,\epsilon, P)\tri\inf\Big\{d:  \exists(n,d,\epsilon,P)\  \ \mbox{nonanticipative code}\Big\}. \label{eq.10101}
\end{align}
\noi For uncoded nonanticipative  code (\ref{eq.10101}) is replaced by 
\begin{align}
\bar{D}^o(n,\epsilon)\tri\inf\big\{d: \exists(n,d,\epsilon) \ \ \mbox{nonanticipative code}\big\}.\label{anneanne4}
\end{align}
\end{definition}
\noi Note that in our definition of nonanticipative code $(n,d,\epsilon,P)$ we have assumed indirectly that the channel capacity is defined by
\begin{align}
C(P)\tri\lim_{n\longrightarrow \infty}\frac{1}{n+1}C_{0,n}(P)\label{eq.nafr1111}
\end{align}
where
\begin{align}
C_{0,n}(P)\tri\sup_{\{P_{A_i|A^{i-1},B^{i-1}}(da_i|a^{i-1},b^{i-1}):~i=0,1,\ldots,n\}\in{\cal P}_{0,n}(P)}I(A^n\rightarrow B^n)\label{eq.1111}
\end{align}
and the average power constraint is
\begin{align}
{\cal P}_{0,n}(P)\tri\Big\{\{P_{A_i|A^{i-1},B^{i-1}}(da_i|a^{i-1},b^{i-1}):~i=0,1,\ldots,n\}:\frac{1}{n+1}{\mathbb E}\{c_{0,n}(a^n,b^{n-1})\}\leq P\Big\}. \nonumber
\end{align}
Here $I(A^n\rightarrow B^n)$ is the directed information from $A^n$ to $B^n$ defined by 
\begin{align}
I(A^n\rightarrow B^n)\tri\sum_{i=0}^{n}I(A^i;B_i|B^{i-1}). \label{annofodiin8}
\end{align}

\noi Since we consider nonanticipative transmission based on Definition~\ref{symbol_by_symbol_code_with_memory_without_anticipation}, we also define the notion of probabilistic realization of the optimal nonanticipative reproduction distribution corresponding to $R^{na}(D)$ based on nonanticipative processing of information by the  encoder, channel, decoder, that is,  processing symbols causally, as follows (see Fig.~\ref{realization_nonanticipative_RDF_sbs}).
\begin{definition}\label{realdef}
(Probabilistic Realization){\ \\}
Given a source $\{P_{X_i|X^{i-1}}$ $(d{x}_i|x^{i-1}):  i \in {\mathbb N}^n\}$, then a channel
$\{P_{B_i|B^{i-1},A^i}$ $(d{b}_i|b^{i-1},a^i):  i \in {\mathbb N}^n\}$  is a realization of the optimal reproduction distribution $\{Q_{Y_i|Y^{i-1},X^i}^*(d{y}_i|y^{i-1},x^i):  i \in {\mathbb N}^n\}$ corresponding to $R^{na}_{0,n}(D)$, if there exists a pre-channel encoder $\{P_{A_i|A^{i-1},B^{i-1},X^i}$ $(d{a}_i|a^{i-1},b^{i-1},x^i):  i \in {\mathbb N}^n\}$ and a post-channel
decoder $\{P_{Y_i|Y^{i-1},B^{i}}$ $(d{y}_i|y^{i-1},b^{i}):  i \in {\mathbb N}^n\}$ such that
\begin{align}
{\overrightarrow Q}^*_{Y^n|X^n}(d{y}^n|x^n)=\otimes_{i=0}^n{Q}^*_{Y_i|Y^{i-1},X^i}(d{y}_i|y^{i-1},x^i)=\otimes_{i=0}^n{Q}_{Y_i|Y^{i-1},X^i}
(d{y}_i|y^{i-1},x^i)\label{nafrscmrd}
\end{align}
where the joint distribution from which the RHS of (\ref{nafrscmrd}) is obtained  is  precisely
\begin{align}
P_{X^n,A^n,B^n,Y^n}(d{x}^n,d{a}^n,d{b}^n,d{y}^n)&=\otimes_{i=0}^{n} P_{Y_i|Y^{i-1},B^i}(d{y}_i|y^{i-1},b^i)\otimes P_{B_i|B^{i-1},A^{i}}(d{b}_i|b^{i-1},a^{i})\nonumber\\
&\otimes{P}_{A_i|A^{i-1},B^{i-1},X^i}(d{a}_i|a^{i-1},b^{i-1},x^i)\otimes P_{X_i|X^{i-1}}(d{x}_i|x^{i-1}). \nonumber  
\end{align}
\noi Moreover, ${R}^{na}(D)$ is realizable if in addition the realization operates with average distortion $D$ and $\lim_{n\longrightarrow\infty}\frac{1}{n+1}{\mathbb I}_{X^n\rightarrow{Y^n}}(P_{X^n},$ $\overrightarrow{Q}^*_{Y^n|X^n})={R}^{na}(D)
\tri\lim_{n\longrightarrow \infty}\frac{1}{n+1}{R}^{na}_{0,n}(D)$.
\end{definition}

\noi The above definition of probabilistic realization is precisely the one utilized in Section~\ref{joint_source_channel_coding} for JSCC design with respect to achieving  average end-to-end distortion; it is also the one utilized to obtain the solution of the nonanticipative RDF for the multidimensional stationary Gaussian source, depicted in Fig.~\ref{communication_system}.

\noi Using the above definition of probabilistic realization we now prove achievability of the nonanticipative code with respect to excess distortion probability for sources with memory.
\begin{theorem}\label{achievability_noisy_coding_theorem}
(Achievability of nonanticipative code){\ \\}
{\bf Part A.}~(Coded transmission){\ \\}
Suppose the following conditions hold.
\begin{enumerate}
\item[(1)] ${ R}^{na}_{0,n}(D)$ has a solution and the optimal reproduction distribution is stationary.
\item[(2)] $C_{0,n}(P)$ has a solution and the maximizing distribution is stationary.
\item[(3)] The optimal stationary reproduction distribution $\overrightarrow{Q}^*_{Y^n|X^n}(dy^n|x^n)$ given by Theorem~\ref{th6} is realizable, and $R^{na}(D) = \lim_{n \longrightarrow \infty}\frac{1}{n+1}  { R}^{na}_{0,n}(D)$ is also realizable.
\item[(4)] For a given $D\in[D_{min},D_{max}]$ there exists a $P$ such that ${R}^{na}(D)=C(P)$.
\end{enumerate}
If
\begin{align}
\mathbb{P}\Big\{d_{0,n}(X^n,Y^n)>(n+1)d\Big\}\leq\epsilon, \ \ d > D \label{edp}
\end{align}
where ${\mathbb P}\{\cdot\}$ is taken with respect to $P_{Y^n,X^n}(d{y}^n,d{x}^n)= P_{X^n}(d{x}^n)\otimes{\overrightarrow Q}^*_{Y^n|X^n}(d{y}^n|x^n)$,
then there exists an $(n,d,\epsilon,P)$ nonanticipative code.\\
\noi{\bf Part B.}~(Uncoded transmission){\ \\}
Suppose the following conditions hold.

\begin{enumerate}
\item[(1)] Condition {\bf Part A.} (1) holds. 

\item[(2)] The encoder and the decoder are identity maps
(uncoded), and the channel $P_{B_i|B^{i-1},A^i}$
corresponds to $Q_{Y_i|Y^{i-1},X^i}$ (i.e., $A_i=X_i$, $Y_i=B_i$), $i=0,1,\ldots,n$.

\item[(3)] For a given $D\in[D_{min},D_{max}]$,  the expression  $\lim_{n\rightarrow\infty}\frac{1}{n+1}I(A^n\rightarrow B^n)$ corresponding to the optimal reproduction distribution of ${ R}^{na}(D)$  is finite.
\end{enumerate}
If
\begin{align}
\mathbb{P}\Big\{d_{0,n}({X^n},{Y^n})>(n+1)d\Big\}\leq\epsilon, \ \ d >D \label{nafredp1}
\end{align}
where ${\mathbb P}\{\cdot\}$ is taken with respect to $P_{Y^n,X^n}(d{y}^n,d{x}^n)=P_{X^n}(d{x}^n)\otimes{\overrightarrow Q}^*_{Y^n|X^n}(d{y}^n|x^n)$, then there exists an uncoded $(n,d,\epsilon)$ nonanticipative code.
\end{theorem}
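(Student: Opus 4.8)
The plan is to recognize that this achievability statement is not a random-coding argument but a distribution-matching (constructive) one: the entire content is that the optimal reproduction kernel $\overrightarrow{Q}^*_{Y^n|X^n}$, which attains $R^{na}_{0,n}(D)$, can be implemented by an operational nonanticipative $\{$encoder, channel, decoder$\}$ chain, and that doing so leaves the joint law of the source and its reproduction \emph{unchanged}. Concretely, the excess distortion hypotheses (\ref{edp}) and (\ref{nafredp1}) are stated under the test-channel law $P_{X^n}\otimes\overrightarrow{Q}^*_{Y^n|X^n}$; the goal is to exhibit an operational code whose induced $(X^n,Y^n)$-marginal coincides with this law, so that the excess distortion probability carries over verbatim and the requirements of Definition~\ref{symbol_by_symbol_code_with_memory_without_anticipation} are met.

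For \textbf{Part A}, I would first invoke condition (3) together with the notion of probabilistic realization in Definition~\ref{realdef}: by hypothesis there exist a pre-channel encoder $\{P_{A_i|A^{i-1},B^{i-1},X^i}\}$ and a post-channel decoder $\{P_{Y_i|Y^{i-1},B^i}\}$ such that the full joint distribution $P_{X^n,A^n,B^n,Y^n}$ factorizes as in Definition~\ref{realdef} and reproduces $\overrightarrow{Q}^*_{Y^n|X^n}$ through (\ref{nafrscmrd}). Marginalizing this joint law over $(A^n,B^n)$ yields exactly $P_{X^n}\otimes\overrightarrow{Q}^*_{Y^n|X^n}$ on $(X^n,Y^n)$; hence the operational excess distortion probability equals the left side of (\ref{edp}) and is therefore $\le\epsilon$ for every $d>D$. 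It remains to verify the transmission-cost constraint. Here I would use conditions (2) and (4): since $R^{na}(D)=C(P)$ and the capacity-achieving input distribution is stationary, the realization can be taken with its channel-input process $\{A_t\}$ equal to the capacity-achieving input at power level $P$, so that $\frac{1}{n+1}\mathbb{E}\{c_{0,n}(A^n,B^{n-1})\}\le P$, as exploited in (\ref{equation59}). With both the excess distortion and the cost constraint verified, the constructed triple is by definition an $(n,d,\epsilon,P)$ nonanticipative code.

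For \textbf{Part B}, the argument is the degenerate version of the above: setting the encoder and decoder to identity maps forces $A_i=X_i$ and $Y_i=B_i$, and choosing the channel $P_{B_i|B^{i-1},A^i}$ to be the optimal kernel $Q^*_{Y_i|Y^{i-1},X^i}$ makes the $(X^n,Y^n)$-law equal to $P_{X^n}\otimes\overrightarrow{Q}^*_{Y^n|X^n}$ directly, so (\ref{nafredp1}) transfers to the uncoded system and produces an $(n,d,\epsilon)$ code; condition (3) (finiteness of $\lim_n\frac{1}{n+1}I(A^n\rightarrow B^n)$) guarantees the matched rate is finite so the construction is admissible. The step I expect to be the genuine obstacle is the \emph{simultaneous} matching required in Part A: one must ensure that the capacity-achieving channel input used to meet the cost constraint is compatible with keeping the \emph{end-to-end} kernel equal to $\overrightarrow{Q}^*_{Y^n|X^n}$, i.e.\ that distortion-matching and power-matching can be enforced at once. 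This is precisely where the hypothesis $R^{na}(D)=C(P)$ is indispensable, since it is the equality of rates that allows the same realization to attain the prescribed distortion while operating the channel at capacity under power $P$.
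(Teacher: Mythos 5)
Your proposal is correct and follows essentially the same route as the paper's (very terse) proof: both arguments are purely constructive, using the realizability hypothesis of Definition~\ref{realdef} to transfer the excess distortion bound from the test-channel law $P_{X^n}\otimes\overrightarrow{Q}^*_{Y^n|X^n}$ to the operational system, invoking $R^{na}(D)=C(P)$ for the rate/power matching in Part~A, and reducing Part~B to the degenerate identity-map case with finiteness via data processing. Your version simply spells out the marginalization and cost-constraint steps that the paper leaves implicit.
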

\begin{proof}
{\bf Part A.} If conditions (1)-(3) hold then the optimal reproduction distribution is stationary, it is realizable, and this realization achieves ${ R}^{na}(D)$. By (4) ${ R}^{na}(D)=C(P)$. If (\ref{edp}) is satisfied then a nonanticipative code exists.  {\bf Part B.} This is a special case of {\bf Part A.}; by the data processing and condition {\bf Part B.}, (3),  we know that $R^{na}(D)\leq \lim_{n\rightarrow\infty}\frac{1}{n+1} I(A^n\rightarrow B^n)<\infty$.
Hence, if (\ref{nafredp1}) holds, there exists an uncoded $(n,d,\epsilon)$ nonanticipative code.
\end{proof}
\noi The method described in Theorem~\ref{achievability_noisy_coding_theorem}, {\bf Part A.}, ensures JSCC so that the channel operates at $C(P)$, and hence $R^{na}(D)$ is the minimum rate of reproducing source messages at the decoder, i.e., $R^{na}(D)=C(P)$. This noisy coding theorem is the one applied to the multidimensional Gaussian example, with respect to the average distortion instead of the excess distortion probability. The method described in Theorem~\ref{achievability_noisy_coding_theorem}, {\bf Part B.}, is simpler; find the optimal reproduction distribution of $R^{na}(D)$, then use this distribution as the channel and ensure that (\ref{nafredp1}) holds, which implies  achievability of the uncoded nonanticipative code. The only disadvantage is the loss of resources, because in general, the channel corresponding to the optimal reproduction distribution of $R^{na}(D)$ will have higher capacity than the value of $R^{na}(D)$. With respect to the terminology in \cite{gastpar2002}, this means that the source is not probabilistically matched to the channel.

\noi Below, we apply Theorem~\ref{achievability_noisy_coding_theorem}, {\bf Part A.} to multidimensional Gaussian process (\ref{equation1o}).  
\vspace*{0.2cm}\\
\noi{\it Example 1: Excess Distortion Probability of Multidimensional Gaussian Process.} Consider the multidimensional stationary Gaussian source and its nonanticipative rate $R^{na}(D)$ computed in Theorem~\ref{solution_gaussian}. For a given $D>0$, using coded transmission as in Theorem~\ref{achievability_noisy_coding_theorem}, {\bf Part A.}  the  calculation of the excess distortion probability
 ${\mathbb P}\Big\{d_{0,n-1}(X^{n-1},Y^{n-1})>n d\Big\}\leq\epsilon, \ \epsilon\in(0,1), \ d>D$, can be done using Cramer's theorem \cite{deuschel-stroock1989}  as follows. First note that all conditions of Theorem~\ref{achievability_noisy_coding_theorem} are satisfied. It remains to state how to compute the excess distortion probability. By Chernoff bound we have the following.
\begin{align*}
\mathbb{P}\Big\{\sum_{i=0}^{n-1}\rho(X_i,Y_i)>nd\Big\}=\mathbb{P}\Big\{\sum_{i=0}^{n-1}\rho(K_i,\tilde{K}_i)>nd\Big\}\leq{e}^{-\lambda{n}d}\mathbb{E}\Big\{e^{\lambda\sum_{i=0}^{n-1}\rho(K_i,\tilde{K}_i)}\Big\},~\lambda>0, d>D.
\end{align*}
Optimizing over all $\lambda>0$ then
\begin{align*}
\frac{1}{n}\log\mathbb{P}\Big\{\sum_{i=0}^{n-1}d_{0,n-1}(X^{n-1},Y^{n-1})>n d\Big\}\leq-\sup_{\lambda>0}\Big\{\lambda{d}-\frac{1}{n}\mathbb{E}\big\{e^{\lambda\sum_{i=0}^{n-1}\rho(K_i,\tilde{K}_i)}\big\}\Big\},~d>D.
\end{align*}
Clearly, the rate function is defined by
\begin{align*}
I_{0,n-1}(d)=\sup_{\lambda>0}\Big\{\lambda{d}-\frac{1}{n}\mathbb{E}\big\{e^{\lambda\sum_{i=0}^{n-1}\rho(K_i,\tilde{K}_i)}\big\}\Big\}\stackrel{(a)}\geq {0}, d>D
\end{align*}
where $(a)$ follows from the properties of the rate function (see \cite[Lemma 1.2.3, p. 3]{deuschel-stroock1989}). That is, the bound for $d>D$ is nontrivial. Indeed, for fixed $n-1$, it can be shown that  $I_{0,n-1}(d)$ is convex, non-decreasing function of $d \in [D, \infty]$.  Next, we show how to compute $\mathbb{E}\big\{e^{\lambda\sum_{i=1}^{n-1}\rho(K_i,\tilde{K}_i)}\big\}$, when $\rho(K_i,\tilde{K}_i)=||K_i-\tilde{K}_i||_{\mathbb{R}^p}$.
By (\ref{scalar1}), then
\begin{align}
e_i&=\tilde{K}_i-K_i=E^{tr}_\infty{H}_\infty{E}_{\infty}K_i+E^{tr}_\infty{\cal B}_\infty{V}^c_i-K_i=\big(E^{tr}_\infty{H}_\infty{E}_{\infty}-I\big)K_i+E^{tr}_\infty{\cal B}_\infty{V}^c_i,~i=0,1,\ldots.\label{error_exponent1}
\end{align}
By (\ref{equation52}), then 
\begin{align}
K_{i}=X_{i}-\hat{X}_{i|i-1}=CZ_{i}+NV_{i}-C\hat{Z}_{i|i-1}=C\big(Z_{i}-\hat{Z}_{i|i-1}\big)+NV_{i}.\label{error_exponent1a}
\end{align}
Define the error
\begin{align}
\bar{e}_{i+1}&\tri{Z}_{i+1}-\hat{Z}_{i+1|i}=AZ_i+BW_i-A\hat{Z}_{i|i-1}-A\Sigma_\infty\big(E^{tr}_\infty{H}_\infty{E}_\infty{C}\big)^{tr}M^{-1}_\infty\big(Y_i-C\hat{Z}_{i|i-1}\big)\nonumber\\
&\stackrel{(b)}=A\bar{e}_i+BW_i-A\Sigma_\infty\big(E^{tr}_\infty{H}_\infty{E}_\infty{C}\big)^{tr}M^{-1}_\infty\big(E^{tr}_\infty{H}_\infty{E}_\infty{C}\bar{e}_i+E^{tr}_\infty{H}_\infty{E}_\infty{N}V_i+E^{tr}_\infty{\cal B}_\infty{V}^c_i\big)\nonumber\\
&=\big(A-A\Sigma_\infty\big(E^{tr}_\infty{H}_\infty{E}_\infty{C}\big)^{tr}M^{-1}_\infty{E}^{tr}_\infty{H}_\infty{E}_\infty{C}\big)\bar{e}_i+BW_i-A\Sigma_\infty\big(E^{tr}_\infty{H}_\infty{E}_\infty{C}\big)^{tr}M^{-1}_\infty{E}^{tr}_\infty{\cal B}_\infty\big({\cal A}_\infty{E}_\infty{N}V_i+V_i^c\big)\nonumber\\
&\equiv\tilde{A}\bar{e}_i+\tilde{B}_1W_i+\tilde{B}_2V_i+\tilde{B}_3V_i^c,~\bar{e}_0=Z_0-\hat{Z}_{0|-1}\label{error_exponent2}
\end{align}
where $(b)$ follows from (\ref{equation_decoder1}) and  $\tilde{B}_2=A\Sigma_\infty\big(E^{tr}_\infty{H}_\infty{E}_\infty{C}\big)^{tr}M^{-1}_\infty{E}^{tr}_\infty{H}_\infty{E}_\infty{N}$, $\tilde{B}_3=A\Sigma_\infty\big(E^{tr}_\infty{H}_\infty{E}_\infty{C}\big)^{tr}M^{-1}_\infty{E}^{tr}_\infty{\cal B}_\infty{V}_i^c$, $\tilde{B}_1=B$, $\tilde{A}=A-A\Sigma_\infty\big(E^{tr}_\infty{H}_\infty{E}_\infty{C}\big)^{tr}M^{-1}_\infty{E}^{tr}_\infty{H}_\infty{E}_\infty{C}$.\\
Clearly, $\{\bar{e}_i: i=0,1,\ldots\}$ is a Gaussian Markov process. Finally,  the computation of $\mathbb{E}\big\{e^{\lambda\sum_{i=0}^{n-1}\rho(K_i,\tilde{K}_i)}\big\}$ is as follows.
\begin{align}
\mathbb{E}\big\{e^{\lambda\sum_{i=0}^{n-1}\rho(K_i,\tilde{K}_i)}\big\}&\stackrel{(c)}=\mathbb{E}\Big\{e^{\lambda\sum_{i=0}^{n-1}||\big(E^{tr}_\infty{H}_\infty{E}_{\infty}-I\big)K_i+E^{tr}_{\infty}{\cal B}_\infty{V}^c_i||_{\mathbb{R}^p}^2}\Big\}\nonumber\\
&\stackrel{(d)}=\mathbb{E}\Big\{e^{\lambda\sum_{i=0}^{n-1}||\big(E^{tr}_\infty{H}_\infty{E}_{\infty}-I\big)(C\bar{e}_i+NV_i)+E^{tr}_\infty{\cal B}_\infty{V}^c_i||_{\mathbb{R}^p}^2}\Big\}\label{error_exponent3}
\end{align}
where $(c)$ follows due to (\ref{error_exponent1}), $(d)$ follows by (\ref{error_exponent1a}), and $\bar{e}_{i+1}$ is given by (\ref{error_exponent2}).
The expectation in (\ref{error_exponent3}) can be computed explicitly due to Gaussianity of $\{\bar{e}_i:~i=0,1,\ldots\}$ and its independence of the Gaussian processes $\{V_i:~i=0,1,\ldots\}$, $\{V_i^c:~i=0,1,\ldots\}$. A simple procedure to compute the continuous time analogue of (\ref{error_exponent3}) is given in \cite[Section III.B]{charalambous-djouadi-denic2005ieeeit}. For any of the JSCC design of scalar sources, the above calculation can be easily carried out. \\

\noi Below we consider the BSMS($p$) and we  compute the excess distortion probability, when  the source is not probabilistically matched to the channel, using uncoded transmission (sub-optimal), ensuring reliable communication. This observation is also pointed out in \cite{kostina-verdu2012itw} for the two well-known examples of JSCC design, the {\it Example-IID-BSS} and the {\it Example-IID-GS}. Recently, in \cite{kourtellaris-charalambous-boutros2015isit} the JSCC design of the BSMS($p$) is shown over a unit memory channel (with certain symmetry) with transmission cost, with and without feedback encoding. We do not present the optimal JSCC design, due to space limitations. However, the calculations presented below apply directly to the JSCC design in \cite{kourtellaris-charalambous-boutros2015isit}. \\

\noi{\it Example 2: Excess Distortion Probability of BSMS($p$).} Consider  the BSMS($p$) and its nonanticipative rate $R^{na}(D)$ computed in Theorem~\ref{marex1}. For a given $D>0$, using uncoded transmission as in Theorem~\ref{achievability_noisy_coding_theorem}, {\bf Part B.}  the exact calculation of the excess distortion probability
 ${\mathbb P}\Big\{d_{0,n-1}(X^{n-1},Y^{n-1})>n d\Big\}\leq\epsilon, \ \epsilon\in(0,1), \ d>D$, is not as straightforward as it in 
 the case of the IID Bernoulli source \cite{kostina-verdu2012}.  However, if we can show that the joint process $\{(X_i, Y_i):i=0,1,\ldots\}$  induced by the optimal reproduction distribution and the BSMS($p$) with alphabet  $\overline{\Sigma} \tri \big\{(x,y): x\in\{0,1\}, y\in\{0,1\}\big\}$ is Markov, then we can find an upper bound for the excess distortion probability, for finite but large enough $n$ so the bound is non-trivial.   Express the distortion function as follows.
\bes
z_i\tri(x_i, y_i), \hst S_n\tri\sum_{i=0}^{n-1}f(z_i),~f(z) \tri \rho(x,y)=x\oplus{y}, \ \ i=0, 1, \ldots, n-1
\ees
 Introduce the mapping $\phi:  \overline{\Sigma} \mapsto { \Sigma} \tri \{1, 2, 3, 4\}, (x,y) \in \overline{\Sigma}$ such that under $\phi$,  $(0,0) \mapsto 1,  (0,1) \mapsto 2, (1,0) \mapsto 3,  (1,1) \mapsto 4$.

\begin{theorem}
The joint process $\{Z_i\tri(X_i, Y_i):~i=0,1,\ldots\}$ generated by the optimal reproduction distribution $P^*_{Y_i|Y_{i-1},X_{i}}({y}_i|y_{i-1},x_{i})$ and the BSMS($p$), $P_{X_i|X_{i-1}}(x_i|x_{i-1})$ is Markov, that is, $P^*_{Z_i|Z^{i-1}}(z_i|z^{i-1})= P^*_{Z_i|Z_{i-1}}(z_i|z_{i-1}), i=0,1,\ldots$,  and  its  transition probability matrix denoted by ${\bf \Pi}=\{\pi(i,j)\equiv P^*_{Z_i|Z_{i-1}}(i|j): (i,j) \in  \Sigma\times\Sigma\}$  is given by
\begin{align}
{\bf \Pi} 
=
\bbordermatrix{~ &  &  &  &  \cr
                  & \alpha(1-p) & \beta(1-p) &{\alpha}p & {\beta}p \vspace{0.3cm} \cr
                  & (1-\alpha)(1-p) & (1-\beta)(1-p)&(1-\alpha)p & p(1-\beta)  \vspace{0.3cm} \cr
                  & (1-\beta)p & (1-\alpha)p & (1-\beta)(1-p)&  (1-\alpha)(1-p) \vspace{0.3cm} \cr
                  & {\beta}p & {\alpha}p & \beta(1-p)&   \alpha(1-p) \cr} \label{cachjodis}
\end{align}
where each column $\{\pi(\cdot,j): j \in \{1, 2, 3, 4\}\}$ is a probability vector, where $\alpha$ and $\beta$ are defined in (\ref{def-alphabeta}).
\end{theorem}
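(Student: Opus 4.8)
The plan is to compute the one-step joint conditional $P^*_{Z_i|Z^{i-1}}(z_i|z^{i-1})$ directly from the chain-rule factorization of the joint law and to exhibit that it depends on the past only through $z_{i-1}=(x_{i-1},y_{i-1})$. Writing $z_i=(x_i,y_i)$ and recalling that the joint measure is $P_{X^n}\otimes\overrightarrow{Q}^*_{Y^n|X^n}$ with $\overrightarrow{Q}^*$ a \emph{causal} convolution of kernels, the chain rule gives
\begin{align*}
P^*_{Z_i|Z^{i-1}}(x_i,y_i|x^{i-1},y^{i-1})=P_{X_i|X^{i-1}}(x_i|x^{i-1})\cdot Q^*_{Y_i|Y^{i-1},X^i}(y_i|y^{i-1},x^i),
\end{align*}
where only the source's own marginal conditional enters the first factor, since the reproduction kernels do not feed back into the source; this is precisely {\bf MC3} of Lemma~\ref{equivalent_statements}, which is guaranteed because $\overrightarrow{Q}^*\in\overrightarrow{\cal Q}_{0,n}(D)$.

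I would then simplify each factor separately. For the source factor, the BSMS($p$) is first-order Markov, so $P_{X_i|X^{i-1}}(x_i|x^{i-1})=P_{X_i|X_{i-1}}(x_i|x_{i-1})$. For the reproduction factor, Theorem~\ref{marex1} asserts that the optimal stationary kernel is Markov in the source and depends on the reproduction only through $y_{i-1}$, i.e.\ $Q^*_{Y_i|Y^{i-1},X^i}(y_i|y^{i-1},x^i)=Q^*_{Y_i|Y_{i-1},X_i}(y_i|y_{i-1},x_i)$, with the explicit entries listed in (\ref{marex1a}). Substituting both simplifications yields
\begin{align*}
P^*_{Z_i|Z^{i-1}}(x_i,y_i|x^{i-1},y^{i-1})=P_{X_i|X_{i-1}}(x_i|x_{i-1})\cdot Q^*_{Y_i|Y_{i-1},X_i}(y_i|y_{i-1},x_i),
\end{align*}
whose right-hand side involves the past only through $(x_{i-1},y_{i-1})$. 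This establishes $P^*_{Z_i|Z^{i-1}}(z_i|z^{i-1})=P^*_{Z_i|Z_{i-1}}(z_i|z_{i-1})$, i.e.\ $\{Z_i\}$ is first-order Markov.

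It then remains to read off the entries of ${\bf \Pi}$. Under the labeling $\phi$ with $(0,0)\mapsto1$, $(0,1)\mapsto2$, $(1,0)\mapsto3$, $(1,1)\mapsto4$, each transition probability is the product $\pi(z_i,z_{i-1})=P_{X_i|X_{i-1}}(x_i|x_{i-1})\,Q^*_{Y_i|Y_{i-1},X_i}(y_i|y_{i-1},x_i)$, where the source factor is $1-p$ when $x_i=x_{i-1}$ and $p$ otherwise, and the reproduction factor is obtained from (\ref{marex1a}) by selecting the column indexed by $(y_{i-1},x_i)$ and the row indexed by $y_i$. Enumerating the sixteen pairs $(z_{i-1},z_i)$ and arranging them columnwise, with each fixed ``from'' state $z_{i-1}$ forming one column, reproduces the claimed matrix; the column-stochasticity $\sum_{z_i}\pi(z_i,z_{i-1})=1$ is immediate since the source transition rows and the kernel columns in (\ref{marex1a}) each sum to one.

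The content of the argument is conceptual rather than computational: the single substantive point is that the nonanticipative (causal) structure built into $\overrightarrow{\cal Q}_{0,n}(D)$ forces the past reproduction symbols to drop out of the source prediction, i.e.\ {\bf MC3} holds automatically. Once this conditional independence is in place, first-order Markovianity of $\{Z_i\}$ and the sixteen entries of ${\bf \Pi}$ follow by direct substitution and elementary bookkeeping over the four states, presenting no analytic obstacle.
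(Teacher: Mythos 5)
Your proposal is correct and takes essentially the same route as the paper: factor the one-step joint conditional as $P_{X_i|X^{i-1},Y^{i-1}}\cdot Q^*_{Y_i|Y^{i-1},X^i}$, invoke the nonanticipative (causal) structure so that past reproductions drop out of the source conditional, and then use the first-order Markov form of the BSMS($p$) together with the structure of the optimal kernel from Theorem~\ref{marex1} to reduce everything to a function of $(x_{i-1},y_{i-1})$. The remaining identification of the sixteen entries of ${\bf \Pi}$ by direct substitution is exactly the ``simple algebra'' the paper appeals to.
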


\begin{proof}
Recall the solution of the BSMS($p$) in Theorem~\ref{marex1}. Then for $i=1, 2, \ldots, $ we have
\begin{align}
P^*_{X_i, Y_i|X^{i-1},Y^{i-1}}({x}_i, y_i|x^{i-1},y^{i-1})
&=Q^*_{Y_i|Y^{i-1},X^{i}}({y}_i|y^{i-1},x^{i})P_{X_i|X^{i-1},Y^{i-1}}({x}_i|x^{i-1},y^{i-1})\nonumber\\
&=Q^*_{Y_i|Y_{i-1},X_{i}}({y}_i|y_{i-1},x_{i})P_{X_i|X_{i-1}}({x}_i|x_{i-1}).\nonumber\\
&=P^*_{X_i,Y_i|X_{i-1},Y_{i-1}}({x}_i,{y}_i|x_{i-1},y_{i-1}).\nonumber
\end{align}
This shows that the joint process is Markov. By simple algebra, we obtain (\ref{cachjodis}).
\end{proof}
\noi Since the process $\{Z_i: i=0,1, \ldots\}$ is Markov we can apply Hoeffding's  inequality \cite{glynn2002}, which bounds the probability of a function of the Markov chain to obtain an upper bound for the excess distortion probability as follows. \\
\begin{align}
\mathbb{P}^*\Big(\frac{S_n-\mathbb{E}[S_n]}{n}\geq\gamma\Big)\leq \exp\Big(-\frac{{\lambda}^2
(n\gamma -2\|f\|m/{\lambda})^2}{2n{\|f\|}^2m^2}\Big), \ \  n>2{\|f\|}m/(\lambda\gamma)  \label{HF1}
\end{align}
where ${\|f\|}=1$, $m=1$ and
$\lambda=\min\{p,1-p\}\min\{\alpha,\beta,1-\alpha,1-\beta\}$. Moreover, since ${\bf\Pi}$ defined by (\ref{cachjodis}) is a Reversible Finite State Markov Chain (R-FSMC), a tighter bound is obtained by applying \cite{leon-perron2004}, and it is given by
\begin{align}
\mathbb{P}^*\Big(\frac{S_n-\mathbb{E}[S_n]}{n}\geq\gamma\Big)\leq \exp\Big(-2\frac{1-{\lambda}_0}{1+{\lambda}_0}n{\gamma}^2\Big) \label{HF1_revers}
\end{align}
\noi where ${\lambda}_0=\max(0,{\lambda}_2)$, and ${\lambda}_2$ is the second largest eigenvalue of the transition probability matrix ${\bf\Pi}$. The fact that ${\bf\Pi}$ is R-FSMC is easily shown by verifying that $\{\pi(i,j):~(i,j)\in\Sigma\times\Sigma\}$ and its steady state probability $\{\pi(i):~i\in\Sigma\}$ satisfy $\pi(i){\pi}(j,i) =\pi(j){\pi}(i,j), \ \forall i,j$, or by applying Kolmogorov's necessary and sufficient criterion of reversibility given by \cite{kelly}
\begin{align*}
{\pi} (j_1,j_2){\pi} (j_2,j_3)\ldots{\pi} (j_n,j_1)=
{\pi} (j_1,j_n)\ldots{\pi} (j_3,j_2){\pi} (j_2,j_1).
\end{align*}
\noi This bound is illustrated in Fig.~\ref{fig_RFSMCbound_1}, $p=0.3$, $D=0.1$, $\gamma=0.1$, and illustrates how the upper bound (\ref{HF1_revers}) of the excess distortion probability changes as a function of the number of transmissions.
\begin{figure}[t]
\begin{center}
\psfrag{yl}[b]{\fontsize{9}{10}$\mathbb{P}^*\Big(\dfrac{S_n-\mathbb{E}[S_n]}{n}\geq\gamma\Big)$}
\includegraphics[scale=0.58]{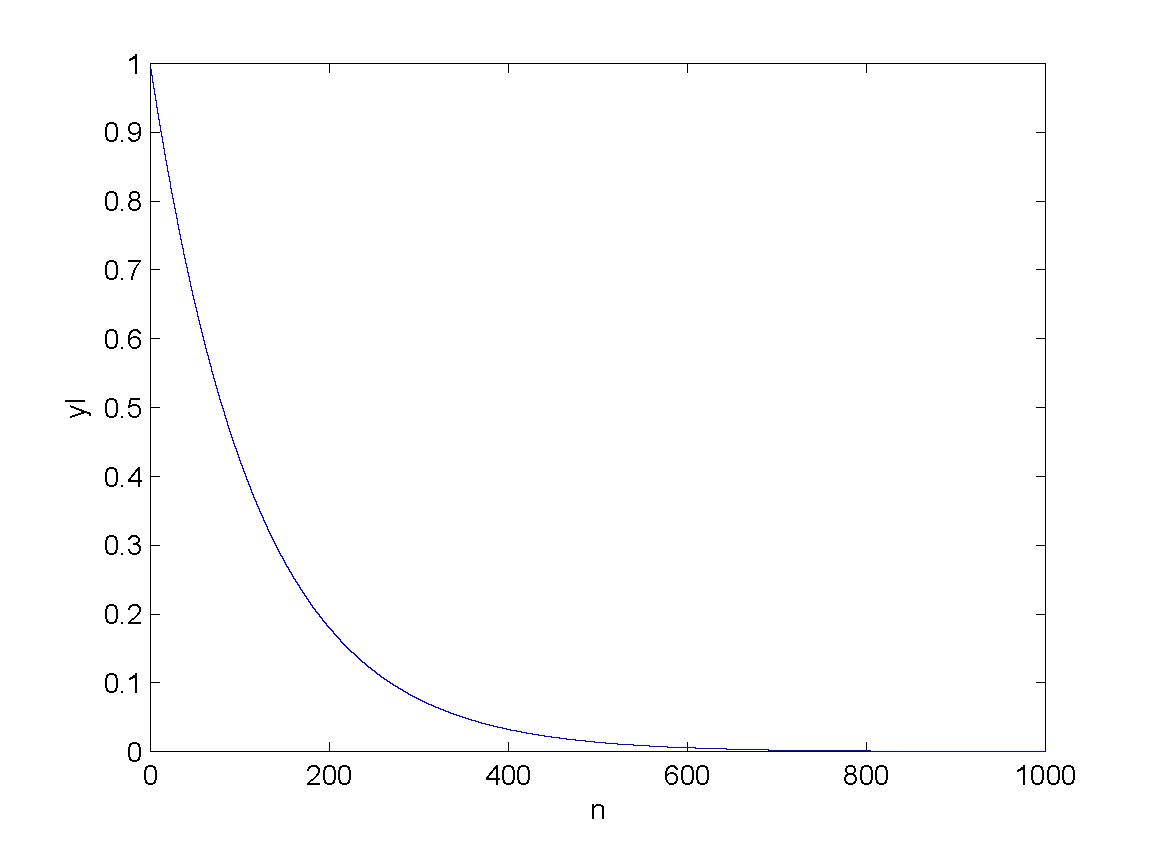}
\caption{Excess distortion probability bound (\ref{HF1_revers}) for $p=0.3,D=0.1$ and $\gamma=0.1$ ($d=D+\gamma$).}
\label{fig_RFSMCbound_1}
\end{center}
\end{figure}
\noi Next, we use results from large deviations theory to compute the error exponent of the excess distortion probability, as $n \longrightarrow \infty$.  Let $\mathbb{P}_{\sigma}^{\pi}$ denote the  joint probability associated with ${\bf \Pi}$, having an initial state $Z_0=\sigma\in\Sigma$ given by
\begin{align*}
\mathbb{P}_{\sigma}^{\pi}(Z_1=z_1,\ldots,Z_{n-1}=z_{n-1})=\pi( z_1,\sigma)\otimes_{i=2}^{n-1}\pi(z_{i},z_{i-1})
\end{align*}
and denote the expectation with respect to $\mathbb{P}_{\sigma}^{\pi}$ by $\mathbb{E}_{\sigma}^{\pi}(\cdot)$. It can be verified that ${\bf\Pi}$ is irreducible, therefore we can describe the error exponent of the excess distortion probability by the Perron-Frobenius eigenvalue of a certain non-negative matrix \cite{dembo-zeitouni1998}.\\
Let $\lambda\in\mathbb{R}$ and define the non-negative matrix ${\bf\Pi}_\lambda$ with eigenvalues
\begin{align*}
\pi_\lambda(j,i)=\pi(j,i) \exp\{{\lambda{f}(j)}\},~i,j\in\Sigma.
\end{align*}
Let $\rho({\bf \Pi}_\lambda)$ denote the Perron-Frobenius eigenvalue of ${{\bf \Pi}_\lambda}$. Then by \cite[Theorem 3.1.2]{dembo-zeitouni1998}, we have the following. For any set $\Gamma\in\mathbb{R}$ and initial state $\sigma\in\Sigma$,
\begin{align*}
-\inf_{\theta\in\Gamma^o}I(\theta)\leq\liminf_{n\longrightarrow\infty}\frac{1}{n}\log\mathbb{P}_{\sigma}^{\pi}\Big(\frac{S_n}{n}\in\Gamma\Big)\leq\limsup_{n\longrightarrow\infty}\frac{1}{n}\log\mathbb{P}_{\sigma}^{\pi}\Big(\frac{S_n}{n}\in\Gamma\Big)\leq-\inf_{\theta\in\bar{\Gamma}}I(\theta)
\end{align*}
where $\Gamma^{o}$ is the interior of $\Gamma$ and $\bar{\Gamma}$ is the closure of $\Gamma$, and $I:\mathbb{R}\longmapsto[0,\infty]$ is a convex rate function defined by
\begin{align}
I(\theta)\tri\sup_{\lambda\in\mathbb{R}}\Big\{\lambda{\theta}-\log\rho({\bf \Pi}_{\lambda})\Big\}.
\label{rate-functionck}
\end{align}
As $n$ becomes very large (i.e., $n\longrightarrow\infty$) then
\begin{align}
\mathbb{P}^{\pi}_{\sigma}\Big(\frac{S_n}{n}\geq{d}\Big)\sim\exp\Big({-n\inf_{\theta\in[d,\infty]}I(\theta)}\Big), \quad d\geq D\label{error-exponent}
\end{align}
and the exponential decay of this probability is obtained by minimizing the  rate function over $\theta \in [d, \infty]$.  Since for the evaluation of (\ref{error-exponent}), $I(\theta)$ is convex, non-decreasing function of $\theta\in[d,\infty]$, then
\begin{align*}
\lim_{n\longrightarrow\infty}\frac{1}{n}\log\mathbb{P}^{\pi}_{\sigma}\Big(\frac{S_n}{n}\geq{d}\Big)=-I(d).
\end{align*}
The graph of $I(\theta)$, for $\theta\geq{D}$ is shown in Fig.~\ref{rate_function}, and indicates the rate of exponential decay of the excess distortion probability as a function of $\theta \geq D$.

\begin{figure}
\begin{center}
{\includegraphics[scale=0.58]{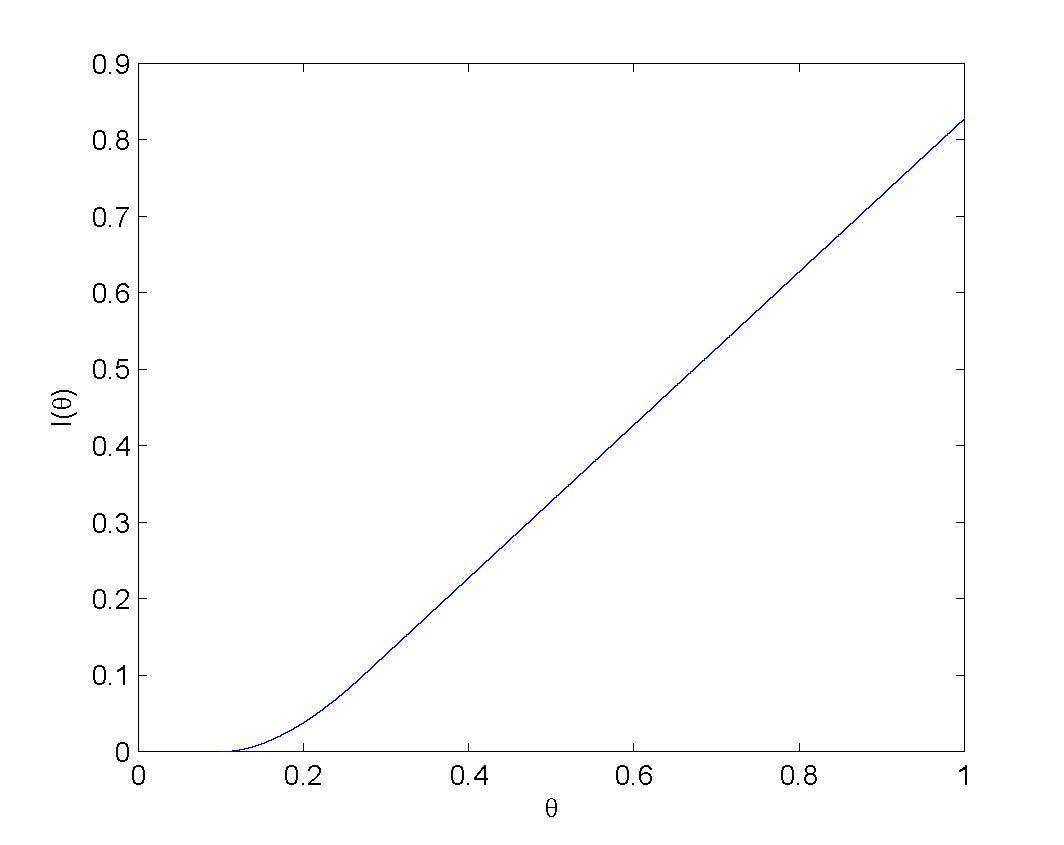}}
\caption{ Rate function (\ref{rate-functionck}) for $p=0.3$ and $D=0.1$.}
\label{rate_function}
\end{center}
\end{figure}

\subsection{Noiseless Coding Theorem}\label{sequential_causal_rdf}

\par It is straight forward to verify by invoking Lemma~\ref{equivalent_statements} and Theorem~\ref{equivalent_rdf}, that the   the coding theorem derived in \cite[Chapter 5]{tatikonda2000} for two dimensional sources $X^{n,s}\tri\{X_{i,j}:i=0,\ldots,n,j=0,\ldots,s\}\in\times_{i=0}^n\times_{j=0}^s{\cal X}_{i,j}$, where $i$ represents time index and $j$ represents spatial index, gives an operational meaning to ${R}^{na}_{0,n}(D)$ (even for finite $n$, when the process is IID in the spatial component). The connection of the coding theorem for the so-called sequential RDF derived in \cite[Chapter 5]{tatikonda2000} and nonanticipative RDF is described in \cite[Appendix G]{stavrou-kourtellaris-charalambous2014arxiv}. The aforementioned coding theorem for two-dimensional sources is recently utilized in video coding applications \cite{ma-ishwar2011}, where the authors derived a coding theorem giving an operational meaning to $R^{na}_{0,n}(D)$.


\section{Conclusion}

\par A variant of the classical RDF of the OPTA by noncausal codes, called information nonanticipative RDF, which imposes a nonanticipative or causality constraint on the optimal reproduction conditional distribution is investigated in the context of its applications in JSCC design using nonanticipative transmission schemes with respect to average and excess distortion probability, and in evaluating the RL of the OPTA by zero-delay and causal codes with respect to noncausal codes. These applications are employed to two working examples, the BSMS($p$) with Hamming distance distortion, and the multidimensional partially observed Gaussian-Markov source. It is our belief that the results derived in this paper provide a crucial step towards the complete investigation of two fundamental problems in information theory, the evaluation of nonanticipative RDF in systems where sources with memory are considered, and in nonanticipative JSCC system design.

\section*{Acknowledgement}

The authors wish to acknowledge the contribution of the associate editor and the reviewers in improving the presentation and results of the paper.

\appendices

\section{Proof of Lemma~\ref{equivalent_statements}}\label{appendix_equivalent_statements}
\par The equivalence of {\bf MC1}, {\bf MC2} and {\bf MC3} is straightforward hence it is omitted. To this end, we show equivalence of {\bf MC4} to any of {\bf MC1}, {\bf MC2} and {\bf MC3}. We proceed with the derivation, by often assuming existence of densities, which are denoted by lower case letters $\bar{p}(\cdot|\cdot)$, to avoid lengthy measure theoretic arguments.
\vspace*{0.2cm}\\
\noi{\bf MC4} $\Longrightarrow$ {\bf MC3}: Since for $i=0,\ldots,n-1$, by {\bf MC4} we have
\begin{align*}
P_{X_{i+1}^n|X^i,Y^i}(dx_{i+1}^n|x^i,y^i)=P_{X_{i+1}^n|X^i}(dx_{i+1}^n|x^i)
\end{align*}
then by integrating over ${\cal X}_{i+2,n}$ both sides of the previous identity we obtain {\bf MC3}.
\vspace*{0.2cm}\\
{\bf MC4} $\Longleftarrow$ {\bf MC3}: Since {\bf MC3} $\Longleftrightarrow$ {\bf MC2}, we show that if $X_{i+1}^n\leftrightarrow({X^i},Y^{i-1})\leftrightarrow{Y_i}$ forms a MC for $i=0,1,\ldots, n-1$, then $X_{i+1}^n\leftrightarrow{X^i}\leftrightarrow{Y^i}$ forms a MC for $i=0,1,\ldots, n-1.$ We show this by induction. First, we show that $(X_{i+1},X_{i+2})\leftrightarrow{X^i}\leftrightarrow{Y^i}$ forms a MC, or equivalently, $\bar{p}(x_{i+1},x_{i+2}|x^i,y^i)=\bar{p}(x_{i+1},x_{i+2}|x^i)$. Since
\begin{align*}
\bar{p}(x_{i+1},x_{i+2}|x^i,y^i)&=\frac{\bar{p}(x^i,x_{i+1},x_{i+2},y^i)}{\bar{p}(x^i,y^i)}=\frac{\bar{p}(y_i|y^{i-1},x^{i+2})\bar{p}(y^{i-1},x^{i+2})}{\bar{p}(x^i,y^i)}\\
&=\frac{\underbrace{\bar{p}(y_i|y^{i-1},x^i)}_{(a)}\bar{p}(x_{i+2}|x^{i+1},y^{i-1})\bar{p}(x^{i+1},y^{i-1})}{\bar{p}(x^i,y^i)}\\
&=\frac{\bar{p}(y_i|y^{i-1},x^i)\underbrace{\bar{p}(x_{i+2}|x^{i+1})}_{(b)}\bar{p}(x_{i+1}|x^i,y^{i-1})\bar{p}(x^i,y^{i-1})}{\bar{p}(y_i|y^{i-1},x^i)\bar{p}(x^i,y^{i-1})}\\
&=\bar{p}(x_{i+2}|x^{i+1})\underbrace{\bar{p}(x_{i+1}|x^i)}_{(c)}=\bar{p}(x_{i+2},x_{i+1}|x^i)
\end{align*}
where $(a)$ is implied from {\bf MC2}, while $(b)$, $(c)$ follows from {\bf MC3} $\Longleftrightarrow$ {\bf MC2}. 
Hence, {\bf MC4} holds for $n=i+2$.\\
Suppose $X_{i+1}^k\leftrightarrow{X^i}\leftrightarrow{Y^i}$ forms a MC, for some $i+2\leq{k}<n-1$. We show that it holds for $k\longrightarrow{k+1}$.
\begin{align*}
\bar{p}(x_{i+1}^{k+1}|x^i,y^i)&=\frac{\bar{p}(x_{i+1}^{k+1},x^i,y^i)}{\bar{p}(x^i,y^i)}=\frac{\bar{p}(x_{k+1}|x_{i+1}^k,x^i,y^i)\bar{p}(x^k_{i+1},x^i,y^i)}{\bar{p}(x^i,y^i)}\\
&=\frac{\underbrace{\bar{p}(x_{k+1}|x^k)}_{(d)}\bar{p}(x^k_{i+1}|x^i,y^i)\bar{p}(x^i,y^i)}{\bar{p}(x^i,y^i)}=\bar{p}(x_{k+1}|x^k)\underbrace{\bar{p}(x^k_{i+1}|x^i)}_{(e)}=\bar{p}(x_{i+1}^{k+1}|x^i)
\end{align*}
where $(d)$, $(e)$ follow from {\bf MC3} $\Longleftrightarrow$ {\bf MC2}. This completes the derivation.\qed
\section{Proof of Theorem~\ref{property3}}\label{appendix_property3}

\par Let $s$ be the the Lagrange multiplier which is part of the optimal solution which solves the information nonanticipative RDF. Then, by Theorem~\ref{th6} the optimal reproduction distribution is expressed as
\begin{align}
Q_{Y_i|Y^{i-1},X^i}^*(F_i|y^{i-1},x^i) = \int_{F_i}e^{s\rho(T^ix^n,T^iy^n)}\lambda_i(x^i,y^{i-1})P_{Y_i|Y^{i-1}}^*(dy_i|y^{i-1}),~\forall F_i \in {\cal{B}}({\cal Y}_{i}),~\forall{i}\in\mathbb{N}^n. \label{f10i}
\end{align}
By integrating (\ref{f10i}) with respect to $P^*_{0,i}(dx^i|y^{i-1})$ we obtain the expression
\begin{align*}
&\int_{{{\cal X}_{0,i}}}Q_{Y_i|Y^{i-1},X^i}^*(F_i|y^{i-1},x^i)\otimes{P}_{X^i|Y^{i-1}}^*(dx^{i}|y^{i-1})= {P}_{X^i,Y_i|Y^{i-1}}^*({{\cal X}_{0,i}}\times F_i|y^{i-1})\\
&=\int_{F_i\times {\cal X}_{0,i}} e^{s\rho(T^ix^n,T^iy^n)}\lambda_i(x^i,y^{i-1})P_{Y_i|Y^{i-1}}^*(dy_i|y^{i-1})\otimes{P}_{X^i|Y^{i-1}}^*(dx^{i}|y^{i-1}),~\forall{F_i}\in{\cal B}({\cal Y}_i),~i\in\mathbb{N}^n.
\end{align*}
Moreover, $\forall F_i \in {\cal{B}}({\cal Y}_{i}),~i\in\mathbb{N}^n$,
\begin{align}
P_{Y_i|Y^{i-1}}^*(F_i|y^{i-1})&=\int_{F_i}P_{Y_i|Y^{i-1}}^*(dy_i|y^{i-1})={P}_{X^i,F_i|Y^{i-1}}^*({{\cal X}_{0,i}}\times F_i|y^{i-1})\nonumber\\
 &= \int_{F_i \times{\cal X}_{0,i}} e^{s\rho(T^ix^n,T^iy^n)}\lambda_{i}(x^i,y^{i-1}){P}_{0,i}^*(dx^{i}|y^{i-1})\otimes {P}_{Y_i|Y^{i-1}}^*(dy_i|y^{i-1}).\label{equation800}
\end{align}
Utilizing (\ref{equation800}) we finally obtain
\begin{align*}
\int_{{{\cal X}_{0,i}}} e^{s
\rho(T^ix^n,T^iy^n)} \lambda_i(x^i,y^{i-1}){P}_{X^i|Y^{i-1}}^*(dx^{i}|y^{i-1}) =1,~~~P_{Y_i|Y^{i-1}}^*-a.s.,~\forall{i}\in\mathbb{N}.
\end{align*}
This completes the derivation.\qed
\section{Proof of Theorem~\ref{alternative_expression}}\label{appendix_alternative_expression}

\par Let $s\leq{0}$, $\lambda\in\Psi_s$ and $\overrightarrow{Q}_{Y^n|X^n}(\cdot|x^n)\in\overrightarrow{\cal Q}_{0,n}(D)$ be given. Then, using the fact that 
\begin{align*}
\frac{1}{n+1}\sum_{i=0}^n\int_{{\cal X}_{0,i}\times{\cal Y}_{0,i}}\rho(T^ix^n,T^iy^n)(P_{X^i}\otimes\overrightarrow{Q}_{Y^i|X^i})(dx^i,dy^i)\leq{D}
\end{align*}
gives
\begin{align}
&\mathbb{I}_{X^n\rightarrow{Y^n}}(P_{X^n},\overrightarrow{Q}_{Y^n|X^n})-sD(n+1)-\sum_{i=0}^n\int_{{\cal X}_{0,i}\times{\cal Y}_{0,i-1}}\log\Big(\lambda_i(x^i,y^{i-1})\Big)(P_{X^i}\otimes\overrightarrow{Q}_{Y^{i-1}|X^{i-1}})(dx^i,dy^{i-1})\nonumber\\
&\geq\sum_{i=0}^n\int_{{\cal X}_{0,i}\times{\cal Y}_{0,i}}\log\Big(\frac{Q_{Y_i|Y^{i-1},X^i}(dy_i|y^{i-1},x^i)}{P_{Y_i|Y^{i-1}}(dy_i|y^{i-1})}\Big)(P_{X^i}\otimes\overrightarrow{Q}_{Y^i|X^i})(dx^i,dy^{i})\nonumber\\
&\qquad-s\sum_{i=0}^n\int_{{\cal X}_{0,i}\times{\cal Y}_{0,i}}\rho(T^ix^n,T^iy^n)(P_{X^i}\otimes\overrightarrow{Q}_{Y^i|X^i})(dx^i,dy^i)\nonumber\\
&\qquad-\sum_{i=0}^n\int_{{\cal X}_{0,i}\times{\cal Y}_{0,i}}\log\Big(\lambda_i(x^i,y^{i-1})\Big)(P_{X^i}\otimes\overrightarrow{Q}_{Y^i|X^i})(dx^i,dy^{i})\nonumber\\
&=\sum_{i=0}^n\int_{{\cal X}_{0,i}\times{\cal Y}_{0,i}}\log\bigg(\frac{Q_{Y_i|Y^{i-1},X^i}(dy_i|y^{i-1},x^i)e^{-s\rho(T^ix^n,T^iy^n)}}{P_{Y_i|Y^{i-1}}(dy_i|y^{i-1})\lambda_i(x^i,y^{i-1})}\bigg)(P_{X^i}\otimes\overrightarrow{Q}_{Y^i|X^i})(dx^i,dy^{i})\nonumber
\end{align}
\begin{align}
&=\sum_{i=0}^n\int_{{\cal X}_{0,i-1}\times{\cal Y}_{0,i-1}}\Bigg\{\int_{{\cal X}_i\times{\cal Y}_i}\log\bigg(\frac{Q_{Y_i|Y^{i-1},X^i}(dy_i|y^{i-1},x^i)e^{-s\rho(T^ix^n,T^iy^n)}}{P_{Y_i|Y^{i-1}}(dy_i|y^{i-1})\lambda_i(x^i,y^{i-1})}\bigg)\nonumber\\
&\qquad\qquad Q_{Y_i|Y^{i-1},X^i}(dy_i|y^{i-1},x^i)\otimes{P}_{X_i|X^{i-1}}(dx_i|x^{i-1})\Bigg\}\otimes{P}_{X^{i-1},Y^{i-1}}(dx^{i-1},dy^{i-1})\nonumber\\
&\stackrel{(a)}\geq\sum_{i=0}^n\int_{{\cal X}_{0,i-1}\times{\cal Y}_{0,i-1}}\Bigg\{\int_{{\cal X}_i\times{\cal Y}_i}\bigg(1-\frac{e^{s\rho(T^ix^n,T^iy^n)}P_{Y_i|Y^{i-1}}(dy_i|y^{i-1})\lambda_i(x^i,y^{i-1})}{Q_{Y_i|Y^{i-1},X^i}(dy_i|y^{i-1},x^i)}\bigg)\nonumber\\
&\qquad\qquad Q_{Y_i|Y^{i-1},X^i}(dy_i|y^{i-1},x^i)\otimes{P}_{X_i|X^{i-1}}(dx_i|x^{i-1})\Bigg\}\otimes{P}_{X^{i-1},Y^{i-1}}(dx^{i-1},dy^{i-1})\nonumber\\
&=\sum_{i=0}^n\Bigg\{1-\int_{{\cal X}_{0,i-1}\times{\cal Y}_{0,i-1}}\int_{{\cal X}_i\times{\cal Y}_i}{e^{s\rho(T^ix^n,T^iy^n)}\lambda_i(x^i,y^{i-1})}P_{Y_i|Y^{i-1}}(dy_i|y^{i-1})\nonumber\\
&\qquad\qquad\otimes{P}_{X_i|X^{i-1}}(dx_i|x^{i-1})\otimes{P}_{X^{i-1},Y^{i-1}}(dx^{i-1},dy^{i-1})\Bigg\}\nonumber\\
&=\sum_{i=0}^n\Bigg\{1-\int_{{\cal Y}_i}P_{Y_i|Y^{i-1}}(dy_i|y^{i-1})\int_{{\cal Y}_{0,i-1}}P_{Y^{i-1}}(dy^{i-1})\bigg(\int_{{\cal X}_{0,i}}{e^{s\rho(T^ix^n,T^iy^n)}\lambda_i(x^i,y^{i-1})}\otimes{P}_{X^i|Y^{i-1}}(dx^{i}|y^{i-1})\bigg)\Bigg\}\nonumber\\
&\stackrel{(b)}\geq\sum_{i=0}^n\Big(1-\int_{{\cal Y}_{0,i}}P_{Y^i}(dy^i)\Big)=0\nonumber
\end{align}
where $(a)$ follows from the inequality $\log{x}\geq1-\frac{1}{x},~x>0$, and $(b)$ follows from (\ref{equation-ls}). Hence,
\begin{align}
&R^{na}_{0,n}(D)\stackrel{(c)}\geq\sup_{s\leq{0}}\sup_{\lambda\in\Psi_s}\Big\{sD(n+1)+\sum_{i=0}^n\int_{{\cal X}_{0,i}\times{\cal Y}_{0,i-1}}\log\Big(\lambda_i(x^i,y^{i-1})\Big)P_{X^{i-1},Y^{i-1}}(dx^{i-1},dy^{i-1})\otimes{P}_{X_i|X^{i-1}}(dx_i|x^{i-1})\Big\}.\nonumber
\end{align}
However, equality in $(c)$ holds if 
\begin{align*}
\lambda_i(x^i,y^{i-1})\tri\Bigg(\int_{{\cal Y}_i}e^{s\rho(T^ix^n,T^iy^n)}P^*_{Y_i|Y^{i-1}}(dy_i|y^{i-1})\Bigg)^{-1},~i=0,1,\ldots,n.
\end{align*}
This completes the derivation.\qed

\section{Proof of Theorem~\ref{solution_gaussian}}\label{appendix_solution_gaussian}

\par The derivation is based on showing that $R^{na}(D)$ is bounded above and below by the RHS of (\ref{equa.13}). The lower bound is obtained by using Theorem~\ref{alternative_expression}, to derive a lower bound analogous to Shannon's lower bound. Define
\begin{align*}
H_\infty=\lim_{t\longrightarrow\infty}H_t,~H_t\tri{d}iag\{\eta_{t,1},\ldots,\eta_{t,p}\},~\eta_{t,i}=1-\frac{\delta_{t,i}}{\lambda_{t,i}},~i=1,\ldots,p,~t\in\mathbb{N}.
\end{align*}
\noi Consider the additive noisy channel with feedback of the form
\begin{align}
\tilde{K}_t&=E_t^{tr}H_tE_t\Big(X_t-\mathbb{E}\big\{X_t|\sigma\{Y^{t-1}\}\big\}\Big)+E_t^{tr}{\cal B}_tV^c_t=E_t^{tr}H_tE_tK_t+E_t^{tr}{\cal B}_tV^c_t,~t\in\mathbb{N}^n\label{equation95ii}
\end{align}
where $\{V^c_t:t\in\mathbb{N}\}$ is an independent Gaussian zero mean process with covariance $cov(V^c_t)=Q=diag\{q_1,\ldots,q_p\}$, and $\{{\cal B}_t:~t\in\mathbb{N}\}$ is to be determined.\\
Next, we show by letting ${\cal B}_\infty=\lim_{t\longrightarrow\infty}{\cal B}_t$, where ${\cal B}_t=\sqrt{H_t\Delta_t{Q}^{-1}}$, and $\Delta_t\tri{d}iag\{\delta_{t,1},\ldots,\delta_{t,p}\}$, that $\Lambda_\infty=\lim_{t\longrightarrow\infty}\Lambda_t
=\lim_{t\longrightarrow\infty} \mathbb{E}\Big\{K_tK^{tr}_t\Big\}$, and also $\lim_{n\longrightarrow\infty}\frac{1}{n+1}\mathbb{E}\Big\{\sum_{t=0}^n||X_t-Y_t||_{\mathbb{R}^p}^2\Big\}=\lim_{n\longrightarrow\infty}\frac{1}{n+1}\mathbb{E}\Big\{\sum_{t=0}^n||K_t-\tilde{K}_t||_{\mathbb{R}^p}^2\Big\}=D$. Clearly, by (\ref{equation52}), (\ref{eq.12i}), (\ref{equation95ii})
\begin{align}
&\lim_{t\longrightarrow\infty}\mathbb{E}\Big{\{}(X_t-{Y}_t)^{tr}(X_t-{Y}_t)\Big{\}}=\lim_{t\longrightarrow\infty}Trace~\mathbb{E}\Big{\{}(K_t-\tilde{K}_t)(K_t-\tilde{K}_t)^{tr}\Big{\}}\nonumber\\
&=\lim_{t\longrightarrow\infty}Trace~\mathbb{E}\Big{\{}(K_t-E_t^{tr}H_tE_tK_t-E_t^{tr}{\cal B}_tV^c_t)(K_t-E_t^{tr}H_tE_tK_t-E_t^{tr}{\cal B}_tV^c_t)^{tr}\Big{\}}\nonumber\\
&=\lim_{t\longrightarrow\infty} Trace\Big{\{}E_t^{tr}\Big{(}(I-H_t)diag(\lambda_{t,1},\ldots,\lambda_{t,p})(I-H_t)^{tr}+({\cal B}_tQ{\cal B}_t^{tr})\Big{)}E_t\Big{\}}\nonumber\\
&\stackrel{(a)}=\lim_{t\longrightarrow\infty}Trace\Big{\{}diag(\delta_{t,1},\ldots,\delta_{t,p})\Big{\}}=Trace\Big{\{}diag(\delta_{\infty,1},\ldots,\delta_{\infty,p})\Big{\}}=D~~~~\nonumber
\end{align}
where $(a)$ holds by setting ${\cal B}_{\infty}$ and ${\cal B}_t$ as in (\ref{equation11mmma}). Also, by (\ref{equation.2}),
\begin{align}
&\lim_{n\longrightarrow\infty}\frac{1}{n+1}R_{0,n}^{na,K^n,\tilde{K}^n}(D)\leq\lim_{n\longrightarrow\infty}\frac{1}{n+1}\mathbb{I}_{X^n\rightarrow{Y^n}}(P_{K^n},\overrightarrow{Q}_{\tilde{K}^n|K^n})\nonumber\\
&=\lim_{n\longrightarrow\infty}\frac{1}{n+1}\sum_{t=0}^n{I}(K_t;\tilde{K}_t|\tilde{K}^{t-1})=\lim_{n\longrightarrow\infty}\frac{1}{n+1}\sum_{t=0}^n\Big(H(\tilde{K}_t|\tilde{K}^{t-1})-H(\tilde{K}_t|\tilde{K}^{t-1},K_t)\Big)\nonumber\\
&\stackrel{(b)}\leq\lim_{n\longrightarrow\infty}\frac{1}{n+1}\sum_{t=0}^n\Big(H(\tilde{K}_t)-H(\tilde{K}_t|\tilde{K}^{t-1},K_t)\Big)\stackrel{(c)}\leq\lim_{n\longrightarrow\infty}\frac{1}{n+1}\sum_{t=0}^n\Big(H(\tilde{K}_t)-H(\tilde{K}_t|K_t)\Big)\nonumber\\
&\stackrel{(d)}\leq\lim_{n\longrightarrow\infty}\frac{1}{n+1}\sum_{t=0}^n\Big(H(\tilde{K}_t)-H(E_t^{tr}{\cal B}_tV^c_t)\Big)\label{equa.18}
\end{align}
where $(b)$ follows from the fact that conditioning reduces entropy, $(c)$ follows from the fact that $\tilde{K}_t=E_t^{tr}H_tE_tK_t+E_t^{tr}{\cal B}_tV^c_t$ is a memoryless Gaussian channel, and $(d)$ follows from the orthogonality of $K_t$ and $V_t^c$,~$\forall{t}\in\mathbb{N}$. Next, we compute the entropy rates appearing in (\ref{equa.18}). The covariance of the Gaussian zero mean, noise process $\{E_t^{tr}{\cal B}_tV^c_t,~t\in\mathbb{N}\}$ is obtained as follows.
\begin{align}
\lim_{t\longrightarrow\infty}\mathbb{E}\Big\{(E_t^{tr}{\cal B}_tV^c_t)(E_t^{tr}{\cal B}_tV_t^{c})^{tr}\Big\}&=\lim_{t\longrightarrow\infty}\mathbb{E}\Big\{E_t^{tr}{\cal B}_tV^c_tV_t^{c,tr}{\cal B}_t^{tr}E_t\Big\}\nonumber\\
&=\lim_{t\longrightarrow\infty}\Big\{E_t^{tr}\sqrt{H_t\Delta_tQ^{-1}}Q\sqrt{H_t\Delta_tQ^{-1}}E_t\Big\}\nonumber\\
&=\lim_{t\longrightarrow\infty}E_t^{tr}H_t\Delta_tE_t=\lim_{t\longrightarrow\infty}E_t^{tr}diag\{\eta_{t,1}\delta_{t,1},\ldots,\eta_{t,p}\delta_{t,p}\}E_t\nonumber\\
&=E^{tr}_{\infty}diag\{\eta_{\infty,1}\delta_{\infty,1},\ldots,\eta_{\infty,p}\delta_{\infty,p}\}E_\infty.\label{equa.19}
\end{align}
The covariance of the process $\{\tilde{K}_t:~t\in\mathbb{N}\}$ is obtained as follows.
\begin{align}
\lim_{t\longrightarrow\infty}\mathbb{E}\Big\{\tilde{K}_t\tilde{K}^{tr}_t\Big\}
&=\lim_{t\longrightarrow\infty}\mathbb{E}\Big\{E_t^{tr}H_tE_tK_tK_t^{tr}E_t^{tr}H_tE_t+E_t^{tr}{\cal B}_tV^c_tV_t^{c,tr}{\cal B}_t^{tr}E_t\Big\}\nonumber\\
&=\lim_{t\longrightarrow\infty}\Big\{E_t^{tr}H_tE_t\Lambda_tE_t^{tr}H_tE_t+E_t^{tr}\sqrt{H_t\Delta_tQ^{-1}}Q
\sqrt{H_t\Delta_tQ^{-1}}E_t\Big\}\nonumber\\
&=\lim_{t\longrightarrow\infty}\Big\{E_t^{tr}\big(diag\{\eta^2_{t,1}\lambda_{t,1},\ldots,\eta^2_{t,p}\lambda_{t,p}\}+diag\{\eta_{t,1}\delta_{t,1},\ldots,\eta_{t,p}\delta_{t,p}\}\big)E_t\Big\}\nonumber\\
&=\lim_{t\longrightarrow\infty}E_t^{tr}diag\{\lambda_{t,1}-\delta_{t,1},\ldots,\lambda_{t,p}-\delta_{t,p}\}E_t,~\lambda_{t,i}-\delta_{t,i}\geq{0},\forall{t}\nonumber\\
&=E_\infty^{tr}diag\{\lambda_{\infty,1}-\delta_{\infty,1},\ldots,\lambda_{\infty,p}-\delta_{\infty,p}\}E_\infty.\label{equa.20}
\end{align}
Using (\ref{equa.20}) we obtain the first term of (\ref{equa.18}) as follows
\begin{align}
\lim_{n\longrightarrow\infty}\frac{1}{n+1}\sum_{t=0}^nH(\tilde{K}_t)
&=\lim_{n\longrightarrow\infty}\frac{1}{2}\frac{1}{n+1}\sum_{t=0}^n\log\Big\{\big(2{\pi}e\big)\times_{i=1}^p\big(\lambda_{t,i}-\delta_{t,i}\big)^{+}\Big\}\nonumber\\
&=\lim_{n\longrightarrow\infty}\frac{1}{2}\frac{1}{n+1}\sum_{t=0}^n\sum_{i=1}^p\log\Big\{\big(2{\pi}e\big)\big(\lambda_{t,i}-\delta_{t,i}\big)^{+}\Big\}\nonumber\\
&=\frac{1}{2}\sum_{i=1}^p\log\Big\{\big(2{\pi}e\big)\big(\lambda_{\infty,i}-\delta_{\infty,i}\big)^{+}\Big\}.\label{equa.21}
\end{align}
Also, by (\ref{equa.19}), we obtain the second term in (\ref{equa.18}) as follows.
\begin{align}
\lim_{n\longrightarrow\infty}&\frac{1}{n+1}\sum_{t=0}^nH(E_t^{tr}{\cal B}_tV^c_t)=\lim_{n\longrightarrow\infty}\frac{1}{2}\frac{1}{n+1}\sum_{t=0}^n\log\Big\{\big(2{\pi}e\big){d}iag\{\eta_{t,1}\delta_{t,1},\ldots,\eta_{t,p}\delta_{t,p}\}\Big\}\nonumber\\
&=\frac{1}{2}\sum_{t=0}^n\log\Big\{\big(2{\pi}e\big)\times_{i=1}^p\big(\eta_{\infty,i}\delta_{\infty,i}\big)\Big\}=\frac{1}{2}\sum_{i=1}^p\log\Big\{\big(2{\pi}e\big)\big(\eta_{\infty,i}\delta_{\infty,i}\big)\Big\}.\label{equa.22}
\end{align}
By using (\ref{equa.21}) and (\ref{equa.22}) in (\ref{equa.18}) we have the following upper bound
\begin{align}
\lim_{n\longrightarrow\infty}R_{0,n}^{na,K^n,\tilde{K}^n}(D)&\leq\frac{1}{2}\sum_{i=1}^p\log\Big\{\big(2{\pi}e\big)\big(\lambda_{\infty,i}-\delta_{\infty,i}\big)^{+}\Big\}-\frac{1}{2}\sum_{i=1}^p\log\Big\{\big(2{\pi}e\big)\big(\eta_{\infty,i}\delta_{\infty,i}\big)\Big\}\nonumber\\
&=\frac{1}{2}\sum_{i=1}^p\log\Big\{\frac{\big(\lambda_{\infty,i}-\delta_{\infty,i}\big)^{+}}{\eta_{\infty,i}\delta_{\infty,i}}\Big\}
=\frac{1}{2}\sum_{i=1}^p\log\frac{\lambda_{\infty,i}}{\delta_{\infty,i}}\nonumber
\end{align}
where $\delta_{\infty,i}=\min\{\xi_\infty,\lambda_{\infty,i}\}$ and $\sum_{i=1}^p\delta_{\infty,i}=D$.\\
\noi{\it Achievable Lower Bound~(Analogous to Shannon's Lower Bound).} Next, we apply Theorem~\ref{alternative_expression} to obtain a lower bound for the nonanticipative RDF $R^{na}(D)=\lim_{n\longrightarrow\infty}\frac{1}{n+1}R^{na}_{0,n}(D)=\lim_{n\longrightarrow\infty}\frac{1}{n+1}R_{0,n}^{na, K^n,\tilde{K}^n}(D)$.\\
\noi Applying Theorem~\ref{alternative_expression} to $R_{0,n}^{na, K^n,\tilde{K}^n}(D)$, the set $\Psi_s$ is defined by 
\begin{align}
\Psi_s\tri\Big\{&\{\lambda_t(k_t,\tilde{k}^{t-1}):~t=0,1,\ldots,n\}:\lambda_t(k_t,\tilde{k}^{t-1})\geq{0},\nonumber\\
&\int{e}^{s||K_t-\tilde{K}_t||_{\mathbb{R}^p}^2}{\lambda(k_t,\tilde{k}^{t-1})}\bar{p}(k_t|\tilde{k}^{t-1})dk_t\leq{1},~t=0,1,\ldots,n\Big\}\label{equa.24}
\end{align}
where $\bar{p}(k_t|\tilde{k}^{t-1})$ denotes the conditional density of $k_t$ given $\tilde{k}^{t-1}$. Choose $s\leq{0}$ and take $\lambda_t(k_t,\tilde{k}^{t-1})\in\Psi_s$ such that
\begin{align}
\lambda_t(k_t,\tilde{k}^{t-1})=\frac{\alpha_t}{\bar{p}(k_t|\tilde{k}^{t-1})}\label{equa.23}
\end{align}
for some $\alpha_t$ not depending on $k_t$. Substituting (\ref{equa.23}) into the reduced integral inequality in (\ref{equa.24}) we obtain
\begin{align*}
{\alpha_t}\int{e}^{s||K_t-\tilde{K
}_t||_{\mathbb{R}^p}^2}dk_t\leq{1},~t=0,1,\ldots,n.
\end{align*}
By changing the variable of integration we also obtain
\begin{align}
{\alpha_t}\int{e}^{s||z_t||_{\mathbb{R}^p}^2}dz_t=\alpha_t\sqrt{(-\frac{\pi}{s})^{p}}=\alpha_t(-\frac{\pi}{s})^{\frac{p}{2}}\leq{1},~t=0,1,\ldots,n.\label{equa.25}
\end{align}
\noi By setting $\alpha_t=(-\frac{s}{\pi})^{\frac{p}{2}},~t=0,1,\ldots,n$, the inequality of (\ref{equa.25}) holds with equality. Then, by Theorem~\ref{alternative_expression}, we have
\begin{align}
\lim_{n\longrightarrow\infty}\frac{1}{n+1}R_{0,n}^{na,K^n,\tilde{K}^n}(D)&\geq{s}D+\lim_{n\longrightarrow\infty}\frac{1}{n+1}\sum_{t=0}^n\int_{K_t\times\tilde{K}_{0,t-1}}\bar{p}(k_t,\tilde{k}^{t-1})\log\Big(\lambda_t(k_t,\tilde{k}^{t-1})\Big)dk_td\tilde{k}^{t-1}\nonumber\\
&\stackrel{(e)}=sD+\lim_{n\longrightarrow\infty}\frac{1}{n+1}\sum_{t=0}^n\int_{K_t\times\tilde{K}_{0,t-1}}\bar{p}(k_t,\tilde{k}^{t-1})\log\Big(\frac{(-\frac{s}{\pi})^{\frac{p}{2}}}{\bar{p}(k_t|\tilde{k}^{t-1})}\Big)dk_t{d}\tilde{k}^{t-1}\nonumber\\
&=sD+\lim_{n\longrightarrow\infty}\frac{1}{n+1}\sum_{t=0}^n\log(-\frac{s}{\pi})^{\frac{p}{2}}+\lim_{n\longrightarrow\infty}\frac{1}{n+1}\sum_{t=0}^nH(K_t|\tilde{K}^{t-1})\nonumber\\
&\stackrel{(f)}=sD+\lim_{n\longrightarrow\infty}\frac{1}{n+1}\sum_{t=0}^n\log(-\frac{s}{\pi})^{\frac{p}{2}}+\lim_{n\longrightarrow\infty}\frac{1}{n+1}\sum_{t=0}^nH(K_t)\label{equa.26}
\end{align}
where $(e)$ follows from (\ref{equa.23}), and $(f)$ follows from the orthogonality of $K_t$ and $\tilde{K}^{t-1}$. Next, we need to find the Lagrangian multiplier $``s"$ so that the lower bound (\ref{equa.26}) equals $\frac{1}{2}\sum_{i=1}^p\log\frac{\lambda_{\infty,i}}{\delta_{\infty,i}}$. To this end, we need to ensure existence of some $s<0$ such that the following identity holds.
\begin{align}
&sD+\lim_{n\longrightarrow\infty}\frac{1}{n+1}\sum_{t=0}^n\log(-\frac{s}{\pi})^{\frac{p}{2}}+\lim_{n\longrightarrow\infty}\frac{1}{2}\frac{1}{n+1}\sum_{t=0}^n\log2\pi{e}|\Lambda_t|=\lim_{n\longrightarrow\infty}\frac{1}{2}\frac{1}{n+1}\sum_{t=0}^n\sum_{i=1}^p\log\frac{\lambda_{t,i}}{\delta_{t,i}}\nonumber\\
&\Longrightarrow{s}D+\lim_{n\longrightarrow\infty}\frac{1}{2}\frac{1}{n+1}\sum_{t=0}^n\sum_{i=1}^p\log(-\frac{s}{\pi})+\frac{1}{2}\sum_{i=1}^p\log2\pi{e}(\lambda_{\infty,i})=\frac{1}{2}\sum_{i=1}^p\log\frac{\lambda_{\infty,i}}{\delta_{\infty,i}}\nonumber\\
&\Longrightarrow\lim_{n\longrightarrow\infty}\frac{1}{2}\frac{1}{n+1}\log{e}^{2(n+1)s\sum_{i=1}^p\delta_{t,i}}+\lim_{n\longrightarrow\infty}\frac{1}{2}\frac{1}{n+1}\sum_{t=0}^n\sum_{i=1}^p\log(-\frac{s}{\pi})\nonumber\\
&=\frac{1}{2}\sum_{i=1}^p\log\frac{\lambda_{\infty,i}}{\delta_{\infty,i}}-\frac{1}{2}\sum_{i=1}^p\log2\pi{e}(\lambda_{\infty,i})\nonumber\\
&\Longrightarrow\lim_{n\longrightarrow\infty}\frac{1}{2}\frac{1}{n+1}\sum_{t=0}^n\sum_{i=1}^p\log{e}^{2s\delta_{t,i}}+\lim_{n\longrightarrow\infty}\frac{1}{2}\frac{1}{n+1}\sum_{t=0}^n\sum_{i=1}^p\log(-\frac{s}{\pi})=\frac{1}{2}\sum_{i=1}^p\log\frac{1}{2\pi{e}\delta_{\infty,i}}\nonumber\\
&\Longrightarrow\lim_{n\longrightarrow\infty}\frac{1}{2}\frac{1}{n+1}\sum_{t=0}^n\sum_{i=1}^p\log{e}^{2s\delta_{t,i}}(-\frac{s}{\pi})=\frac{1}{2}\sum_{i=1}^p\log\frac{1}{2\pi{e}\delta_{\infty,i}}\nonumber\\
&\Longrightarrow\frac{1}{2}\sum_{i=1}^p\log{e}^{2s\delta_{\infty,i}}(-\frac{s}{\pi})=\frac{1}{2}\sum_{i=1}^p\log\frac{1}{2\pi{e}\delta_{\infty,i}}\Longrightarrow{e}^{2s\delta_{\infty,i}}(-\frac{s}{\pi})=\frac{1}{2\pi{e}\delta_{\infty,i}}\Longrightarrow{s}=-\frac{1}{2\delta_{\infty,i}}\nonumber
\end{align}
where $\delta_{\infty,i}=\{\xi_\infty,\lambda_{\infty,i}\}$. Now, if $\delta_{\infty,i}=\xi_\infty$ then ${s}=-\frac{1}{2\delta_{\infty,i}}$ and the nonanticipative RDF is bounded below by the following expression
\begin{align}
\lim_{n\longrightarrow\infty}R_{0,n}^{na,K^n,\tilde{K}^n}(D)\geq\lim_{n\longrightarrow\infty}\frac{1}{2}\frac{1}{n+1}\sum_{t=0}^n\sum_{i=1}^p\log\frac{\lambda_{t,i}}{\delta_{t,i}}=\frac{1}{2}\sum_{i=1}^p\log\frac{\lambda_{\infty,i}}{\delta_{\infty,i}}
\end{align}
which is the desired lower bound with $\sum_{i=1}^p\delta_{\infty,i}=D$. When $\delta_{\infty,i}=\lambda_{\infty,i}$, no encoding is performed and there is no need to prove a lower bound on $R_{0,n}^{na,K^n,\tilde{K}^n}(D)$. This completes the proof of (\ref{equa.13}).\\
\noi Next, we determine the expression of $\Lambda_{\infty}$. By definition, $\Lambda_{\infty}=\lim_{t\longrightarrow\infty}\Lambda_t$, where $\Lambda_t=cov\big(X_t-\mathbb{E}\big\{X_t|\sigma\{Y^{t-1}\}\big\}\big)$. Since $X_t-\mathbb{E}\big\{X_t|\sigma\{Y^{t-1}\}\big\}=CZ_t+NV_t-C\mathbb{E}\big\{Z_t|\sigma\{Y^{t-1}\}\big\}$ then $\Lambda_t=C\Sigma_tC^{tr}+NN^{tr}$. Let $\widehat{Z}_{t|t-1}=\mathbb{E}\big{\{}Z_t|\sigma\{{Y}^{t-1}\}\big{\}}$. Clearly, $\Sigma_t\tri\mathbb{E}\big\{\big(Z_t-\mathbb{E}\{Z_t|{Y}^{t-1}\}\big)\big(Z_t-\mathbb{E}\{Z_t|{Y}^{t-1}\}\big)^{tr}\big\}$. Moreover, $\Lambda_\infty=C\lim_{t\longrightarrow\infty}\Sigma_tC^{tr}+NN^{tr}$. Therefore, to determine $\Sigma_\infty\tri\lim_{t\longrightarrow\infty}\Sigma_t$, we need the equation of the error $e_t\tri{Z}_t-\widehat{Z}_{t|t-1}$, hence the equation of the least-squares filter of $Z_t$ given all the previous outputs $Y^{t-1}$, namely $\widehat{Z}_{t|t-1}$. From Fig.~\ref{communication_system}, we deduce that ${Y}_t=\tilde{K}_t+\widehat{X}_{t|t-1}$, where $\{\widehat{X}_{t|t-1}:~t\in\mathbb{N}\}$ is obtained from the modified Kalman filter as follows. Recall that
\begin{align*}
{Y}_t=\tilde{K}_t+\widehat{X}_{t|t-1}&=E_\infty^{tr}H_{\infty}E_{\infty}(X_t-\widehat{X}_{t|t-1})+E_\infty^{tr}{\cal B}_{\infty}V^c_t+\widehat{X}_{t|t-1}\nonumber\\
&=E_\infty^{tr}H_{\infty}E_{\infty}(CZ_t+NV_t-C\widehat{Z}_{t|t-1})+E_\infty^{tr}{\cal B}_{\infty}V^c_t+C\widehat{Z}_{t|t-1}\nonumber\\
&=E_\infty^{tr}H_{\infty}E_{\infty}C(Z_t-\widehat{Z}_{t|t-1})+C\widehat{Z}_{t|t-1}+E_\infty^{tr}H_{\infty}E_{\infty}N V_t+E_{\infty}^{tr}{\cal B}_{\infty}V^c_t\nonumber
\end{align*}
where $\{V_t:~t\in\mathbb{N}\}$ and $\{V^c_t:~t\in\mathbb{N}\}$ are independent Gaussian vectors. Then $\widehat{Z}_{t|t-1}$ is given by the modified Kalman filter \cite{caines1988} (\ref{10}),~(\ref{11}). Notice that the filter is ergodic with initial condition $\hat{Z}_{0|-1} =\mathbb{E}\{ Z_0|Y^{-1}\}$ and $\Sigma_0$ the covariance of $Z_0- \hat{Z}_{0|-1}$ which is Gaussian $N(0,\Sigma_\infty)$.\qed

\bibliographystyle{IEEEtran}
\bibliography{photis_references_nonanticipative_RDF,photis_references_DI_properties}

\end{document}